\newtheorem{theorem}{Theorem}
\newtheorem{proposition}[theorem]{Proposition}
\newtheorem{lemma}[theorem]{Lemma}
\begin{document}
\date{}

\title{Cobalt: BFT Governance in Open Networks}

\author{Ethan MacBrough}
\date{Ripple Research \\
      \texttt{emacbrough@ripple.com} \\ \vspace*{0.5em}
      \today}

\maketitle

\begin{abstract}
	We present Cobalt, a novel atomic broadcast algorithm that works in networks with non-uniform trust and no global agreement on participants, and is probabilistically guaranteed to make forward progress even in the presence of maximal faults and arbitrary asynchrony. The exact properties that Cobalt satisfies makes it particularly applicable to designing an efficient decentralized ``voting network" that allows a public, open-entry group of nodes to agree on changes to some shared set of rules in a fair and consistent manner while tolerating some trusted nodes and arbitrarily many untrusted nodes behaving maliciously. We also define a new set of properties which must be satisfied by any safe decentralized governance algorithm, and all of which Cobalt satisfies.
\end{abstract}

\sloppy

\section{Introduction}

With the recent explosion in popularity of decentralized digital currencies, it is becoming more imperative than ever to have algorithms that are fast, efficient, easy to run, and quantifiably safe. These digital currencies typically rely on some ``consensus" mechanism to ensure that everyone has a consistent record of which transactions occurred, to prevent malicious actors from sending the same money to two different honest actors (referred to as ``double spending"). More traditional digital currencies that rely on proof-of-work consensus \citep{nakamoto2012bitcoin}, such as Bitcoin and Ethereum, struggle to obtain low transaction times and high throughput, with theoretical results showing that proper scaling is impossible without fundamental changes to these protocols \citep{Croman2016}. Meanwhile, XRP has since its inception been both relatively fast and scalable \citep{TravisXRP}. Rejecting such proof-of-work algorithms, XRP uses a consensus algorithm in the sense of research literature \citep{schwartz2014ripple}, where a group of nodes collaborates to agree on an ordering of transactions in the face of arbitrary asynchrony and some tolerated number of arbitrarily behaving parties. It has long been known that such consensus protocols can be made very efficient \citep{Castro1999}.

For XRP the concern is thus less about how to improve the efficiency of the protocol, and more about how to enable easy ``decentralization". Traditional consensus algorithms assume a complete network where all nodes agree on who is participating in consensus. However, in a real scenario where a consensus network is run by actually independent parties with their own beliefs, regulations, and motivations, it would be effectively impossible to guarantee that everyone agrees on the same network participants. Further, trying to make such a system amenable to open participation would immediately open the door to a Sybil attack \citep{Douceur2002} wherein a single entity gains control of a substantial fraction of the network and wreaks havoc. Thus these classical consensus algorithms are a poor choice for use in a decentralized network.

The XRP Ledger Consensus Protocol (XRP LCP) resolves this issue by allowing partial disagreement on the participants in the network while still guaranteeing that all nodes come to agreement on the ledger state. The set of participants that a node considers in the network is referred to as that node's \textbf{unique node list} or \textbf{UNL}. In this setting the consistency of the network state is guaranteed by an overlap formula that prescribes a lower bound for the intersection of any two correct nodes' UNLs. As described in the original whitepaper \citep{schwartz2014ripple}, this lower bound was originally thought to be roughly $20\%$ of the UNL size. An independent response paper \citep{Armknecht2015} later suggested that the true bound was roughly $>40\%$ of the UNL size. Unfortunately, both of these bounds turned out to be naive, and in a sister paper to this paper \citep{Brad} Chase and MacBrough prove that the correct bound is actually roughly $>90\%$. Although this bound allows some variation, we would prefer a bound somewhat closer to the original expectation, to allow as much flexibility as possible. Chase and MacBrough also show that when there is not universal agreement on the participants, it is possible for the network to get ``stuck" even with $99\%$ UNL agreement and no faulty nodes, so that no forward progress can ever be made without manual intervention.

To solve these issues, this paper proposes a new consensus protocol called Cobalt, which can be used to power decentralized digital currencies such as XRP. Cobalt reduces the overlap bound to only $>60\%$, which gives much more flexibility to support painless decentralization without the fear of coming to an inconsistent ledger state. Further, unlike the previous algorithm, Cobalt cannot get stuck when the overlap bound is satisfied between every pair of honest nodes.

Another advantageous property of Cobalt is that the overlap condition for consistency is local. This means two nodes that have sufficient overlap with each other cannot arrive at inconsistent ledger states, regardless of the overlaps between other pairs of nodes. This property makes it much easier to analyze whether the network is in a safe condition. For a network that can potentially be (mis-)configured by humans, it is very important to be able to easily recognize when the network unsafe.

Further, Cobalt always makes forward progress fully asynchronously. Similar to the well-known consensus algorithm PBFT \citep{Castro1999}, the previous algorithm, XRP LCP, required assuming a form of ``weak asynchrony" where throughput could be dropped to $0$ by slightly-higher-than-expected delays or a few faulty nodes. But in practice, it is difficult to quantify what level of delay is ``expected" in a decentralized open setting, where nodes can be in arbitrary locations around the globe and have arbitrarily poor communication speed. With Cobalt however, performance simply degrades smoothly as the average message delay increases, even with the maximal number of tolerated faulty nodes and an actively adversarial network scheduler. In a live network, breaking forward progress could do a lot of damage to businesses that rely on being able to execute transactions on time, so this extra property is very valuable.

Decentralization is important primarily for two reasons: first, it gives redundancy, which protects against individual node failures and gives much higher uptime; second, it gives adaptability, so that even in the face of changing human legislation, the network can conform to those changes without needing a trusted third party that can exert singular control over the network. One of the core insights of Cobalt is that these two properties of decentralization can be separated to give better efficiency while maintaining redundancy and adaptability. Like many other decentralized consensus mechanisms, Cobalt performs relatively slowly when used as a consensus mechanism for validating transactions directly. Thus instead of using Cobalt for transactions directly, we only use it for proposing changes to the system (``amendments" in the XRP Ledger terminology). Meanwhile a separate network with universal agreement on its participants can run a faster consensus mechanism to agree on a total ordering for the transactions. Changes to the members of this ``transaction network" are executed as amendments through Cobalt. In this setup, the transaction network running a fast consensus algorithm gives both speed and redundancy, while the governance layer running Cobalt gives adaptability.

Using Cobalt together with a fast, robust transaction processing algorithm like Aardvark \citep{Clement2009} or Honeybadger \citep{Miller2016} gives all the same benefits of full decentralization while vastly improving the optimal efficiency. Further, in appendix~\ref{sectionAppendix} we present a simple protocol addition that enables the security requirements of the transaction processing algorithm to be reduced to the security requirements of Cobalt; thus even if every single transaction processing node fails, as long as the consistency requirements of Cobalt are met then every node will continue to agree on the ledger state. Other ideas for using using a decentralized algorithm to delegate a consensus group such as dBFT \citep{NEO} do not share this property, and instead require additional assumptions about the delegated group to guarantee consistency, weakening the system's overall security. The proposed addition adds only a slight latency overhead to the transaction processing algorithm.

We stress that this does not reduce the benefits of decentralization, as the transaction processing nodes only have the role of ordering transactions. Cobalt nodes still validate transactions on their own, are guaranteed to still accept the same transactions, and since client transactions are broadcast over the peer-to-peer network, the transaction processing nodes cannot even censor transactions since the Cobalt nodes could identify this behavior and eventually elect a new group of transaction processing nodes that don't censor transactions. Delegating the job of ordering transactions to a dedicated group is purely an optimization, and does not harm the robustness of the network in any way.

In section~\ref{sectionModel} we describe our network model and the problem we're trying to solve. In section~\ref{sectionOtherWork} we summarize the existing results in the area and justify the need for a new protocol. In section~\ref{sectionProtocol} we present the details of the Cobalt algorithm and prove that it satisfies all the properties we require of it. In appendix~\ref{sectionAppendix} we describe an extension that can be used to reduce the security requirements of other consensus algorithms to Cobalt's security requirements, and in appendix~\ref{sectionAppendix2} we include an extra proposition which shows that Cobalt is actually reasonably efficient, but which doesn't fit into the flow of the rest of the paper.

\section{Network Model and Problem Definition}\label{sectionModel}

Let $\mathcal{P}$ be the set of all nodes in the network. An individual node in $\mathcal{P}$ is referred to as $\mathcal{P}_i$, where $i$ is some unique identifier, such as a cryptographic public key. We do not assume all parties (or any party) know the identities of every node in $\mathcal{P}$, nor even the size of $\mathcal{P}$. We assume that every pair of nodes has a reliable authenticated communication channel between them. This can be implemented in a reasonable way by using a peer-to-peer overlay network and cryptographically signing messages. Clearly, nodes cannot be made to respond to requests from arbitrary parties, since this immediately opens up an avenue for distributed denial of service attacks \citep{Zargar2013}. We assume however that any node has some way of making requests of any every other node if it is willing to ``put in some effort". For instance, nodes might charge a modest fee or require some proof-of-work to respond to a request from an untrusted node. This makes DDOSing the network infeasible while allowing untrusted nodes to make requests of other nodes.

A node that is not crashed and behaves exactly according to the protocol defined in section~\ref{sectionProtocol} is said to be \textbf{correct}. Any node that is not correct is \textbf{Byzantine}. Byzantine behavior can include not responding to messages, sending incorrect messages, and even sending different messages to different parties. Note that in the original analysis of XRP LCP \citep{schwartz2014ripple}, it was assumed that Byzantine nodes cannot send different messages to different nodes, since it was implicitly assumed that in a peer-to-peer network such behavior would be easily identifiable. However, in our subsequent re-analysis \citep{Brad} we dispensed with this assumption, since a network partition could potentially allow irreversible damage to be done before such behavior is correctly identified. Not making this assumption is canonical in the research literature on consensus algorithms \citep{Lamport1982}, so we do not make it here either.

We further make the following nonstandard definition: a node is \textbf{actively Byzantine} if it sends some message to another node that it would not have sent had it been correct. A node can be Byzantine without being actively Byzantine; for example, a node that crashes is Byzantine but not actively Byzantine. A node which is not actively Byzantine is \textbf{honest}.

Every node $\mathcal{P}_i$ has a \textbf{unique node list} or \textbf{UNL}, denoted $\mathsf{UNL}_i$. A node's UNL is thought of as the set of nodes that it partially trusts and listens to for making decisions. $\mathsf{UNL}_i$ may or may not include $\mathcal{P}_i$ itself. The UNLs give structure to the network and allow a layered notion of trust, where a node that is present in more UNLs is implicitly considered more trustworthy and is more influential. We sometimes say that $\mathcal{P}_j$ \textbf{listens to} $\mathcal{P}_i$ if $\mathcal{P}_i\in\mathsf{UNL}_j$.

For most of the Cobalt protocol, we further assume that every honest node only has a single communication function, called \textbf{broadcast}. The statement that ``$\mathcal{P}_i$ broadcasts the message $M$" means $\mathcal{P}_i$ sends $M$ to every node that listens to $\mathcal{P}_i$. While not strictly necessary, this assumption makes the protocol analysis slightly simpler and is powerful enough on its own to develop the Cobalt protocol. The only exception to this rule is in section~\ref{sectionProtocol-Coin} for distributing threshold shares, which requires sending different messages to different nodes.

We also require that if an honest node broadcasts $\mathcal{P}_i$ a message $M$, then even if $\mathcal{P}_i$ crashes or otherwise behaves incorrectly in any way, it eventually sends $M$ to every node that listens to it, or else no node receives $M$ from $\mathcal{P}_i$. This is reasonable from an implementation standpoint if messages are routed over a peer-to-peer network: as long as a node doesn't send contradictory messages, a message sent to one party should eventually be received by all listening parties. We note that this requirement is needed only for guaranteeing liveness, not consistency.


We define the \textbf{extended UNL} $\mathsf{UNL}_i^{\infty}$ to be the ``closure" of $\mathcal{P}_i$'s UNL, which recursively contains the set of nodes in the UNL of any honest node in $\mathsf{UNL}_i^{\infty}$. Formally, this is defined inductively by defining $\mathsf{UNL}_i^{1}=\mathsf{UNL}_i$ and then defining $\mathsf{UNL}_i^{n}$ to be the set of all nodes in the UNL of any honest node in $\mathsf{UNL}_i^{n-1}$. We then define the extended UNL of $\mathcal{P}_i$ to be the set $\mathsf{UNL}_i^{\infty}=\bigcup_{n\in\mathbb{N}}\mathsf{UNL}_i^{n}$. Intuitively, a node's extended UNL represents the entire network from the perspective of $\mathcal{P}_i$; any node that could possibly have an effect on $\mathcal{P}_i$ either directly or indirectly is in $\mathsf{UNL}_i^{\infty}$.

A node $\mathcal{P}_i$ also maintains a set of \textbf{essential subsets}, denoted $\mathsf{ES}_i$, where $\mathsf{UNL}_i=\bigcup_{E\in\mathsf{ES}_i} E$. Intuitively, whereas a node's UNL is the set of all nodes that it listens to for making decisions, its essential subsets refine how it makes decisions based on the messages it receives from those nodes. The original XRP Ledger consensus algorithm had no notion of essential subsets, and instead used a predefined ``quorum" $q_i$ defining how many nodes in $\mathsf{UNL}_i$ $\mathcal{P}_i$ needs to hear from to make a decision. The direct analogue of this model would loosely be to let $\mathsf{ES}_i$ be the set of all subsets of $\mathsf{UNL}_i$ of size at least $3(n_i-q_i)+1$. It follows immediately from proposition~\ref{propDABCConsistency} that using this model with $80\%$ quorums as suggested in the XRP whitepaper, Cobalt guarantees consistency for all nodes with roughly $>60\%$ pairwise UNL overlaps.

Despite the fact that the original UNL formalism can be transferred to the essential subset model, in our model we consider the essential subsets as central and the UNL as more or less incidental. We expect a node's UNL to typically be derived automatically from its essential subsets rather than the other way around, and it is used only for bookkeeping and making some results about the algorithm easier to express.

If $S\in\mathsf{ES}_i$ for some node $\mathcal{P}_i$, we define $n_S=|S|$ and define two additional parameters, $t_S$ and $q_S$. These latter two parameters must always satisfy the following inequalities:
\begin{align}\label{eqTandQ0}
0\leqslant t_S,q_S\leqslant n_S
\end{align}
\begin{align}\label{eqTandQ1}
	t_S<2q_S-n_S.
\end{align}
\begin{align}\label{eqTandQ2}
2t_S<q_S.
\end{align}
Effectively, $t_S$ represents the maximum allowed number of actively Byzantine nodes in $S$ required for guaranteeing safety while $q_S$ represents the number of correct nodes in $S$ required for guaranteeing liveness. $q_S$ and $t_S$ can be specified by node operators individually for each $S$ as a configuration parameter; however, if two essential subsets contain the same nodes but different values of $t_S$ or $q_S$, we consider them to be distinct essential subsets. Equation~\ref{eqTandQ0} is just parameter sanity; equation~\ref{eqTandQ1} enforces that unless more than $t_S$ nodes in $S$ are actively Byzantine, then any two subsets of $q_S$ nodes must intersect in some honest node, which is used to guarantee consistency; without equation~\ref{eqTandQ2}, forward progress cannot be guaranteed to hold for any node listening to $S$ even when every single node is correct. Note that if $n_S\geqslant 3t_S+1$ and $q_S=n_S-t_S$, then all of these inequalities hold.

We make no implicit assumptions about the actual number of faulty nodes in any given essential subset $S$, nor about the total number of faulty nodes in the network. Nor do we implicitly assume any common structure to the arrangement of the essential subsets between nodes. Instead, we will explicitly show which assumptions about the allowed Byzantine nodes and the allowed essential subset configurations are needed to guarantee each result. Doing this is useful because it turns out that certain properties like consistency require much weaker assumptions than other properties like liveness. In particular, we will show that consistency is actually a ``local" property, which makes it very easy to analyze when consistency holds, and if the stronger assumptions required for liveness are ever violated, the network can at least eventually reconfigure itself to a new live configuration without having ever become inconsistent.

We call the problem we would like to solve \textbf{democratic atomic broadcast}, or \textbf{DABC}. DABC formalizes exactly the properties that are needed to implement a decentralized ``governance layer" that can be used to agree in a fair and safe way on a set of protocol rules that evolves over time.

Formally, a protocol that solves DABC allows an arbitrary (but finite) number of \textbf{proposers} -- whose identities may be unknown in advance or not universally agreed upon, and an arbitrary number of which can be Byzantine -- to broadcast \textbf{amendments} to the network. Each node can choose to either \textbf{support} or \textbf{oppose} each amendment it receives, and then each node over time \textbf{ratifies} some of those amendments and assigns each ratified amendment an \textbf{activation time}, according to the following properties:
\begin{itemize}
	\item DABC-Agreement: If any correct node ratifies an amendment $A$ and assigns it the activation time $\tau$, then eventually every other correct node also ratifies $A$ and assigns it the activation time $\tau$.
	\item DABC-Linearizability: If any correct node ratifies an amendment $A$ before ratifying some other amendment $A'$, then every other correct node ratifies $A$ before $A'$.
	\item DABC-Democracy: If any correct node ratifies an amendment $A$, then for every correct node $\mathcal{P}_i$ there exists some essential subset $S\in \mathsf{ES}_i$ such that the majority of all honest nodes in $S$ supported $A$, and further supported $A$ being ratified \textit{in the context of} all the amendments ratified before $A$.
	\item DABC-Liveness: If all correct nodes support some unratified amendment $A$, then eventually \textit{some} new amendment will be ratified.
	\item DABC-Full-Knowledge: For every time $\tau$, a correct node can run a ``waiting protocol" which always terminates in a finite amount of time, and afterwards know every amendment that will ever be ratified with an activation time before $\tau$.
\end{itemize}

We will expand on these properties in section~\ref{sectionProtocol-DABCDef} with the appropriate network conditions required for each individual property to hold. Although Agreement and Linearizability are clear and familiar from traditional atomic broadcast definitions, some explanation may be needed for the remaining three properties.

Democracy formalizes the idea that any amendment should be supported by a reasonable portion of the network. One might hope that Democracy could be strengthened to require that the majority of correct nodes in \textit{all} of $\mathcal{P}_i$'s essential subsets must have supported $A$. Unfortunately, since we don't assume universal agreement on participants, it might not be possible for a node to wait until it knows that every essential subset of every correct node has sufficient support for $A$, since there might be essential subsets that the node doesn't know about. The Democracy condition we do use seems like a reasonable compromise, and additionally it implicitly weights a node's voting power by the number of nodes that trust it. For example, if some essential subset is maintained by every single node then that subset alone could potentially pass amendments, whereas a subset only maintained by a few nodes would need to work together with other subsets to pass amendments. The stronger Democracy property \textit{does} hold in complete networks.

Most atomic broadcast algorithms use a ``Validity" or ``Censorship Resilience" property in place of Liveness that ensures a correct proposer (or client in usual terminology) will eventually have its amendment (transaction) ratified (accepted). Unfortunately, this doesn't work in our case since not every proposer may be able to broadcast its transaction to the entire network, and further an amendment might become invalid if a contradictory amendment is ratified before it. The latter issue could be solved by removing invalidated amendments \textit{post-facto}, but doing so would be unnecessarily inefficient with our protocol. Instead we use Liveness, which is equivalent to these stronger properties as long as the proposer can broadcast $A$ throughout the network and no amendments which contradict $A$ are ratified first.

For plain transaction processing, Agreement and Linearizability are the only properties needed by an atomic broadcast algorithm to guarantee consistency. Amendment processing adds a further layer of complexity though: nodes need to start acting according to the specifications of a ratified amendment at some point. Very subtle and difficult to detect bugs could surface if two nodes are running different versions of a protocol due to asynchronous knowledge of the set of ratified amendments. We rectify this issue by guaranteeing Full Knowledge, which gives nodes a way to always synchronize their active amendments. Note though that for a globally distributed network, synchronized clocks can't be assumed to exist, so each protocol built on top of a Cobalt network should first run consensus to agree on a starting time. Then every Cobalt node can agree on exactly which version of the protocol to run. This is done for example in the XRP Ledger, by agreeing on a ``ledger close time" for each block, which can be used as a starting time for the consensus protocol that agrees on the next block.

To model correctness of the algorithm, we consider a \textbf{network adversary} that is allowed to behave arbitrarily. The network adversary controls delivery of all messages as well as all Byzantine nodes. The only restrictions we make on the adversary is that it cannot break commonly accepted cryptographic protocols and eventually delivers every message sent between correct parties.

Due to the FLP result \citep{Fischer1985IDC}, a consensus algorithm (and in particular a DABC algorithm, which is a special type of consensus) cannot be guaranteed to make forward progress in the presence of arbitrary asynchrony. Thus the established convention is to ensure that consistency holds even in the presence of arbitrary asynchrony, but weaken the liveness property somehow. Two common variants are to assume liveness only holds during periods with stronger synchrony requirements \citep{Castro1999} \citep{Clement2009}, or to only make liveness hold eventually with probability $1$  \citep{BenOr1983} \citep{Bracha1984} \citep{Cachin2001} \citep{Miller2016}.

The former technique seems unsuitable for a wide-area network whose success is critical. Regardless of the heuristic likelihood of an attack breaking liveness for an extended period of time, it would be best to be mathematically confident that such an attack is infeasible. Thus we opt for the latter option for Cobalt. Although older randomness-based consensus protocols use local random values to guarantee termination, these protocols are highly inefficient in practice, requiring either exponential expected time to terminate, or asymptotically fewer tolerated faults. Newer protocols starting with \citep{Cachin2001} typically use a ``cryptographic common coin" that uses threshold signatures to generate a common random seed that cannot be predicted in advance by a computationally bounded adversary. Cryptographic common coins are very efficient, but do not immediately extend to the open network model, where the notion of a ``threshold" is undefined. We thus begin section~\ref{sectionProtocol-Coin} with defining and implementing a suitable adaptation to our model which is almost as efficient and suitably powerful to develop Cobalt.

\section{Other Work}\label{sectionOtherWork}

In complete networks where all nodes trust each other equally, there has been much research on Byzantine fault tolerant consensus algorithms, both weakly asynchronous ones and fully asynchronous ones. Notable examples include PBFT \citep{Castro1999}, SINTRA \citep{Cachin2001}, Aardvark \citep{Clement2009}, and more recently Honeybadger \citep{Miller2016}. Most of these algorithms can be made democratic using a similar democratic modification of reliable broadcast as the one presented in section~\ref{sectionProtocol-RBC}.

PBFT and Aardvark are both very fast and seem to have basic adaptations to our model, although the view change protocol requires some modification since the cryptography it uses is not fully expressive in our model (for an idea of how these changes might look, see appendix~\ref{sectionAppendix} where we develop a "view change" protocol that works in our model). However, leader-based algorithms like PBFT and Aardvark require agreement on a set of possible leaders, and if all of these leaders were to fail at once there would obviously be no way to guarantee forward progress, so these algorithms require stronger network assumptions than Cobalt. Additionally, neither of these protocols is guaranteed to make forward progress fully asynchronously, which makes them satisfy weaker properties than Cobalt. The protocol extension presented in appendix~\ref{sectionAppendix} though is loosely modeled after a simplified form of PBFT; to avoid the previously mentioned issue of needing an extra security assumption, we use Cobalt to agree on the set of possible leaders so that even if every leader fails at once eventually Cobalt can find new leaders to suggest transactions.

Meanwhile, adapting asynchronous leaderless algorithms like SINTRA and Honeybadger presents another difficulty in our model since we can't assume any specific number of honest nodes are capable of reliably broadcasting, so the reduction to asynchronous common subset used in these algorithms doesn't work. Adapting SINTRA seems especially difficult because of its significant use of threshold cryptography, for which it's not clear what an adaptation to the open model would even look like.

Alchieri et al. \citep{Alchieri2008} designed an early attempt to weaken the complete-network restrictions of classical algorithms, resulting in a Byzantine consensus algorithm that works when not all nodes know the identities of all the participants. However, in their model every node is still trusted equally, so trying to use their algorithm in an open network would immediately allow for a single entity to gain unreasonable control over the network, commonly known as a Sybil attack \citep{Douceur2002}.

Schwartz et al. developed an algorithm that works in a similar model to ours \citep{schwartz2014ripple}. It guarantees safety based on ``overlap conditions" that require that every pair of nodes trust enough nodes in common. Unfortunately, Chase and MacBrough later showed that the real safety condition is much tighter than originally thought, and further the algorithm can get stuck in certain networks where two UNLs disagree only by a single node \citep{Brad}. Further, safety is a global condition: if two nodes have sufficient overlap with each other but some other nodes don't have sufficient overlaps, then those two nodes might end up in inconsistent states anyway. This is problematic both from a usability perspective (checking safety requires checking $n^2$ overlaps rather than $n$ overlaps) and from a pragmatic perspective (my safety should not depend on the bad decisions of other nodes). Schwartz's protocol is also only weakly asynchronous, and is also not ``robust" in the sense that a small number of Byzantine nodes can prevent the protocol from ever terminating. In a live network where businesses depend on forward progress, this could be a serious problem.

More recently, Mazi\`eres described a novel protocol for solving consensus in incomplete networks \citep{Mazieres2015}. Mazi\`eres uses a network model which is similar to ours\footnote{In particular, the ``quorum slices" of Mazi\`eres's paper appear very similar to our definition of ``essential subsets". However, the way in which Mazi\`eres's algorithm uses quorum slices to determine support is different from the way Cobalt uses essential subsets: in fact, the ``quorum slices" in our model would be actually be all the sets of nodes in $\mathsf{UNL}_i$ whose intersection with every essential subset $S\in\mathsf{ES}_i$ has size at least $q_S$.} and enables very loosely-coupled network topologies to remain consistent by utilizing trust-transitivity to dynamically expand the set of nodes listened to for making decisions.

However, the concrete condition for safety is again a global condition, and seems very difficult to analyze in practice. Although the author provides a way to decide if a given Byzantine fault configuration is safe for a given topology, the condition is difficult to check in networks where each node has many quorum slices, and further there is no obvious way to input a topology and get a clear metric of how tolerant it is to Byzantine faults. This could lead to building up under-analyzed, frail topologies that seem safe but spontaneously break as soon as a single Byzantine node starts behaving dishonestly. Mazi\`eres justifies the safety of the system by comparing it to the Internet, which is a robust system that similarly takes advantage of transitive connections. In practice though, the Internet suffers transient failures due to accidental misconfigurations relatively frequently \citep{Mahajan2002}. This is not a serious problem for the Internet since it can only fail by temporarily losing connectivity; in contrast, a consensus network cannot be repaired after forking without potentially stealing money from honest actors. We therefore prefer an algorithm that is more restrictive but easier to analyze clearly; and regardless, if a node desires the greater flexibility of Mazi\`eres' protocol, then it can transitively add its peers' essential subsets out-of-protocol and get the same exact benefits. Finally, Mazi\`eres' protocol is again only weakly asynchronous and not robust.

In an attempt to resolve the inefficiency of proof-of-work, many decentralized currencies are moving towards proof-of-stake, in which a node's ``mining power" is tied to the amount of funds it locks up as collateral \citep{2017arXiv171009437B}. Although traditional proof-of-stake algorithms only guarantee asymptotic consensus and so are not applicable to our problem definition (in particular their safety depends on synchrony assumptions), another interesting avenue is to use a proof-of-stake algorithm to give nodes weighted voting power and develop a distributed consensus algorithm that is safe as long as enough of the total weighted voting power belongs to honest nodes. This idea is explored in Kwon's Tendermint protocol \citep{Kwon2014}. These protocols make decentralization easy because there is no fear of becoming inconsistent due to a misconfiguration, while avoiding Sybil attacks by tying voting power to a limited resource.

Tendermint is again not robust and requires weak asynchrony, but it seems likely that a fully asynchronous algorithm like SINTRA or Honeybadger could be adapted to this setting. However, assuming the system uses hierarchical threshold secrets in the sense proposed by Shamir \citep{Shamir1979} for instantiating common coins, then making the set of possible voting power weights even moderately fine would rapidly degrade the performance of the system, until just reconstructing a single coin value might take minutes to compute, regardless of how many participants the network has. Further, Tendermint-like protocols require listening to every node in the network, which quickly becomes inefficient in very large networks, and is only made worse when trying to adapt to full asynchrony, which typically requires $\Omega(n^3)$ messages to be exchanged to reach consensus.

Another issue is that stake in a system's success is not necessarily correlated with understanding how best to improve the system. For verifying transactions -- the use case Tendermint was designed for -- it is easy to justify tying authority to stake, since the behavior that best benefits the system is obvious and undebatable: simply run the protocol exactly as specified. For application to a governance system however, it is entirely possible for actors with good intentions to make poor decisions about how the system should operate. By allowing participants to explicitly delegate who they believe to be trustworthy, Cobalt can give authority to those who are best at making good decisions for the future of the network, rather than those who are simply incentivized against attacking the network.

Perhaps most importantly though, using proof-of-stake for determining voting power would be a poor decision for the XRP Ledger, since at the time of writing this paper, Ripple the company owns a majority of the XRP in existence, putting a dangerous amount of authority in a single location. Although Ripple is highly incentivized not to abuse this power since a loss of faith in XRP could render Ripple's XRP holdings worthless, if nothing else this gives hackers a single point of entry with which they could take over the entire network due to a careless human error.

\section{The Cobalt Protocol}\label{sectionProtocol}

In this section we describe the details of Cobalt, a protocol that solves democratic atomic broadcast in the open network model presented in section~\ref{sectionModel}. Before describing the full Cobalt protocol, we first detail certain lower level primitives that are used as part of the Cobalt algorithm. Although most of these primitives are familiar tools in the complete network model, to the author's knowledge no one else has adapted these primitives to fit our model, so we present novel instantiations of them. Since none of these protocols have been presented in our network model before, we prove by hand that every protocol is correct.

In all proofs, we make no implicit assumptions about the network connectivity or the number of Byzantine faults controlled by the adversary. If we need to assume some network connectivity or limitation on the tolerated Byzantine faults, we will state that assumption in the proposition.

Before delving into the protocols, we first develop some definitions and describe two mechanics that we use repeatedly in our protocols. These two mechanics underlie most of the basic techniques for developing consensus protocols in the complete network model, so adapting them to our model will allow us to easily adapt protocols for two of our lower level primitives, reliable broadcast and binary agreement.

Two nodes $\mathcal{P}_i$ and $\mathcal{P}_j$ are said to be \textbf{linked} if there is some essential subset $S\in\mathsf{ES}_i\cap\mathsf{ES}_j$ such that fewer than $t_S$ nodes in $S$ are actively Byzantine faulty. We say some property is \textbf{local} if the property holds between two nodes iff those two nodes are linked, regardless of whether any other nodes in the network are linked. Local properties are nice because they ensure that poorly configured nodes cannot harm correctly configured nodes. We will later prove that consistency is a local property, which we stress is very important for making the network topology easy to analyze. To the author's knowledge, Cobalt is the first incomplete network consensus algorithm for which consistency is a local property; for instance, locality does not hold for either the original XRP Ledger Consensus Protocol \citep{schwartz2014ripple} nor the protocol of Mazi\`eres \citep{Mazieres2015}.

Similarly, two nodes $\mathcal{P}_i$ and $\mathcal{P}_j$ are \textbf{fully linked} if there is some essential subset $S\in\mathsf{ES}_i\cap\mathsf{ES}_j$ such that at least $q_S$ nodes in $S$ are correct, at most $t_S$ nodes in $S$ are actively Byzantine faulty, and $t_S\leqslant n_S-q_S$. Note that if $n_S-q_S$ is greater than $t_S$, then we still allow $n_S-q_S$ nodes to be faulty, as long as they are not actively Byzantine (e.g., they can be crashed). Also note that full linkage implies linkage. While linkage is important for consistency, full linkage is important for forward progress.

A node $\mathcal{P}_i$ is \textbf{healthy} if it is honest and at most $\min\{t_S,n_S-q_S\}$ nodes in each of its essential subsets $S\in\mathsf{ES}_i$ are \textit{not} healthy. This definition can be made non-cyclical by considering a sequence of sets $F_i$ starting with $F_0$ as the set of actively Byzantine nodes and $F_i$ the set of nodes with too many $F_{i-1}$ nodes in one of its essential subsets, then taking the unhealthy nodes to be the union across the $F_i$. Healthy nodes are exactly the nodes that cannot be made to accept and/or broadcast random messages at the suggestion of actively Byzantine nodes. $\mathcal{P}_i$ is \textbf{unblocked} if it is healthy and correct, and at most $\min\{t_S,n_S-q_S\}$ nodes in each of its essential subsets $S\in\mathsf{ES}_i$ are \textit{not} unblocked. Blocked nodes can be arbitrarily prevented from terminating by the Byzantine nodes.

A node $\mathcal{P}_i$ is \textbf{strongly connected} if every pair of healthy nodes in $\mathsf{UNL}_i^{\infty}$ are fully linked with each other. Strong connectivity represents the weakest equivalent of ``global full linkage": from $\mathcal{P}_i$'s perspective, everyone in the network is fully linked. With a bit of effort, nonlocal properties can usually still be salvaged as only requiring strong connectivity rather than actually requiring that every pair of correct nodes in the network be fully linked. This is still somewhat nicer than requiring global full linkage, as at least no poorly configured nodes \textit{that you don't know about} can harm you.

The final definition we need is \textbf{weak connectivity}. A node $\mathcal{P}_i$ is \textbf{weakly connected} if $\mathcal{P}_i$ is fully linked with every healthy node in $\mathsf{UNL}_i^{\infty}$. Weak connectivity is in general much easier to guarantee than strong connectivity, since it doesn't place any requirements on how other pairs of nodes are fully linked with each other. Note though that strong connectivity only technically implies weak connectivity for healthy nodes. Generally weak connectivity is needed to guarantee that the network ``treats you fairly" and doesn't come to decisions that seem wrong to you based on what you receive from your essential subsets.

The following two lemmas provide the fundamental basis underpinning our algorithms.

\begin{lemma}\label{lemmaSupportTransfer}
	Let $\mathcal{P}_i$ be any honest node, and let $\mathcal{P}_j$ be any correct node which is fully linked with $\mathcal{P}_i$. Then if $\mathcal{P}_i$ receives some message $M$ from $q_S$ nodes in every essential subset $S\in\mathsf{ES}_i$, then eventually $\mathcal{P}_j$ will receive $M$ from $t_S+1$ nodes in some essential subset $S\in\mathsf{ES}_j$.
\end{lemma}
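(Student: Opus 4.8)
The plan is to exploit the common essential subset guaranteed by full linkage together with the reliability of the broadcast primitive, and to do all the counting inside that one subset. Since $\mathcal{P}_i$ and $\mathcal{P}_j$ are fully linked, there is some $S\in\mathsf{ES}_i\cap\mathsf{ES}_j$ with at least $q_S$ correct nodes and at most $t_S$ actively Byzantine nodes. I would fix this $S$ once and for all; because $S$ lies in both $\mathsf{ES}_i$ and $\mathsf{ES}_j$, it is the natural bridge between the two nodes' views, and it is also the essential subset in which I will ultimately exhibit the $t_S+1$ senders.

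The first substantive step is a pigeonhole count. By hypothesis $\mathcal{P}_i$ receives $M$ from some set of $q_S$ nodes in $S$. Since at most $t_S$ nodes of $S$ are actively Byzantine, at least $q_S-t_S$ of these $q_S$ senders are honest. This is the crux of the argument: the honest senders are exactly the ones whose act of sending $M$ to $\mathcal{P}_i$ carries information that $\mathcal{P}_j$ can also be guaranteed to learn.

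The second step is to invoke the broadcast discipline to transfer those honest senders' messages to $\mathcal{P}_j$. Because honest nodes communicate only via broadcast, each honest sender that delivered $M$ to $\mathcal{P}_i$ did so as part of a broadcast of $M$; by the reliability assumption on broadcast, the fact that at least one node (namely $\mathcal{P}_i$) received $M$ from such a sender rules out the ``no node receives $M$'' alternative, so that sender eventually delivers $M$ to every node that listens to it. Since $S\in\mathsf{ES}_j$ forces $S\subseteq\mathsf{UNL}_j$, the node $\mathcal{P}_j$ listens to every node of $S$, and hence $\mathcal{P}_j$ eventually receives $M$ from each of these at least $q_S-t_S$ honest senders. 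Finally I would close the count with inequality~\ref{eqTandQ2}, namely $2t_S<q_S$, which yields $q_S-t_S>t_S$, i.e.\ $q_S-t_S\geqslant t_S+1$; thus $\mathcal{P}_j$ eventually receives $M$ from at least $t_S+1$ nodes of the essential subset $S\in\mathsf{ES}_j$, as required.

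The only delicate point is making the broadcast-reliability clause apply cleanly: I must argue that an honest sender cannot have delivered $M$ to $\mathcal{P}_i$ without thereby being obligated to deliver it to $\mathcal{P}_j$, which is precisely where the ``honest nodes broadcast only'' assumption and the ``all-or-nothing'' delivery guarantee do the work. It is also worth noting what is \emph{not} needed: the liveness slack $t_S\leqslant n_S-q_S$ from the full-linkage definition plays no role here, since the conclusion follows from the overlap inequality~\ref{eqTandQ2} alone, with full linkage used only to produce a shared $S$ whose actively Byzantine count is bounded by $t_S$.
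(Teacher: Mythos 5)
Your proof is correct, and its skeleton matches the paper's: fix the shared essential subset $S\in\mathsf{ES}_i\cap\mathsf{ES}_j$ provided by full linkage, observe that among the $q_S$ nodes of $S$ from which $\mathcal{P}_i$ received $M$ at most $t_S$ are actively Byzantine, and use the broadcast-only communication rule together with the all-or-nothing delivery guarantee to conclude that the remaining honest senders eventually deliver $M$ to $\mathcal{P}_j$ (your handling of this last step is in fact more explicit than the paper's one-line appeal to the broadcast assumption). Where you genuinely diverge is in how the count $q_S-t_S\geqslant t_S+1$ is closed. The paper uses equation~\ref{eqTandQ1}, $t_S<2q_S-n_S$, to deduce $q_S-t_S>n_S-q_S$, and then needs the clause $t_S\leqslant n_S-q_S$ from the definition of full linkage to finish. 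You instead invoke equation~\ref{eqTandQ2}, $2t_S<q_S$, which gives $q_S-t_S>t_S$ in one step. Both are legitimate, since all three parameter inequalities are standing assumptions on every essential subset, but your route is leaner and buys a slightly stronger statement: your argument needs only a shared subset $S$ with at most $t_S$ actively Byzantine members, a condition implied by plain linkage as well as by full linkage, whereas the paper's argument genuinely consumes the full-linkage slack $t_S\leqslant n_S-q_S$. Your closing remark that this slack plays no role is therefore accurate; the only quibble is terminological, since $2t_S<q_S$ is the paper's liveness constraint rather than its overlap/intersection constraint (that role belongs to equation~\ref{eqTandQ1}).
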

\begin{proof}
	Since $\mathcal{P}_i$ and $\mathcal{P}_j$ are fully linked, by definition there is some essential subset $S_{shared}\in\mathsf{ES}_i\cap\mathsf{ES}_j$. Thus if $\mathcal{P}_i$ receives some message $M$ from $q_S$ nodes in every essential subset $S\in\mathsf{ES}_i$, then in particular it receives $M$ from $q_{S_{shared}}$ nodes in $S_{shared}$. At most $t_{S_{shared}}$ of these nodes could have been actively Byzantine, so using equation~\ref{eqTandQ1},
	\begin{align*}
		q_{S_{shared}}-t_{S_{shared}}&>q_{S_{shared}}-(2q_{S_{shared}}-n_{S_{shared}}) \\
		&=n_{S_{shared}}-q_{S_{shared}} \\
		&\geqslant t_{S_{shared}},
	\end{align*}
	where the last inequality uses the definition of full linkage. Therefore at least $t_{S_{shared}}+1$ non-actively Byzantine nodes in $S_{shared}$ must have broadcast $M$. Since we assume that honest nodes can only communicate by sending the same message to everyone in that listens to them, these honest nodes must have also sent $M$ to $\mathcal{P}_j$, so eventually $\mathcal{P}_j$ will receive $M$ from $t_{S_{shared}}+1$ nodes in $S_{shared}\in\mathsf{ES}_j$.
\end{proof}

\begin{lemma}\label{lemmaSupportBlocking}
	Let $\mathcal{P}_i$ be any correct node, and let $\mathcal{P}_j$ be any correct node which is linked to $\mathcal{P}_i$. Then if $\mathcal{P}_i$ receives some message $M$ from $q_S$ nodes in every essential subset $S\in\mathsf{ES}_i$, then $\mathcal{P}_j$ cannot receive a message $M'$ that contradicts $M$ from $q_S$ nodes in every essential subset $S\in\mathsf{ES}_j$.
\end{lemma}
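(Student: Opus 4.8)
The plan is to argue by contradiction, using the quorum-intersection property guaranteed by equation~\ref{eqTandQ1} in the same spirit as lemma~\ref{lemmaSupportTransfer}, but now exploiting a single shared essential subset to force an honest node into contradictory behavior rather than to transfer support across the subset. So suppose toward a contradiction that $\mathcal{P}_j$ does receive a contradicting message $M'$ from $q_S$ nodes in every essential subset $S\in\mathsf{ES}_j$, while $\mathcal{P}_i$ receives $M$ from $q_S$ nodes in every $S\in\mathsf{ES}_i$.

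First I would extract the shared essential subset: since $\mathcal{P}_i$ and $\mathcal{P}_j$ are linked, by definition there is some $S_{shared}\in\mathsf{ES}_i\cap\mathsf{ES}_j$ in which fewer than $t_{S_{shared}}$ nodes are actively Byzantine. Restricting both hypotheses to this one subset, $\mathcal{P}_i$ receives $M$ from a set $A$ of $q_{S_{shared}}$ nodes in $S_{shared}$, and $\mathcal{P}_j$ receives $M'$ from a set $B$ of $q_{S_{shared}}$ nodes in $S_{shared}$. Next I would bound the overlap: since $A,B\subseteq S_{shared}$ and $|S_{shared}|=n_{S_{shared}}$, inclusion--exclusion gives $|A\cap B|\geqslant 2q_{S_{shared}}-n_{S_{shared}}$, and equation~\ref{eqTandQ1} makes this quantity strictly greater than $t_{S_{shared}}$. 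Because fewer than $t_{S_{shared}}$ nodes in $S_{shared}$ are actively Byzantine, the intersection $A\cap B$ cannot consist entirely of actively Byzantine nodes, so it contains at least one honest node.

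Finally I would invoke the broadcast assumption to close the contradiction. This honest node sent $M$ to $\mathcal{P}_i$ (as a member of $A$) and $M'$ to $\mathcal{P}_j$ (as a member of $B$). But an honest node communicates only by broadcast, sending identical content to every node that listens to it, and every node of $S_{shared}$ is listened to by both $\mathcal{P}_i$ and $\mathcal{P}_j$ since $S_{shared}\subseteq\mathsf{UNL}_i\cap\mathsf{UNL}_j$. Hence this node would have had to broadcast the two contradicting messages $M$ and $M'$, which no honest node does. This contradiction establishes the claim.

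The proof itself is structurally short, so I do not expect a genuine technical obstacle; the point requiring care is conceptual rather than computational. The crux is to notice that \emph{linkage alone} suffices: the entire argument runs inside the single shared subset $S_{shared}$ and uses only the bound on \emph{its} actively Byzantine count, never any global fault assumption or any property of other pairs of nodes. Keeping the argument confined this way is exactly what makes consistency a local property, and the delicate step is the clean invocation of the honesty/broadcast assumption to rule out an honest node sending contradictory messages to two different recipients.
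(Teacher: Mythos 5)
Your proof is correct and follows essentially the same route as the paper's: both extract the shared essential subset $S_{shared}$ guaranteed by linkage, apply equation~\ref{eqTandQ1} to get quorum intersection, and invoke the broadcast/honesty assumption to rule out an honest node sending both $M$ and $M'$. The only difference is presentational—you intersect the two quorums directly ($|A\cap B|\geqslant 2q_{S_{shared}}-n_{S_{shared}}>t_{S_{shared}}$) to exhibit an honest common sender, whereas the paper counts the honest $M$-senders (more than $n_{S_{shared}}-q_{S_{shared}}$ of them) and shows they leave fewer than $q_{S_{shared}}$ nodes available to send $M'$; these are equivalent rearrangements of the same argument.
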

\begin{proof}
	By definition of linkage, there must be some $S_{shared}\in\mathsf{ES}_i\cap\mathsf{ES}_j$ such that at most $t_{S_{shared}}$ nodes in $S_{shared}$ are actively Byzantine. By the same equations as in lemma~\ref{lemmaSupportTransfer} (minus the last inequality, which requires full linkage), if $\mathcal{P}_i$ receives $M$ from $q_{S_{shared}}$ nodes in $S_{shared}$ then more than $n_{S_{shared}}-q_{S_{shared}}$ honest nodes in $S_{shared}$ sent $M$. Since honest nodes cannot broadcast both $M'$ and $M$, fewer than $n_{S_{shared}}-(n_{S_{shared}}-q_{S_{shared}})=q_{S_{shared}}$ nodes in $S_{shared}$ can send $M'$ to $\mathcal{P}_j$.
\end{proof}

In light of the previous lemmas, we make two more definitions. A node $\mathcal{P}_i$ sees \textbf{strong support} for a message $M$ if $\mathcal{P}_i$ receives $M$ from $q_S$ nodes in every essential subset $S\in\mathsf{ES}_i$. Similarly, $\mathcal{P}_i$ sees \textbf{weak support} for a message $M$ if $\mathcal{P}_i$ receives $M$ from $t_S+1$ nodes in some essential subset $S\in\mathsf{ES}_i$.

Using these definitions, lemma~\ref{lemmaSupportTransfer} can be phrased as ``fully linked nodes have enough overlap to where if one node sees strong support then the other will eventually see weak support", and lemma~\ref{lemmaSupportBlocking} can be phrased as ``linked nodes have enough overlap to where they cannot simultaneously both see strong support for contradictory messages". It turns out that relating nodes in these two ways is enough to recover most of the techniques used in developing BFT algorithms from the complete network case, allowing us to easily adapt many algorithms to our model.

\subsection{Cryptographic Randomness}\label{sectionProtocol-Coin}

Before we can define the Cobalt protocol, one remaining piece needs to be developed. As mentioned at the end of section~\ref{sectionModel}, Cobalt uses cryptography to generate common pseudorandom values that are unpredictable by the network adversary in order to sidestep the FLP result \citep{Fischer1985IDC}.


Let $\mathcal{S}$ be a probability space with probability measure $P$. We define a \textbf{common random source} or \textbf{CRS} to be a protocol where nodes can \textbf{sample} at any time, and then output some value according to the following properties:
\begin{itemize}
	\item CRS-Consistency: If any honest node outputs $s$, then no honest node linked to it ever outputs $s'\neq s$.
	\item CRS-Termination: If $\mathcal{P}_i$ is strongly connected and every unblocked node in $\mathsf{UNL}_i^{\infty}$ samples the CRS, then every unblocked node in $\mathsf{UNL}_i^{\infty}$ eventually produces an output.
	\item CRS-Randomness: Suppose $\mathcal{P}_i$ is correct and weakly connected, at most $t_S$ nodes in every essential subset $S\in\mathsf{ES}_i$ are controlled by the adversary, and $\mathcal{P}_i$ eventually outputs $s$. Then for any value $x$ produced by the adversary before any healthy node in $\mathsf{UNL}_i$ has sampled the CRS, with overwhelming probability $|\mathrm{Pr}[s=x]-P(x)|\leqslant\epsilon$ for negligible $\epsilon$.
\end{itemize}

The last property formalizes the idea that the adversary cannot get a significantly better prediction of the random output than it would by just randomly picking a value from $\mathcal{S}$.

We postpone describing the concrete details of this protocol until appendix~\ref{sectionAppendix3}.

\subsection{Reliable Broadcast}

\subsubsection{Definition}

\textbf{Reliable broadcast}, or RBC, is a basic primitive that allows a specified \textbf{broadcaster} to send a message to the network, and guarantees that even if the broadcaster is Byzantine faulty, it must send the same message to every node. For the protocol definition, the broadcaster may or may not be a node within the network; however, when using RBC within Cobalt we only ever use it in the context where the broadcaster \textit{is} a node in the network.

More formally, a reliable broadcast protocol is any protocol where a specified broadcaster entity $\mathcal{B}_i$ inputs an arbitrary message, and every node can \textbf{accept} some message, subject to the following properties:
\begin{itemize}
	\item RBC-Consistency: If any honest node accepts a message $M$, then no honest node linked to it ever accepts any message $M'\neq M$.
	\item RBC-Reliability: If $\mathcal{P}_i$ is strongly connected and any healthy node in $\mathsf{UNL}_i^{\infty}$ accepts a message $M$, then every unblocked node in $\mathsf{UNL}_i^{\infty}$ eventually accepts $M$.
	\item RBC-Validity: If $\mathcal{B}_i$ is honest and inputs the message $M$, then any healthy node that accepts a message must accept $M$.
	\item RBC-Non-Triviality: If $\mathcal{B}_i$ is honest \textit{and} can broadcast to every correct node in the network, then eventually every unblocked node will accept $M$.
\end{itemize}
Most researchers combine Consistency and Reliability into one property, but we keep them separate since the network assumptions needed for Consistency are so much weaker. Most researchers also combine Validity and Non-Triviality, since its assumed that every node can broadcast to the entire network. Since in our network model we do not assume that all nodes have communication channels between them, $\mathcal{B}_i$ might be isolated from the rest of the network, so combining these properties doesn't work.

\subsubsection{Protocol}\label{sectionProtocol-RBC}

In the complete network model, the canonical reliable broadcast protocol is due to Bracha \citep{Bracha1984}. Our protocol is closely modeled after Bracha's protocol, and behaves exactly the same in the complete network case.

The protocol begins by having $\mathcal{B}_i$ broadcast $INIT(M)$ to everyone listening to it. After that, each node $\mathcal{P}_j$ (including $j=i$, if $\mathcal{B}_i$ is a member of the network) runs the following protocol\footnote{In our protocol descriptions, we use the underscore notation $\mathunderscore$ to refer to ``any possible value".}.
\begin{enumerate}
	\item\label{RBCstep1} Upon receiving an $INIT(M)$ message directly from $\mathcal{B}_i$, broadcast $ECHO(M)$ if we have not yet broadcast $ECHO(\mathunderscore)$.
	\item\label{RBCstep2} Upon receiving weak support for $ECHO(M)$, broadcast $ECHO(M)$ if we have not yet broadcast $ECHO(\mathunderscore)$.
	\item\label{RBCstep3} Upon receiving strong support for $ECHO(M)$, broadcast $READY(M)$ if we have not yet broadcast $READY(\mathunderscore)$.
	\item\label{RBCstep4} Upon receiving weak support for $READY(M)$, broadcast $READY(M)$ if we have not yet broadcast $READY(\mathunderscore)$.
	\item\label{RBCstep5} Upon receiving strong support for $READY(M)$, accept $M$.
\end{enumerate}

When multiple instances of reliable broadcast might be running at the same time, we tag each message with a unique instance id to differentiate them.

Step~\ref{RBCstep2} is not technically necessary, but it makes it somewhat easier to reliably broadcast to the network. Note that since we assume that every message is cryptographically signed by the sender, if we also include the public key of $\mathcal{B}_i$ (which may not be known to all nodes) in the instance tag, then in step~\ref{RBCstep1} we could actually broadcast $ECHO(M)$ even if we only receive $ECHO(M)$ from a single node, as long as we also include $\mathcal{B}_i$'s signature with it. This would make it even easier for nodes to reliably broadcast to the network. The only security risk for allowing more nodes to reliably broadcast is the possibility of allowing spam to congest the network; since spammers can be eventually excluded, there is little value in trying to make it harder for nodes to reliably broadcast.

\subsubsection{Analysis}\label{sectionProofs-RBC}

Reliable broadcast can be split into two phases: the ``echo" phase and the ``ready" phase, distinguished by the labels on the messages from each phase. Roughly speaking, the echo phase serves to guarantee that everyone accepts the same message (consistency) while the second phase guarantees that if anyone accepts a message then so does everyone else (reliability).

\begin{proposition}\label{propRBCConsistency}
	Suppose two correct nodes $\mathcal{P}_i$ and $\mathcal{P}_j$ are linked and they accept the messages $M$ and $M'$, respectively. Then $M=M'$.
\end{proposition}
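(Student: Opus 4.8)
The plan is to reduce the proposition to a single application of the support-blocking lemma (Lemma~\ref{lemmaSupportBlocking}), applied at the level of $READY$ messages. First I would recall that, by step~\ref{RBCstep5} of the protocol, a correct node accepts a value $V$ precisely when it sees strong support for $READY(V)$. Hence the hypotheses give that $\mathcal{P}_i$ sees strong support for $READY(M)$ and that $\mathcal{P}_j$ sees strong support for $READY(M')$.

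Next I would verify that, whenever $M\neq M'$, the messages $READY(M)$ and $READY(M')$ are contradictory in the precise sense demanded by Lemma~\ref{lemmaSupportBlocking}, namely that no honest node ever broadcasts both. This is immediate from the protocol: the guards in steps~\ref{RBCstep3} and~\ref{RBCstep4} only fire when the node has not yet broadcast any $READY$ message, so every honest node broadcasts at most one such message over the lifetime of the instance.

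With these two observations the conclusion follows at once. Both $\mathcal{P}_i$ and $\mathcal{P}_j$ are correct and linked, and $\mathcal{P}_i$ sees strong support for $READY(M)$, so Lemma~\ref{lemmaSupportBlocking} forbids $\mathcal{P}_j$ from seeing strong support for any message that contradicts $READY(M)$. If $M\neq M'$, then $READY(M')$ is exactly such a message, so $\mathcal{P}_j$ could not have seen strong support for $READY(M')$ and hence could not have accepted $M'$, contradicting the hypothesis. Therefore $M=M'$.

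The step most worth stating carefully---rather than a genuine difficulty---is the observation that each honest node emits at most one $READY$ message, since this is precisely what licenses feeding $READY(M)$ and $READY(M')$ into the blocking lemma as contradictory messages. It is worth emphasizing that this argument never invokes the echo phase at all: consistency of acceptance rests entirely on the quorum-intersection property of linkage applied to the ready phase, while the echo phase will instead be needed to establish the reliability and validity guarantees.
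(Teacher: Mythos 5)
Your proof is correct and follows the paper's own argument exactly: acceptance in step~\ref{RBCstep5} means strong support for the corresponding $READY$ message, the once-only guard in steps~\ref{RBCstep3} and~\ref{RBCstep4} makes $READY(M)$ and $READY(M')$ contradictory for $M\neq M'$, and Lemma~\ref{lemmaSupportBlocking} applied to the linked pair finishes the argument. Your version merely spells out the contradiction condition more explicitly than the paper does, which is a fine clarification rather than a departure.
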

\begin{proof}
	By step~\ref{RBCstep5} of the RBC algorithm, a node only accepts a message $M$ if it receives $READY(M)$ strong support for $M$. Since RBC restricts nodes to only broadcast a single message for each label, by lemma~\ref{lemmaSupportBlocking}, $M=M'$.
\end{proof}

Although consistency is local as the previous proposition shows, unfortunately the stronger property of reliability is not local.

\begin{lemma}\label{propRBCReadyBlocking}
	Suppose $\mathcal{P}_k$ is strongly connected and two healthy nodes $\mathcal{P}_i,\mathcal{P}_j\in\mathsf{UNL}_k^{\infty}$ broadcast $READY(M)$ and $READY(M')$, respectively. Then $M=M'$.
\end{lemma}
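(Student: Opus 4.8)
The plan is to reduce the claim about $READY$ broadcasts to a claim about strong support for $ECHO$ messages, and then close the argument with the blocking mechanics of lemma~\ref{lemmaSupportBlocking}. The key observation is that a healthy node fires step~\ref{RBCstep4} (broadcasting $READY$ on weak support for $READY$) only because some other node already broadcast that same $READY$ value; so by a minimality argument the ``origin'' of any $READY$ value must be a node that fired step~\ref{RBCstep3} on strong support for the corresponding $ECHO$.

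First I would fix the value $M$ and consider the earliest (in real time) healthy node $\mathcal{P}_c\in\mathsf{UNL}_k^{\infty}$ to broadcast $READY(M)$; such a node exists since $\mathcal{P}_i$ is one candidate. I claim $\mathcal{P}_c$ broadcast via step~\ref{RBCstep3}. If instead it fired step~\ref{RBCstep4}, it saw weak support for $READY(M)$, i.e.\ it received $READY(M)$ from $t_S+1$ nodes in some $S\in\mathsf{ES}_c$. Since $\mathcal{P}_c$ is healthy, at most $\min\{t_S,n_S-q_S\}\leqslant t_S$ nodes of $S$ are unhealthy, so at least one of those $t_S+1$ senders is itself healthy; call it $\mathcal{P}_e$. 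Being honest, $\mathcal{P}_e$ actually broadcast $READY(M)$, and it did so strictly before $\mathcal{P}_c$. Moreover $\mathcal{P}_e\in S\subseteq\mathsf{UNL}_c\subseteq\mathsf{UNL}_k^{\infty}$, where the last inclusion is the closure property of the extended UNL applied to the honest node $\mathcal{P}_c$. This contradicts minimality, so $\mathcal{P}_c$ must have seen strong support for $ECHO(M)$. Running the same argument for $M'$ produces a healthy $\mathcal{P}_d\in\mathsf{UNL}_k^{\infty}$ that saw strong support for $ECHO(M')$.

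It remains to rule out strong support for two contradicting $ECHO$ values. Because $\mathcal{P}_k$ is strongly connected and $\mathcal{P}_c,\mathcal{P}_d$ are healthy members of $\mathsf{UNL}_k^{\infty}$, they are fully linked, hence linked, so they share an essential subset $S_{shared}$ with at most $t_{S_{shared}}$ actively Byzantine nodes. Strong support for $ECHO(M)$ at $\mathcal{P}_c$ forces more than $n_{S_{shared}}-q_{S_{shared}}$ honest nodes of $S_{shared}$ to broadcast $ECHO(M)$ (this is exactly the computation in lemma~\ref{lemmaSupportBlocking} using equation~\ref{eqTandQ1}); since an honest node broadcasts at most one $ECHO$ value, fewer than $q_{S_{shared}}$ nodes of $S_{shared}$ can broadcast $ECHO(M')$, so $\mathcal{P}_d$ cannot have seen strong support for $ECHO(M')$ unless $M'=M$. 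Hence $M=M'$.

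I expect the reduction step to be the main obstacle, both in setting up the minimality argument cleanly and in verifying that every node it touches stays inside $\mathsf{UNL}_k^{\infty}$ (which is where the closure definition of the extended UNL and strong connectivity do the real work). A secondary subtlety is that $\mathcal{P}_c,\mathcal{P}_d$ are only assumed healthy, not correct, so rather than invoking lemma~\ref{lemmaSupportBlocking} verbatim (which is stated for correct nodes) I would re-run its counting argument directly on $S_{shared}$; that argument relies only on the honesty of the senders in the shared essential subset, so it applies unchanged to healthy observers.
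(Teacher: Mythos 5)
Your proof is correct and follows essentially the same route as the paper's: a minimality argument showing the ``first'' healthy broadcaster of each $READY$ value in $\mathsf{UNL}_k^{\infty}$ must have acted on strong support for the corresponding $ECHO$, followed by the counting argument of lemma~\ref{lemmaSupportBlocking} applied to the pair of nodes linked via strong connectivity. Your observation that lemma~\ref{lemmaSupportBlocking} is stated for \emph{correct} nodes while the nodes here are only \emph{healthy} is a point the paper glosses over (it cites the lemma verbatim); re-running the counting argument, which indeed only needs honesty of the senders in the shared essential subset, is the more careful resolution.
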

\begin{proof}
	By steps~\ref{RBCstep3} and \ref{RBCstep4} of the reliable broadcast protocol, an honest node $\mathcal{P}_i$ can only broadcast $READY(M)$ for some message $M$ if either $1)$ it received strong support for $ECHO(M)$, or $2)$ it received weak support for $READY(M)$. In the latter case, if $\mathcal{P}_i$ is healthy then this implies in particular that some healthy node in $\mathsf{UNL}_i\subseteq\mathsf{UNL}_k^{\infty}$ broadcast $READY(M)$ \textit{before} $\mathcal{P}_i$. Since there are only a finite number of nodes in $\mathsf{UNL}_k^{\infty}$, there must exist some healthy node $\mathcal{P}_{i'}$ in $\mathsf{UNL}_k^{\infty}$ that broadcast $READY(M)$ before any other healthy node in its UNL. In particular, $\mathcal{P}_{i'}$ must have broadcast $READY(M)$ due to having received strong support for $ECHO(M)$.
	
	Thus if two healthy nodes $\mathcal{P}_i,\mathcal{P}_j\in\mathsf{UNL}_k^{\infty}$ broadcast $READY(M)$ and $READY(M')$, respectively, then we can assume that there are two healthy nodes $\mathcal{P}_{i'},\mathcal{P}_{j'}\in\mathsf{UNL}_k^{\infty}$ such that $\mathcal{P}_{i'}$ received strong support for $ECHO(M)$ while $\mathcal{P}_{j'}$ received strong support for $ECHO(M')$. Since $\mathcal{P}_k$ is strongly connected by assumption, $\mathcal{P}_{i'}$ and $\mathcal{P}_{j'}$ are linked, so by lemma~\ref{lemmaSupportBlocking} $M=M'$.
\end{proof}

\begin{proposition}\label{propRBCReliability}
	If $\mathcal{P}_k$ is strongly connected and any healthy node $\mathcal{P}_i\in\mathsf{UNL}_k^{\infty}$ accepts the message $M$, then every unblocked node $\mathcal{P}_j\in\mathsf{UNL}_k^{\infty}$ will eventually accept $M$.
\end{proposition}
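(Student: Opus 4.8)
The plan is to run the standard Bracha-style reliability amplification, but tracked through the strong/weak support machinery of lemma~\ref{lemmaSupportTransfer} together with the ready-blocking result of lemma~\ref{propRBCReadyBlocking}. The goal is to show that $READY(M)$ floods the healthy part of $\mathsf{UNL}_k^{\infty}$ until every unblocked node sees strong support for $READY(M)$ and accepts via step~\ref{RBCstep5}.

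First I would establish the bootstrap: that some healthy node in $\mathsf{UNL}_k^{\infty}$ has broadcast $READY(M)$, which via lemma~\ref{propRBCReadyBlocking} pins down $M$ as the \emph{only} message any healthy node in $\mathsf{UNL}_k^{\infty}$ ever readies. Since $\mathcal{P}_i$ accepted $M$, by step~\ref{RBCstep5} it saw strong support for $READY(M)$, i.e.\ it received $READY(M)$ from $q_S$ nodes in each $S\in\mathsf{ES}_i$. Fixing one such $S$ and using that $\mathcal{P}_i$ is healthy (so at most $\min\{t_S,n_S-q_S\}\leqslant n_S-q_S$ nodes of $S$ are unhealthy), a short counting argument shows at least $2q_S-n_S$ of those $READY(M)$ senders are healthy; by equation~\ref{eqTandQ1} this quantity exceeds $t_S\geqslant 0$, so at least one is healthy. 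Because $\mathcal{P}_i$ is honest and lies in $\mathsf{UNL}_k^{\infty}$ we have $S\subseteq\mathsf{UNL}_i\subseteq\mathsf{UNL}_k^{\infty}$, so this is a healthy node of $\mathsf{UNL}_k^{\infty}$ broadcasting $READY(M)$, and lemma~\ref{propRBCReadyBlocking} then forbids any healthy node of $\mathsf{UNL}_k^{\infty}$ from readying a message other than $M$.

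Next I would show every unblocked $\mathcal{P}_j\in\mathsf{UNL}_k^{\infty}$ eventually broadcasts $READY(M)$. Such a $\mathcal{P}_j$ is healthy, so by strong connectivity of $\mathcal{P}_k$ it is fully linked with the healthy node $\mathcal{P}_i$; applying lemma~\ref{lemmaSupportTransfer} to $\mathcal{P}_i$'s strong support for $READY(M)$ shows that $\mathcal{P}_j$ eventually sees weak support for $READY(M)$. By step~\ref{RBCstep4}, $\mathcal{P}_j$ then broadcasts $READY(M)$ unless it has already readied some message, and by the bootstrap that message can only be $M$; in either case $\mathcal{P}_j$ broadcasts $READY(M)$. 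I then close the loop: fixing an unblocked $\mathcal{P}_j$ and any $S\in\mathsf{ES}_j$, at most $\min\{t_S,n_S-q_S\}$ nodes of $S$ fail to be unblocked, so at least $n_S-(n_S-q_S)=q_S$ nodes of $S$ are unblocked and lie in $\mathsf{UNL}_j\subseteq\mathsf{UNL}_k^{\infty}$; each therefore broadcasts $READY(M)$ to everyone listening to it, including $\mathcal{P}_j$. As all these senders and $\mathcal{P}_j$ are correct, the adversary must eventually deliver, so $\mathcal{P}_j$ receives $READY(M)$ from $q_S$ nodes in every $S\in\mathsf{ES}_j$ — strong support — and accepts $M$ by step~\ref{RBCstep5}.

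The subtle point, and the one I expect to be the main obstacle, is the interplay in the second step between the ``only if we have not yet broadcast $READY(\mathunderscore)$'' guard of step~\ref{RBCstep4} and the possibility that $\mathcal{P}_j$ readied a \emph{different} message earlier, which would block the weak-support trigger from producing $READY(M)$. This is exactly why the bootstrap and lemma~\ref{propRBCReadyBlocking} come first: ruling out conflicting readies among healthy nodes of $\mathsf{UNL}_k^{\infty}$ guarantees that the weak-support trigger always yields $READY(M)$ rather than stalling, which is what lets the flood propagate to every unblocked node. The remaining steps are routine applications of the support definitions and the counting inequalities in equations~\ref{eqTandQ1} and~\ref{eqTandQ2}.
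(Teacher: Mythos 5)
Your proposal is correct and follows essentially the same route as the paper's proof: lemma~\ref{lemmaSupportTransfer} propagates weak support for $READY(M)$ to the relevant nodes, lemma~\ref{propRBCReadyBlocking} clears the ``have not yet broadcast $READY(\mathunderscore)$'' guard in step~\ref{RBCstep4}, and a count of unblocked nodes in each $S\in\mathsf{ES}_j$ yields strong support and acceptance via step~\ref{RBCstep5}. The only difference is that you make explicit the bootstrap (that some healthy node in $\mathsf{UNL}_k^{\infty}$ actually broadcast $READY(M)$, via the $2q_S-n_S>t_S$ count), a step the paper leaves implicit when invoking lemma~\ref{propRBCReadyBlocking} --- a reasonable bit of added rigor, not a different argument.
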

\begin{proof}
	Since every pair of healthy nodes in $\mathsf{UNL}_k^{\infty}$ are fully linked by assumption, if $\mathcal{P}_i$ accepts $M$ then by lemma~\ref{lemmaSupportTransfer}, eventually every unblocked node in $\mathsf{UNL}_k^{\infty}$ will eventually see weak support for $READY(M)$. By lemma~\ref{propRBCReadyBlocking}, no healthy node in $\mathsf{UNL}_k^{\infty}$ can have previously broadcast $READY(M')$ for any $M'\neq M$, so by step~\ref{RBCstep4} of the RBC protocol, eventually every healthy and correct node in $\mathsf{UNL}_k^{\infty}$ broadcasts $READY(M)$. In particular, if $\mathcal{P}_j\in\mathsf{UNL}_k^{\infty}$, then every healthy and correct node in $\mathsf{UNL}_j\subseteq\mathsf{UNL}_k^{\infty}$ eventually broadcasts $READY(M)$, so if $\mathcal{P}_j$ is unblocked then eventually $\mathcal{P}_j$ receives strong support for $READY(M)$. Thus $\mathcal{P}_j$ accepts $M$ by step~\ref{RBCstep5} of the protocol.
\end{proof}

\begin{proposition}\label{propRBCValidity}
	If $\mathcal{B}_i$ is honest, then no healthy node can accept a message not broadcast by $\mathcal{B}_i$.
\end{proposition}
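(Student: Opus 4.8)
The plan is to show that an honest broadcaster $\mathcal{B}_i$ confines the entire support cascade among healthy nodes to its own input value. Since $\mathcal{B}_i$ is honest it is not actively Byzantine, so it broadcasts $INIT(M)$ for at most one message $M$ (sending two distinct $INIT$ messages is something a correct broadcaster would never do, hence would be actively Byzantine). I would first isolate the counting fact that makes the whole argument work locally: for any healthy node $\mathcal{P}_j$, at most $\min\{t_S,n_S-q_S\}$ nodes in each $S\in\mathsf{ES}_j$ are unhealthy, and in particular at most that many are actively Byzantine. Consequently any weak-support set ($t_S+1$ nodes of some $S$) contains at least one healthy contributor, because $\min\{t_S,n_S-q_S\}\leqslant t_S$; and any strong-support quorum ($q_S$ nodes of some $S$) contains at least $q_S-(n_S-q_S)=2q_S-n_S>t_S\geqslant 0$ healthy contributors by equation~\ref{eqTandQ1}, so again at least one is healthy. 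Note that this only ever refers to $\mathcal{P}_j$'s own essential subsets, so no connectivity assumption is needed.

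Next I would prove, by a minimal-counterexample argument over the times at which $ECHO$ messages are broadcast, that no healthy node ever broadcasts $ECHO(M')$ for $M'\neq M$. A healthy node broadcasts $ECHO(M')$ only via step~\ref{RBCstep1} (receiving $INIT(M')$ directly from $\mathcal{B}_i$) or via step~\ref{RBCstep2} (seeing weak support for $ECHO(M')$). Consider the earliest-in-time healthy node to broadcast some $ECHO(M')$ with $M'\neq M$. It cannot have used step~\ref{RBCstep2}, since by the counting fact weak support for $ECHO(M')$ requires some healthy node to have broadcast $ECHO(M')$ strictly earlier, contradicting minimality. Hence it used step~\ref{RBCstep1}, which forces $M'$ to equal the unique value in $\mathcal{B}_i$'s $INIT$, i.e. $M'=M$, a contradiction. (If $\mathcal{B}_i$ never broadcasts any $INIT$, the same argument shows no healthy node broadcasts any $ECHO$ at all.)

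I would then repeat the identical structure one level up for $READY$. A healthy node broadcasts $READY(M')$ only via step~\ref{RBCstep3} (strong support for $ECHO(M')$) or step~\ref{RBCstep4} (weak support for $READY(M')$). By the previous paragraph together with the counting fact, strong support for $ECHO(M')$ can only arise when $M'=M$; and weak support for $READY(M')$ requires an earlier healthy $READY(M')$ broadcast. Taking the earliest healthy $READY$ broadcaster rules out step~\ref{RBCstep4}, so it used step~\ref{RBCstep3} and thus $M'=M$. Finally, to accept a message a healthy node needs strong support for $READY(M')$ (step~\ref{RBCstep5}), which by the counting fact requires at least one healthy $READY(M')$ broadcaster, hence $M'=M$. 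Thus no healthy node can accept any message other than the one $\mathcal{B}_i$ broadcast, and none at all if $\mathcal{B}_i$ broadcast nothing.

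I expect the main obstacle to be handling the two weak-support propagation rules (steps~\ref{RBCstep2} and~\ref{RBCstep4}) cleanly: these are precisely the rules that could, a priori, let a healthy node be dragged along into echoing or readying a wrong value on the word of Byzantine nodes. The resolution is the temporal minimal-counterexample argument combined with the observation that $\min\{t_S,n_S-q_S\}\leqslant t_S$ guarantees every weak-support set contains a genuinely healthy contributor; some care is also needed to confirm that the honesty of $\mathcal{B}_i$ really does pin step~\ref{RBCstep1} to a single message value, which is what bottoms out the induction.
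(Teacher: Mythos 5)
Your proposal is correct and follows essentially the same route as the paper's proof: the paper also argues that a healthy node can only broadcast $ECHO(M')$ after receiving $INIT(M')$ from $\mathcal{B}_i$ or $ECHO(M')$ from another healthy node, propagates this up to the $READY$ phase, and concludes that strong support for $READY(M')$ is impossible for $M'\neq M$. The only difference is one of rigor: you make explicit the counting fact (every weak/strong support set seen by a healthy node contains a healthy contributor) and the earliest-broadcaster induction that the paper's terser argument leaves implicit.
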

\begin{proof}
	This follows from a simple analysis of the protocol by noting that a healthy node can't broadcast $ECHO(M)$ without either receiving $INIT(M)$ from $\mathcal{B}_i$ or receiving $ECHO(M)$ from another healthy node. Thus if $\mathcal{B}_i$ only broadcasts $INIT(M)$, then no healthy node will broadcast $ECHO(M')$ for any $M'\neq M$. By similar logic, no healthy node will broadcast $READY(M')$ for any $M'\neq M$, so no healthy node will ever see enough $READY(M')$ messages to accept $M'$.
\end{proof}

\begin{proposition}\label{propRBCNonTriviality}
	If $\mathcal{B}_i$ is correct and can broadcast to every correct node in the network, then eventually every unblocked node will accept $M$.
\end{proposition}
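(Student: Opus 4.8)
The plan is to push the message through the echo phase and then the ready phase, at each stage arguing uniformly over the entire class of unblocked nodes so that the recursive definition of ``unblocked'' never has to be unrolled. First I would observe that since $\mathcal{B}_i$ is correct and can broadcast $INIT(M)$ to every correct node, and every unblocked node is in particular correct, every unblocked node eventually receives $INIT(M)$ directly from $\mathcal{B}_i$ and hence, by step~\ref{RBCstep1}, broadcasts $ECHO(M)$ unless it has already broadcast some $ECHO$ message. But $\mathcal{B}_i$ is honest, so by the reasoning of proposition~\ref{propRBCValidity} no healthy node ever broadcasts $ECHO(M')$ for $M'\neq M$; since unblocked nodes are healthy, the only echo an unblocked node can have sent is $ECHO(M)$ itself. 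Thus every unblocked node eventually broadcasts $ECHO(M)$.

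The heart of the argument is a counting step that I would then apply twice. Let $\mathcal{P}_j$ be any unblocked node and $S\in\mathsf{ES}_j$ any of its essential subsets. By definition of unblocked, at most $\min\{t_S,n_S-q_S\}\leqslant n_S-q_S$ nodes in $S$ fail to be unblocked, so at least $n_S-(n_S-q_S)=q_S$ nodes in $S$ are unblocked. Each such node is correct and, by the previous paragraph, broadcasts $ECHO(M)$; and since $S\subseteq\mathsf{UNL}_j$, the node $\mathcal{P}_j$ listens to every member of $S$, so it eventually receives $ECHO(M)$ from all $\geqslant q_S$ of these nodes. As $S$ was arbitrary, $\mathcal{P}_j$ eventually sees strong support for $ECHO(M)$ in every essential subset, and so by step~\ref{RBCstep3} (again using proposition~\ref{propRBCValidity} to rule out a conflicting prior $READY$) it broadcasts $READY(M)$.

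Having established that every unblocked node eventually broadcasts $READY(M)$, I would rerun the identical counting step with $READY$ in place of $ECHO$: for any unblocked $\mathcal{P}_j$ and any $S\in\mathsf{ES}_j$, at least $q_S$ nodes of $S$ are unblocked and therefore broadcast $READY(M)$, which $\mathcal{P}_j$ receives since it listens to all of $S$. Hence $\mathcal{P}_j$ eventually sees strong support for $READY(M)$ and accepts $M$ by step~\ref{RBCstep5}. The only points requiring care are the uniform (rather than node-by-node inductive) phrasing, which is what lets us sidestep the self-reference in the definition of ``unblocked'', and the chaining of the ``eventually'' guarantees across the two phases; the quantitative crux is simply the inequality $\min\{t_S,n_S-q_S\}\leqslant n_S-q_S$, which ensures each essential subset retains the $q_S$ unblocked members needed to manufacture strong support. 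No synchrony or global connectivity assumption enters, consistent with the proposition's statement.
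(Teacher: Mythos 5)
Your proposal is correct and follows the same route as the paper's proof: push $ECHO(M)$ through the network, then $READY(M)$, then acceptance via step~\ref{RBCstep5}. The paper's own proof is a one-line sketch of exactly this chain; your version merely supplies the details it elides — in particular the counting observation that every essential subset of an unblocked node contains at least $q_S$ unblocked members, and the appeal to the validity argument of proposition~\ref{propRBCValidity} to rule out conflicting prior $ECHO$/$READY$ broadcasts.
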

\begin{proof}
	Since every node can receive $INIT(M)$ from $\mathcal{B}_i$, every healthy and correct node will broadcast $ECHO(M)$, so eventually every healthy and correct node will broadcast $READY(M)$, so eventually every unblocked node will accept $M$.
\end{proof}

\begin{theorem}\label{thmRBC}
	The RBC protocol defined in section~\ref{sectionProtocol-RBC} satisfies the properties of a reliable broadcast algorithm in the open network model.
\end{theorem}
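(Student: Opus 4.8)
The plan is to recognize that this theorem is a bookkeeping corollary rather than a fresh argument: the definition of a reliable broadcast protocol in section~\ref{sectionProtocol-RBC} demands exactly four properties --- RBC-Consistency, RBC-Reliability, RBC-Validity, and RBC-Non-Triviality --- and each of these has already been established by one of the four preceding propositions. So the proof is essentially a table matching each required property to the proposition that proves it, and I would carry it out in precisely that order.

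Concretely, I would first invoke proposition~\ref{propRBCConsistency} to discharge RBC-Consistency: it asserts that any two linked correct nodes that accept messages accept the same message, which is the Consistency requirement. I would then cite proposition~\ref{propRBCReliability} for RBC-Reliability (under the strong-connectivity hypothesis the two statements coincide after renaming the index node), proposition~\ref{propRBCValidity} for RBC-Validity (an honest broadcaster inputting $M$ forces every healthy accepter to accept $M$ and nothing else), and proposition~\ref{propRBCNonTriviality} for RBC-Non-Triviality. Once each property is matched to its proposition, the theorem follows by the definition of a reliable broadcast protocol, and the deeper content has already been absorbed into lemmas~\ref{lemmaSupportTransfer} and \ref{lemmaSupportBlocking}.

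The one place where care is genuinely needed --- and the spot I expect a careful reader to probe --- is reconciling the \textbf{honest}/\textbf{correct} terminology between the property definitions and the proposition statements. RBC-Consistency and RBC-Non-Triviality are stated for honest nodes and honest broadcasters, whereas propositions~\ref{propRBCConsistency} and \ref{propRBCNonTriviality} are phrased for correct ones. I would resolve this by observing that every correct node is honest, and that the relevant propositions use correctness of the \emph{accepting} node only through the protocol step that triggers acceptance (strong support for $READY(M)$), while the blocking argument of lemma~\ref{lemmaSupportBlocking} depends only on the \emph{senders} in the shared essential subset being honest. For the broadcaster side, the model's assumption that an honest node either delivers a broadcast to all of its listeners or to none lets the weaker ``honest and can broadcast to every correct node'' hypothesis drive the same chain of $ECHO$ and $READY$ propagations used in proposition~\ref{propRBCNonTriviality}. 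Thus the apparent gap is purely one of wording, not of substance, and no new analysis is required --- which is exactly why I expect the ``hard part'' here to be documentary rather than mathematical.
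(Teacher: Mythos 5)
Your proposal matches the paper's own proof exactly: the paper's argument for theorem~\ref{thmRBC} is precisely the four-line citation matching RBC-Consistency, RBC-Reliability, RBC-Validity, and RBC-Non-Triviality to propositions~\ref{propRBCConsistency}, \ref{propRBCReliability}, \ref{propRBCValidity}, and \ref{propRBCNonTriviality} respectively. Your additional care in reconciling the \emph{honest}/\emph{correct} wording between the property definitions and the proposition statements goes slightly beyond what the paper bothers to write down, but it does not change the approach --- it only makes explicit a gap the paper silently glosses over.
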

\begin{proof}
	Consistency is proven in proposition~\ref{propRBCConsistency}. Reliability is proven in proposition~\ref{propRBCReliability}. Validity is proven in proposition~\ref{propRBCValidity}. Non-triviality is proven in proposition~\ref{propRBCNonTriviality}.
\end{proof}

\subsubsection{Democratic Reliable Broadcast}

We will also find useful a slight variation on RBC called \textbf{democratic reliable broadcast} or \textbf{DRBC}.

A DRBC protocol is similar to RBC except it allows nodes to choose whether to \textbf{support} or \textbf{oppose} messages that are broadcast, and replaces non-triviality with the following properties:
\begin{itemize}
	\item DRBC-Democracy: If any healthy node $\mathcal{P}_i$ is weakly connected and accepts the message $M$, then there exists some essential subset $S\in \mathsf{ES}_i$ such that the majority of all honest nodes in $S$ supported $M$.
	\item DRBC-Censorship-Resilience: If a $\mathcal{B}_i$ can broadcast to every correct node in the network, and all correct nodes support $M$, then eventually every unblocked node will accept $M$.
\end{itemize}

One can easily transform the above RBC protocol into a DRBC protocol by specifying that each node only broadcasts an $ECHO(M)$ message iff it supports $M$ (note though that a node may still need to broadcast $READY(M)$ even if it doesn't support $M$).

\begin{proposition}\label{propDRBCDemocracy}
	If any healthy node $\mathcal{P}_k$ is weakly connected and accepts the message $M$, then there is some essential subset $S\in\mathsf{ES}_k$ such that the majority of honest nodes in $S$ supported $M$.
\end{proposition}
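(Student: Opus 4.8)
The plan is to trace the acceptance of $M$ back through the protocol to the $ECHO$ phase, where the DRBC modification guarantees that an honest node echoes $M$ only if it supports $M$, and then to relocate the relevant echoing nodes inside an essential subset that belongs to $\mathcal{P}_k$ itself. First I would observe that since $\mathcal{P}_k$ accepts $M$, by step~\ref{RBCstep5} it has strong support for $READY(M)$, so it received $READY(M)$ from $q_S$ nodes in each $S\in\mathsf{ES}_k$. Because $\mathcal{P}_k$ is healthy, at most $t_S$ nodes of such an $S$ are unhealthy, so at least $q_S-t_S$ of those senders are healthy, a positive number by equation~\ref{eqTandQ2}. Hence at least one healthy node of $\mathsf{UNL}_k^{\infty}$ broadcast $READY(M)$.

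The first hurdle is that a $READY(M)$ broadcast need not originate from any $ECHO$ event, since step~\ref{RBCstep4} lets a node relay $READY(M)$ merely on weak $READY$ support. I would dispatch this exactly as in lemma~\ref{propRBCReadyBlocking}: among the finitely many healthy nodes of $\mathsf{UNL}_k^{\infty}$ that broadcast $READY(M)$, choose the earliest, call it $\mathcal{P}_{k'}$. Had $\mathcal{P}_{k'}$ used step~\ref{RBCstep4}, its weak-$READY$ quorum of $t_S+1$ nodes in some $S\in\mathsf{ES}_{k'}$ would contain a healthy node (again at most $t_S$ of $S$ are unhealthy), which lies in $\mathsf{UNL}_{k'}\subseteq\mathsf{UNL}_k^{\infty}$ and broadcast $READY(M)$ strictly earlier, contradicting minimality. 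So $\mathcal{P}_{k'}$ broadcast $READY(M)$ via step~\ref{RBCstep3}, i.e. it saw strong support for $ECHO(M)$.

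The main step is then to pull this support into an essential subset of $\mathcal{P}_k$, which is where weak connectivity enters. Since $\mathcal{P}_{k'}$ is healthy and in $\mathsf{UNL}_k^{\infty}$, weak connectivity makes $\mathcal{P}_k$ fully linked with $\mathcal{P}_{k'}$, yielding a shared $S\in\mathsf{ES}_k\cap\mathsf{ES}_{k'}$ in which at most $t_S$ nodes are actively Byzantine. Strong support for $ECHO(M)$ at $\mathcal{P}_{k'}$ means it received $ECHO(M)$ from $q_S$ nodes of this shared $S$; writing $b\leqslant t_S$ for the number of actively Byzantine nodes in $S$, at least $q_S-b$ of those senders are honest, and an honest node echoes $M$ only when it supports $M$. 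Thus at least $q_S-b$ of the $n_S-b$ honest nodes of $S$ supported $M$, and equation~\ref{eqTandQ1} gives $2q_S-n_S>t_S\geqslant b$, which rearranges to $q_S-b>\tfrac12(n_S-b)$; hence a strict majority of the honest nodes of $S\in\mathsf{ES}_k$ supported $M$, as required.

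I expect the delicate points to be concentrated in the second and third paragraphs: making the ``earliest healthy $READY$ broadcaster'' argument airtight (checking that its UNL stays inside $\mathsf{UNL}_k^{\infty}$ and that step~\ref{RBCstep4} genuinely forces a strictly earlier healthy broadcaster), and being careful that the echo-to-support implication holds only for \emph{honest} nodes, so that the final count is taken over the honest nodes of $S$ and compared against the honest total $n_S-b$ rather than against $n_S$. The role of weak connectivity — converting $\mathcal{P}_{k'}$'s strong $ECHO$ support into a statement about a subset in $\mathsf{ES}_k$ — is the structural crux that makes the conclusion local to $\mathcal{P}_k$.
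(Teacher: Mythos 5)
Your proposal is correct and follows essentially the same route as the paper's proof: extract the first healthy $READY(M)$ broadcaster in $\mathsf{UNL}_k^{\infty}$ (which must therefore have seen strong support for $ECHO(M)$), use weak connectivity to obtain a shared essential subset $S\in\mathsf{ES}_k$, and count honest echoers there via equation~\ref{eqTandQ1}. Your final tally ($q_S-b>\tfrac12(n_S-b)$ with $b$ the actual number of actively Byzantine nodes) is just an equivalent rearrangement of the paper's comparison $q_S-t_S>n_S-q_S$, and your added detail on the earliest-broadcaster step only makes explicit what the paper leaves implicit.
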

\begin{proof}
	If any healthy node in $\mathsf{UNL}_k^{\infty}$ broadcasts $READY(M)$, there must have been a healthy node $\mathcal{P}_i\in\mathsf{UNL}_k^{\infty}$ that was the first healthy node in $\mathsf{UNL}_k^{\infty}$ to broadcast $READY(M)$. Then $\mathcal{P}_i$ must have seen strong support for $ECHO(M)$. By weak connectivity, $\mathcal{P}_i$ and $\mathcal{P}_k$ are fully linked (and in particular, linked), so there must be some essential subset $S\in\mathsf{ES}_k$ such that at least $q_S-t_S$ honest nodes in $S$ broadcast $ECHO(M)$, while at most $n_S-q_S$ honest nodes in $S$ did not broadcast $ECHO(M)$. By equation~\ref{eqTandQ1}, $q_S-t_S>q_S-(2q_S-n_S)=n_S-q_S$, so the majority of honest nodes in $S$ must have supported $M$.
\end{proof}

\begin{theorem}
	The modified protocol defined in section~\ref{sectionProtocol-RBC} satisfies the properties of a democratic reliable broadcast algorithm in the open network model.
\end{theorem}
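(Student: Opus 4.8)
The plan is to verify each of the five DRBC properties in turn, exploiting the observation that the \emph{only} change from the RBC protocol is that a node broadcasts $ECHO(M)$ precisely when it supports $M$, while every $READY$ rule is left untouched. Under this lens, three of the properties are inherited essentially for free, Democracy has already been established in proposition~\ref{propDRBCDemocracy}, and only Censorship-Resilience requires a genuinely new (though short) argument.

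First I would dispatch Consistency and Validity, since for these the modification can only \emph{suppress} $ECHO$ broadcasts and so cannot create new accepting behavior. The proof of Consistency (proposition~\ref{propRBCConsistency}) depends only on step~\ref{RBCstep5} together with the fact that a node broadcasts at most one $READY$ message, both of which the modification leaves intact, so it carries over verbatim. For Validity (proposition~\ref{propRBCValidity}), if $\mathcal{B}_i$ is honest and inputs $M$, then exactly as before no healthy node ever broadcasts $ECHO(M')$ or $READY(M')$ for any $M'\neq M$ — and restricting echoes only reinforces this — so no healthy node accepts such an $M'$.

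Next I would handle Reliability, which is where the real care is needed and which I expect to be the main obstacle. Here one must confirm that curtailing the $ECHO$ phase does not break the liveness of the $READY$ phase. The key point is that steps~\ref{RBCstep3}--\ref{RBCstep5} are unchanged, so in particular step~\ref{RBCstep4} still forces a node to broadcast $READY(M)$ upon weak $READY(M)$ support \emph{regardless} of whether it supports $M$. Consequently lemma~\ref{propRBCReadyBlocking} still holds, since its proof invokes only the fact that the earliest healthy $READY(M)$ must have arisen from strong $ECHO(M)$ support together with lemma~\ref{lemmaSupportBlocking}, neither of which the restriction affects. The argument of proposition~\ref{propRBCReliability} then goes through unchanged, relying as it does on support-transfer (lemma~\ref{lemmaSupportTransfer}) and on the support-independent $READY$ propagation of step~\ref{RBCstep4}.

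Finally I would prove Censorship-Resilience by adapting the Non-Triviality argument (proposition~\ref{propRBCNonTriviality}). Since $\mathcal{B}_i$ can broadcast $INIT(M)$ to every correct node and, by hypothesis, every correct node supports $M$, the modified step~\ref{RBCstep1} still fires, so every healthy and correct node broadcasts $ECHO(M)$. From there the chain $\text{strong } ECHO \text{ support} \Rightarrow READY(M) \Rightarrow \text{strong } READY \text{ support} \Rightarrow \text{accept}$ proceeds exactly as before, yielding acceptance by every unblocked node. Assembling these four pieces with proposition~\ref{propDRBCDemocracy} establishes all the properties of a democratic reliable broadcast algorithm.
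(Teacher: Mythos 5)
Your proposal is correct and follows essentially the same route as the paper's proof: Consistency, Reliability, and Validity are inherited because none of their proofs assume nodes are guaranteed to broadcast $ECHO$ messages, Democracy is cited from proposition~\ref{propDRBCDemocracy}, and Censorship-Resilience reuses the Non-Triviality argument since universal support means every healthy, correct node still broadcasts $ECHO(M)$. Your treatment of Reliability is merely a more detailed spelling-out of the same observation the paper makes in one sentence.
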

\begin{proof}
	Consistency, reliability, and validity all still hold with the modified algorithm, since none of the proofs for those properties in theorem~\ref{thmRBC} assume that any nodes are guaranteed to broadcast an $ECHO$ message. Democracy is proven in proposition~\ref{propDRBCDemocracy}.
	
	The proof of Censorship Resilience is identical to the proof of RBC-Non-Triviality, since if every correct node supports $M$ then eventually every healthy and correct node will broadcast $ECHO(M)$.
\end{proof}

\subsection{Binary Agreement}

\subsubsection{Definition}

The other low level primitive we need is \textbf{asynchronous binary Byzantine agreement} or \textbf{ABBA}. ABBA is the most basic consensus primitive: every node inputs some bit, and then all the nodes agree on a single bit that was input by some honest node.

More formally, an ABBA protocol allow each node to input a single bit, and then every node outputs a single bit according to the following properties:
\begin{itemize}
	\item ABBA-Consistency: Two honest, linked nodes cannot output different values.
	\item ABBA-Termination: If $\mathcal{P}_k$ is strongly connected and every unblocked node in $\mathsf{UNL}_k^{\infty}$ provides some input to the algorithm, then eventually every unblocked node in $\mathsf{UNL}_k^{\infty}$ terminates with probability $1$.
	\item ABBA-Validity: If any unblocked node outputs $v$, then some unblocked node must have input $v$.
\end{itemize}

The above definition of Validity is common in the complete network model, but it turns out to be too weak for our purposes. Indeed, an algorithm that only satisfies the above Validity property could decide $1$ even if some totally isolated honest node were the only node that voted $1$. We thus actually need a stronger notion of validity to guarantee correctness of Cobalt:
\begin{itemize}
	\item ABBA-Strong-Validity: If any unblocked node $\mathcal{P}_i$ outputs $v$, then there is some chain of unblocked nodes $\mathcal{P}_i=\mathcal{P}_{i_0},\mathcal{P}_{i_1},...,\mathcal{P}_{i_n}$, where for all $k\leqslant n$, $\mathcal{P}_{i_{k}}\in\mathsf{UNL}_{i_{k-1}}$, and the node $\mathcal{P}_{i_n}$ input $v$.
\end{itemize}

Although rather awkward, the Strong Validity property turns out to be just strong enough for our purposes.

\subsubsection{Protocol}\label{sectionProtocol-ABBA}

Our ABBA protocol is based off of a binary agreement protocol designed for complete networks by Most\'efaoui et al. \citep{Mostefaoui2014}. The protocol by Most\'efaoui et al. is fully asynchronous and uses a CRS in the form of a ``common coin". It takes longer on average to terminate compared to an earlier protocol in the same model developed by Cachin et al. \citep{Cachin2005}; unfortunately it seems impossible to develop a simple adaptation for Cachin et al.'s protocol, since the cryptographic proofs it uses to justify messages don't seem to work in our model\footnote{Of course, threshold signatures as used in Cachin et al.'s original specification don't work in our model. But even replacing threshold signatures with multisignatures, if a node $\mathcal{P}_i$ broadcasts a ``main message" voting $1$ after seeing $q_S$ valid ``pre messages" voting $1$ from every $S\in\mathsf{ES}_i$, then because not all nodes know each other's essential subsets, the validity proof of this main message only proves to $\mathcal{P}_j$ that \textit{some} $S\in\mathsf{ES}_j$ sent $q_S$ valid pre messages voting $1$ to $\mathcal{P}_i$; but $\mathcal{P}_j$ then still doesn't know if there might be some node $\mathcal{P}_k$ for which \textit{no} $S\in\mathsf{ES}_k$ sent $q_S$ valid pre messages voting $1$ to $\mathcal{P}_i$. Thus a Byzantine node could send opposite valid main messages to two nodes that don't know about each other, and guarantee that they never agree.}

For the protocol, we use a sequence $\rho_r$ of common random sources that each sample uniformly from $\{0,1\}$ for every $r\geqslant 0$.

The protocol works as follows, run from the perspective of $\mathcal{P}_i$:
\begin{enumerate}
	\item\label{ABBAfinish1} Upon receiving weak support for $FINISH(v)$ for some binary value $v$, broadcast $FINISH(v)$ if we haven't yet broadcast $FINISH(\mathunderscore)$.
	\item\label{ABBAfinish2} Upon receiving strong support for $FINISH(v)$, output $v$ and terminate.
	\item\label{ABBAsetup} Set $\mathsf{values}_i^r=\emptyset$ for all $r\geqslant 0$. Upon $\mathcal{P}_i$ providing an input value $v_{in}$, set $r=0$ and $\mathsf{est}_i^r=v_{in}$.
	\item\label{ABBAbeginLoop} Broadcast $INIT(\mathsf{est}_i^r, r)$.
	\item\label{ABBAinitReliability} Upon receiving weak support for $INIT(v,r)$, broadcast $INIT(v,r)$.
	\item\label{ABBAaddToValues} Upon receiving strong support for $INIT(v,r)$, add $v$ to $\mathsf{values}_i^r$ and broadcast $AUX(v,r)$ if we have not already broadcast $AUX(\mathunderscore, r)$.
	\item\label{ABBAaux} For every essential subset $S\in\mathsf{ES}_i$, wait until there exists some subset $T\subseteq S$, such that $\vert T\vert\geqslant q_S$ and from every node in $T$ we received $AUX(v,r)$ for some $v\in\mathsf{values}_i^r$ (possibly different $v$ for different nodes in $T$). Then broadcast $CONF(\mathsf{values}_i^r,r)$.
	\item\label{ABBAconfStep} For every essential subset $S\in\mathsf{ES}_i$, wait until there exists some subset $T\subseteq S$, such that $\vert T\vert\geqslant q_S$ and from every node in $T$ we received $CONF(C,r)$ for some $C\subseteq\mathsf{values}_i^r$ (possibly different $C$ for different nodes in $T$).
	\item\label{ABBArandom} Sample from $\rho_r$ and place its value in $s_r$.
	\item\label{ABBAfinal} If $|\mathsf{values}_i^r|=2$, then set $\mathsf{est}_i^{r+1}=s_r$. If $\mathsf{values}_i^r=\{v\}$ for some $v$, then set $\mathsf{est}_i^{r+1}=v$. If in fact $\mathsf{values}_i^r=\{s_r\}$, then additionally broadcast $FINISH(s_r)$ if we have not yet broadcast $FINISH(\mathunderscore)$.
	
	Set $r=r+1$ and return to step~\ref{ABBAbeginLoop}.
\end{enumerate}

The above protocol is defined asynchronously, so that once you get to some step in the protocol you keep running that step forever if its logic has not been satisfied by the time you get to the next step. So for instance, the logic involving the $FINISH$ messages in steps~\ref{ABBAfinish1} and~\ref{ABBAfinish2} should be continuously checked even after you get to the later steps.

The original protocol of Most\'efaoui et al. did not use the $CONF$ messages or the $FINISH$ messages. The $FINISH$ messages are necessary for guaranteeing consistency is a local property. The $CONF$ messages are necessary because our definition of a CRS is weaker than a true common coin as assumed in the original protocol. The use of $CONF$ messages in step~\ref{ABBAconfStep} ensures that if any node $\mathcal{P}_i$ gets to step~\ref{ABBAfinal} with $\mathsf{values}_i^r=\{v\}$, then the value of $s_r$ is practically independent of the value of $v$.

\subsubsection{Analysis}\label{sectionProofs-ABBA}

\begin{proposition}\label{propABBAAgreement}
	If two honest nodes $\mathcal{P}_i$ and $\mathcal{P}_j$ are linked, then they cannot output different binary values.
\end{proposition}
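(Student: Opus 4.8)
The plan is to reduce the agreement property entirely to lemma~\ref{lemmaSupportBlocking}, exploiting the fact that in this protocol a node commits to an output value only through the $FINISH$ layer. The first observation is that the only step at which any node produces an output is step~\ref{ABBAfinish2}, which fires precisely when the node sees strong support for $FINISH(v)$. Hence if $\mathcal{P}_i$ outputs $v$ and $\mathcal{P}_j$ outputs $v'$, then $\mathcal{P}_i$ must have seen strong support for $FINISH(v)$ and $\mathcal{P}_j$ must have seen strong support for $FINISH(v')$. The whole argument then comes down to showing that $FINISH(v)$ and $FINISH(v')$ are contradictory messages whenever $v\neq v'$.

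To establish this, I would inspect the two places where an honest node may broadcast a $FINISH$ message, namely steps~\ref{ABBAfinish1} and~\ref{ABBAfinal}. Both are guarded by the clause that the node broadcasts only if it has not yet broadcast a $FINISH$ message at all, so an honest node emits at most one $FINISH$ message over the entire run, and in particular never broadcasts both $FINISH(v)$ and $FINISH(v')$ for distinct binary values. This non-equivocation is exactly the hypothesis needed by lemma~\ref{lemmaSupportBlocking}. Applying that lemma to the shared essential subset guaranteed by the linkage of $\mathcal{P}_i$ and $\mathcal{P}_j$, the two nodes cannot both see strong support for the contradictory messages $FINISH(v)$ and $FINISH(v')$; therefore $v=v'$.

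The only point requiring care is that lemma~\ref{lemmaSupportBlocking} is stated for correct nodes, whereas here $\mathcal{P}_i$ and $\mathcal{P}_j$ are assumed only honest. However, ``seeing strong support'' is a statement purely about which messages a node has received, and the proof of lemma~\ref{lemmaSupportBlocking} never uses correctness of the two receivers: it relies only on the non-equivocation of the honest nodes inside the shared essential subset. Thus the blocking conclusion applies verbatim to honest receivers that have reached step~\ref{ABBAfinish2}. I expect this honest-versus-correct bookkeeping, together with confirming that an output truly occurs nowhere but step~\ref{ABBAfinish2}, to be the only genuine friction; the combinatorial heart of the argument is already packaged inside lemma~\ref{lemmaSupportBlocking}, which is precisely why introducing the $FINISH$ messages makes ABBA-consistency a local property in the same way RBC-consistency is.
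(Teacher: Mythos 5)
Your proof is correct and takes essentially the same route as the paper's: the paper likewise observes that an output occurs only via strong support for a $FINISH$ message in step~\ref{ABBAfinish2}, that an honest node broadcasts at most one $FINISH$ message, and then invokes lemma~\ref{lemmaSupportBlocking} to rule out contradictory strong support between linked nodes. Your additional remark that lemma~\ref{lemmaSupportBlocking} really only needs non-equivocation of the honest senders in the shared essential subset, not correctness of the two receiving nodes, is a valid piece of bookkeeping that the paper's one-line proof glosses over.
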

\begin{proof}
	Since an honest node can only broadcast a single $FINISH$ message, by the condition for outputting a binary value $v$ in step~\ref{ABBAfinish2} and lemma~\ref{lemmaSupportBlocking}, $\mathcal{P}_i$ and $\mathcal{P}_j$ cannot output different values.
\end{proof}

The above proposition shows why we use the $FINISH$ message. Note that the part of the protocol involving the $FINISH$ message is not present in Most\'efaoui et al.'s algorithm. The original version instead has nodes that get $\mathsf{values}_r=\{s_r\}$ for some round $r$ wait until they sample some CRS $\rho_{r'}$ with $r'>r$ that returns $s_{r'}=s_r$. This change is not fundamental to the open network model (indeed, the original version works fine in our model, and our version works fine in Most\'efaoui et al.'s model). However, as shown in \ref{propABBAAgreement}, adding the $FINISH$ message makes agreement a local property, which is a great bonus in the open network model. Thus we prefer the modified version, even though it incurs an extra communication round. Without using the $FINISH$ message step, the above proposition does not hold, since nodes can realize ABBA has terminated in different rounds, and unlinked nodes in a late terminator's UNL can shift their opinions to the opposite value after the earlier node has already terminated.

We now move onto proving termination and validity. These properties are significantly more involved than agreement, so we try to break the proofs into the smallest chunks possible.

Each round of the binary agreement protocol described in section~\ref{sectionProtocol-ABBA} breaks roughly into three phases. Similar to the case of RBC, the phases can be divided by the labels on the messages involved in each phase: the first phase is the ``initialization" phase, and comprises steps~\ref{ABBAinitReliability} and \ref{ABBAaddToValues} involving the $INIT$ messages; the second phase is the ``auxiliary" phase in steps~\ref{ABBAaddToValues} and \ref{ABBAaux} that involves the $AUX$ messages; the third phase is the ``confirmation" phase in steps~\ref{ABBAaddToValues} and \ref{ABBAconfStep} that involves the $CONF$ messages.

We begin by proving lemmas representing the correctness of the initialization phase.

\begin{lemma}\label{lemmaABBAValuesValidity}
	If $\mathcal{P}_i$ is unblocked and adds $v$ to $\mathsf{values}_i^r$, then there is some chain of unblocked nodes $\mathcal{P}_i=\mathcal{P}_{i_0},\mathcal{P}_{i_1},...,\mathcal{P}_{i_n}$, where for all $k\leqslant n$, $\mathcal{P}_{i_{k}}\in\mathsf{UNL}_{i_{k-1}}$, and $\mathsf{est}_{i_n}^r=v$.
\end{lemma}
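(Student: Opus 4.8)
The plan is to trace the propagation of $INIT(v,r)$ messages backwards from $\mathcal{P}_i$ to a node whose own round-$r$ estimate is $v$, peeling off one link of the chain at a time while maintaining the invariant that every node reached is unblocked. First I would observe that, by step~\ref{ABBAaddToValues}, $\mathcal{P}_i$ adds $v$ to $\mathsf{values}_i^r$ only after seeing strong support for $INIT(v,r)$, i.e.\ after receiving $INIT(v,r)$ from $q_S$ nodes in each essential subset $S\in\mathsf{ES}_i$. Fixing one such $S$ and using that $\mathcal{P}_i$ is unblocked, at most $\min\{t_S,n_S-q_S\}$ of these $q_S$ senders fail to be unblocked, so the number of unblocked senders is at least $q_S-\min\{t_S,n_S-q_S\}\geqslant 2q_S-n_S>t_S\geqslant 0$ by equation~\ref{eqTandQ1}. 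Hence at least one unblocked node $\mathcal{P}_{i_1}\in S\subseteq\mathsf{UNL}_i$ broadcast $INIT(v,r)$, which establishes the first link of the chain.

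Next I would analyze why an unblocked node broadcasts $INIT(v,r)$ at all. By the protocol there are exactly two possibilities: either it set $\mathsf{est}^r=v$ and broadcast in step~\ref{ABBAbeginLoop}, or it broadcast in step~\ref{ABBAinitReliability} after receiving weak support for $INIT(v,r)$, i.e.\ $INIT(v,r)$ from $t_S+1$ nodes in some $S\in\mathsf{ES}$ of its own. In the first case the chain terminates with $\mathsf{est}^r=v$, as desired. In the second case, since the node is unblocked, at most $\min\{t_S,n_S-q_S\}\leqslant t_S$ of those $t_S+1$ senders are not unblocked, so at least one unblocked node in that essential subset (hence in the node's UNL) broadcast $INIT(v,r)$ strictly earlier. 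This lets me extend the chain by one unblocked link while strictly decreasing the broadcast time.

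To finish, I would argue termination by well-founded descent: each node broadcasts $INIT(v,r)$ at most once for a fixed $(v,r)$, and each step of the construction moves to a node that broadcast strictly earlier, so no node can repeat; since the entire chain stays inside $\mathsf{UNL}_i^{\infty}$ — which is finite, as used in lemma~\ref{propRBCReadyBlocking} — the chain must stop, and it can only stop at a node that broadcast via step~\ref{ABBAbeginLoop}, yielding the required terminal node $\mathcal{P}_{i_n}$ with $\mathsf{est}_{i_n}^r=v$. The main thing to be careful about is preserving the unblocked invariant at every step: the whole descent works precisely because both the strong-support counting of the initial step and the weak-support counting of the inductive step leave at least one \emph{unblocked} sender, which is what guarantees we can keep applying the same case analysis to the next node rather than getting stuck at a node whose behavior the adversary controls.
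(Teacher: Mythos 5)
Your proof is correct and follows essentially the same route as the paper's: backward-chaining $INIT(v,r)$ messages through unblocked nodes (using the definition of unblocked plus equation~\ref{eqTandQ1} to extract an unblocked sender at each step) and terminating via finiteness of $\mathsf{UNL}_i^{\infty}$. The only difference is that you make explicit the counting that the paper leaves implicit, which is a faithful elaboration rather than a different argument.
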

\begin{proof}
	If $\mathcal{P}_i$ adds $v$ to $\mathsf{values}_i^r$, then certainly some unblocked node $\mathcal{P}_{i_1}\in\mathsf{UNL}_i$ must have broadcast $INIT(r,v)$ by the logic in step~\ref{ABBAaddToValues} for adding a value to $\mathsf{values}_i^r$. But an unblocked node $\mathcal{P}_{i_k}$ only broadcasts $INIT(r,v)$ if either $\mathsf{est}_{i_k}^r=v$ or there was some unblocked node in \textit{its} UNL that broadcast $INIT(r,v)$ before $\mathcal{P}_{i_k}$ did. By repeating, we successively build up the chain of unblocked nodes until we eventually reach some unblocked node that had $\mathsf{est}_{i_n}^r=v$, since $\mathsf{UNL}_i^{\infty}$ is finite implying that at some point we must reach an unblocked node that sent $INIT(r,v)$ before any other unblocked node in its UNL.
\end{proof}

\begin{lemma}\label{lemmaABBAValuesAgreement}
	If $\mathcal{P}_k$ is strongly connected and any honest node $\mathcal{P}_i\in\mathsf{UNL}_k^{\infty}$ adds $v$ to $\mathsf{values}_i^r$, then every unblocked node $\mathcal{P}_j\in\mathsf{UNL}_k^{\infty}$ will eventually add $v$ to $\mathsf{values}_j^r$.
\end{lemma}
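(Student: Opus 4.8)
The plan is to mirror the proof of reliability for reliable broadcast (proposition~\ref{propRBCReliability}), since the $INIT$ phase of binary agreement plays exactly the same structural role as the $READY$ phase of RBC: a node commits -- here by adding $v$ to $\mathsf{values}_i^r$, there by accepting $M$ -- precisely when it sees strong support for the relevant message, and step~\ref{ABBAinitReliability} re-echoes a message upon weak support just as step~\ref{RBCstep4} of RBC does. The strategy is therefore to show that strong support for $INIT(v,r)$ floods the extended UNL: it first converts to weak support at every unblocked node, which then forces each such node to rebroadcast $INIT(v,r)$, and this flood of $INIT(v,r)$ messages finally produces strong support, and hence the addition of $v$, at every unblocked node.

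Concretely I would proceed in four steps. First, since $\mathcal{P}_i$ adds $v$ to $\mathsf{values}_i^r$, the guard in step~\ref{ABBAaddToValues} forces it to have seen strong support for $INIT(v,r)$. Second, apply lemma~\ref{lemmaSupportTransfer}: because $\mathcal{P}_k$ is strongly connected, the witness to this strong support is fully linked with every unblocked (hence healthy and correct) node $\mathcal{P}_j\in\mathsf{UNL}_k^{\infty}$, so each such $\mathcal{P}_j$ eventually sees weak support for $INIT(v,r)$. Third, by step~\ref{ABBAinitReliability} every unblocked node rebroadcasts $INIT(v,r)$ upon seeing weak support; these nodes lie in $\mathsf{UNL}_k^{\infty}$ and, being honest, have their own UNLs contained in $\mathsf{UNL}_k^{\infty}$, so the flood stays inside $\mathsf{UNL}_k^{\infty}$ and the argument closes on itself. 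Fourth, fix any unblocked $\mathcal{P}_j$: since it is unblocked, at most $\min\{t_S,n_S-q_S\}\leqslant n_S-q_S$ nodes in each $S\in\mathsf{ES}_j$ fail to be unblocked, so at least $n_S-(n_S-q_S)=q_S$ nodes in each $S$ are unblocked and eventually broadcast $INIT(v,r)$ to $\mathcal{P}_j$; this is exactly strong support, so step~\ref{ABBAaddToValues} makes $\mathcal{P}_j$ add $v$ to $\mathsf{values}_j^r$. This last counting step is routine and parallels both proposition~\ref{propRBCReliability} and lemma~\ref{lemmaABBAValuesValidity}.

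The step requiring the most care is the initial transfer. Lemma~\ref{lemmaSupportTransfer} converts strong support at a source into weak support at a target only when the two are fully linked, and strong connectivity guarantees full linkage only between pairs of \emph{healthy} nodes, whereas $\mathcal{P}_i$ is assumed merely honest. When $\mathcal{P}_i$ is healthy the transfer is immediate and the proof reads off directly from proposition~\ref{propRBCReliability}; the delicate point is the gap between ``honest'' and the healthiness needed to invoke full linkage. I would close this gap by relocating the source of strong support to a healthy node of $\mathsf{UNL}_k^{\infty}$ before applying lemma~\ref{lemmaSupportTransfer}, using the ``first healthy node to broadcast'' technique of lemma~\ref{propRBCReadyBlocking}: since $\mathcal{P}_i$ is honest we have $\mathsf{UNL}_i\subseteq\mathsf{UNL}_k^{\infty}$, so the broadcasters that gave $\mathcal{P}_i$ its strong support already lie in $\mathsf{UNL}_k^{\infty}$, and I would trace step~\ref{ABBAinitReliability} against step~\ref{ABBAbeginLoop} to exhibit a healthy witness there. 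Verifying that such a healthy witness genuinely exists (equivalently, that $\mathcal{P}_i$'s essential subsets are not so degenerate that all its support comes from unhealthy nodes) is the one place the argument is not purely mechanical, and it is where I would spend the most attention; everything downstream of the transfer is a counting argument controlled by equations~\ref{eqTandQ0}--\ref{eqTandQ2} and the definition of unblocked.
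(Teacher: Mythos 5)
Your proof is, in essence, the paper's own: the paper disposes of this lemma with the single line ``Identical to the proof of proposition~\ref{propRBCReliability},'' and your four steps---strong support at $\mathcal{P}_i$, transfer to weak support at every unblocked node via lemma~\ref{lemmaSupportTransfer}, rebroadcast via step~\ref{ABBAinitReliability}, and the count $n_S-(n_S-q_S)=q_S$ giving strong support at every unblocked $\mathcal{P}_j$---are exactly that proof transcribed from $READY$ to $INIT$. If anything your transcription is cleaner than the original needs to be: since a node may broadcast $INIT(v,r)$ for both values of $v$ (there is no ``once per round'' guard on $INIT$, unlike $READY$), no analogue of lemma~\ref{propRBCReadyBlocking} is required before the rebroadcast step.

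On the honest-versus-healthy point that occupies your last paragraph: your suspicion is well founded, but the gap lies in the paper's statement, not in your argument, and it cannot be closed the way you suggest. Proposition~\ref{propRBCReliability} is stated for a \emph{healthy} $\mathcal{P}_i$; this lemma says \emph{honest}; and the cited ``identical'' proof genuinely needs healthiness, since strong connectivity yields full linkage only between pairs of healthy nodes. Your proposed repair---exhibiting a healthy witness among the nodes whose $INIT(v,r)$ messages gave $\mathcal{P}_i$ its strong support---fails in general: an honest but unhealthy $\mathcal{P}_i$ may have an essential subset whose members are almost all actively Byzantine and which belong to no healthy node's essential subsets, so its strong support can consist entirely of nodes that never cause any healthy node to see even weak support for $INIT(v,r)$, and the conclusion is then simply false for such a $\mathcal{P}_i$. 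The correct reading is that ``honest'' here is a slip for ``healthy'' (note that the MVBA analogue, lemma~\ref{lemmaDABCValidAgreement}, whose proof for $r>0$ is declared identical to this one, \emph{is} stated with ``healthy''), and nothing downstream is lost: every later invocation of this lemma applies it to unblocked nodes, which are healthy by definition. So the thing to do is prove the healthy case---which your first two paragraphs already do---rather than spend effort on the honest case.
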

\begin{proof}
	Identical to the proof of proposition~\ref{propRBCReliability}.
\end{proof}

\begin{lemma}\label{lemmaABBAValuesTermination}
	If $\mathcal{P}_k$ is strongly connected, every unblocked node in $\mathsf{UNL}_k^{\infty}$ gets to step~\ref{ABBAbeginLoop} for round $r$, and no unblocked nodes in $\mathsf{UNL}_k^{\infty}$ terminate in round $r$, then eventually every unblocked node $\mathcal{P}_j\in\mathsf{UNL}_k^{\infty}$ adds some value to $\mathsf{values}_j^r$.
\end{lemma}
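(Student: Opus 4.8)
The plan is to prove the slightly stronger statement that every unblocked node in $\mathsf{UNL}_k^{\infty}$ eventually adds a value, by first isolating a single binary value $v$ that \emph{every} unblocked node eventually broadcasts in an $INIT(v,r)$ message, and then reading off strong support mechanically. (One could instead invoke lemma~\ref{lemmaABBAValuesAgreement} to reduce the task to exhibiting just one node that adds a value, but pinning down the flooded value directly is equally clean and settles all nodes at once.) Throughout I would use that an unblocked node is honest and lies in $\mathsf{UNL}_k^{\infty}$, so its entire UNL --- and hence every member of each of its essential subsets --- also lies in $\mathsf{UNL}_k^{\infty}$ and therefore reaches step~\ref{ABBAbeginLoop} by hypothesis.

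First I would record two elementary facts. Since an unblocked $\mathcal{P}_j$ has at most $\min\{t_S, n_S-q_S\}\leqslant n_S-q_S$ non-unblocked members in each $S\in\mathsf{ES}_j$, every such $S$ contains at least $q_S$ unblocked nodes, each of which broadcasts a binary estimate in step~\ref{ABBAbeginLoop} that eventually reaches $\mathcal{P}_j$. And by step~\ref{ABBAinitReliability}, a node broadcasts $INIT(v,r)$ by amplification precisely once it receives weak support for it, i.e.\ once $t_S+1$ members of one of its own essential subsets have broadcast $v$. These two facts, together with the binary nature of the estimates and inequality~\ref{eqTandQ2}, are all the argument needs.

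The heart of the proof --- and the step I expect to be the main obstacle --- is to show that some fixed value $v$ is eventually broadcast (as an estimate or by amplification) by every unblocked node. Let $D_0$ and $D_1$ be the sets of unblocked nodes in $\mathsf{UNL}_k^{\infty}$ that eventually broadcast $INIT(0,r)$ and $INIT(1,r)$ respectively; since every unblocked node broadcasts its own estimate, $D_0\cup D_1$ is the whole set, and I claim $D_0$ or $D_1$ is already the whole set. Suppose not: pick unblocked $\mathcal{P}_a\notin D_1$ and unblocked $\mathcal{P}_b\notin D_0$. Because $\mathcal{P}_a$ never amplifies $1$, every $S\in\mathsf{ES}_a$ carries at most $t_S$ broadcasters of $1$ (otherwise step~\ref{ABBAinitReliability} would place $\mathcal{P}_a$ in $D_1$), and symmetrically every $S\in\mathsf{ES}_b$ carries at most $t_S$ broadcasters of $0$. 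Strong connectivity makes the healthy nodes $\mathcal{P}_a$ and $\mathcal{P}_b$ fully linked, so they share an essential subset $S^{*}\in\mathsf{ES}_a\cap\mathsf{ES}_b$ containing at least $q_{S^{*}}$ correct nodes, each of which broadcasts a binary estimate that reaches both $\mathcal{P}_a$ and $\mathcal{P}_b$. Yet at most $t_{S^{*}}$ of these correct nodes can broadcast $1$ (from $\mathcal{P}_a$'s constraint) and at most $t_{S^{*}}$ can broadcast $0$ (from $\mathcal{P}_b$'s constraint), forcing $q_{S^{*}}\leqslant 2t_{S^{*}}$ and contradicting inequality~\ref{eqTandQ2}. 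The subtlety to handle carefully is that the two constraints must be applied to the \emph{same} shared subset $S^{*}$, which is exactly what full linkage, and hence strong connectivity, supplies.

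Finally I would close the argument mechanically. Taking the value $v$ for which $D_v$ is the entire set of unblocked nodes, every unblocked node broadcasts $INIT(v,r)$. Then for any unblocked $\mathcal{P}_j$ and any $S\in\mathsf{ES}_j$, the at least $q_S$ unblocked members of $S$ all broadcast $INIT(v,r)$ and are eventually received by $\mathcal{P}_j$; since this holds for every $S\in\mathsf{ES}_j$, $\mathcal{P}_j$ sees strong support for $INIT(v,r)$ and adds $v$ to $\mathsf{values}_j^r$ by step~\ref{ABBAaddToValues}, which is exactly the conclusion.
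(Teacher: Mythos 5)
Your proof takes a genuinely different route from the paper's, and the core idea is sound. The paper works subset-by-subset: it defines the ``majority input'' $v_S$ of each essential subset $S$ (the estimate held by the majority of the at least $q_S$ unblocked nodes in $S$, so that more than $q_S/2>t_S$ of them broadcast it, triggering amplification in step~\ref{ABBAinitReliability}), and then splits into cases --- either some unblocked node has all of its subsets sharing one majority value $v$, in which case that node adds $v$ and lemma~\ref{lemmaABBAValuesAgreement} propagates it to everyone, or else every unblocked node owns two subsets with opposite majorities, in which case everyone broadcasts and adds both bits. Your dichotomy argument ($D_0$ or $D_1$ must be the whole set of unblocked nodes) instead pins down a single value flooded by every unblocked node, is self-contained (it never appeals to lemma~\ref{lemmaABBAValuesAgreement}), and runs on the same engine, inequality~\ref{eqTandQ2}; the paper buys its conclusion from an already-proved reliability lemma, while you buy a slightly stronger intermediate fact from scratch.

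However, one step needs repair. In the contradiction you take from full linkage a shared subset $S^{*}\in\mathsf{ES}_a\cap\mathsf{ES}_b$ ``containing at least $q_{S^{*}}$ correct nodes, each of which broadcasts a binary estimate'' --- but the lemma's hypothesis only puts \emph{unblocked} nodes of $\mathsf{UNL}_k^{\infty}$ at step~\ref{ABBAbeginLoop}; a correct but blocked node may never broadcast any estimate at all, so the correct-node count supplied by full linkage does not give you broadcasters. (The same overstatement appears in your opening paragraph, where you claim every member of every essential subset reaches step~\ref{ABBAbeginLoop} ``by hypothesis''.) The repair is already among your stated elementary facts: since $\mathcal{P}_a$ is unblocked, at most $\min\{t_{S^{*}},n_{S^{*}}-q_{S^{*}}\}\leqslant n_{S^{*}}-q_{S^{*}}$ members of $S^{*}$ fail to be unblocked, so at least $q_{S^{*}}$ members of $S^{*}$ are unblocked; these lie in $\mathsf{UNL}_a\subseteq\mathsf{UNL}_k^{\infty}$, hence by hypothesis each reaches step~\ref{ABBAbeginLoop} and broadcasts an estimate, and being correct their broadcasts eventually reach both $\mathcal{P}_a$ and $\mathcal{P}_b$. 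Your two bounds (at most $t_{S^{*}}$ correct broadcasters of $1$, at most $t_{S^{*}}$ of $0$) then apply to these unblocked nodes and force $q_{S^{*}}\leqslant 2t_{S^{*}}$, contradicting inequality~\ref{eqTandQ2} exactly as intended. In short: use full linkage only to produce $S^{*}$, and $\mathcal{P}_a$'s unblockedness to produce the $q_{S^{*}}$ broadcasters; with that substitution the proof is correct.
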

\begin{proof}
	For convenience, given some essential subset $S$ define the \textbf{majority input} $v_S$ to be the binary value set for $\mathsf{est}_i^r$ by the majority of unblocked nodes $\mathcal{P}_i\in S$. Then once all these unblocked nodes get to step~\ref{ABBAbeginLoop} in round $r$, if any unblocked node $\mathcal{P}_i$ listens to $S$ there must be at least $q_S$ unblocked nodes in $S$, so $\mathcal{P}_i$ will eventually receive $INIT(r,v_S)$ messages from more than $q_S/2>t_S$ nodes in $S$, causing $\mathcal{P}_i$ to broadcast $INIT(r,v_S)$ according to the condition in step~\ref{ABBAinitReliability}.
	
	Let $\mathcal{P}_i\in\mathsf{UNL}_k^{\infty}$ be some unblocked node. Suppose every essential subset $S\in\mathsf{ES}_i$ has the same majority vote $v$. Then since $\mathsf{P}_i\subseteq\mathsf{UNL}_k^{\infty}$, $\mathcal{P}_i$ is fully linked with every unblocked node in $\mathsf{UNL}_i$, so eventually every unblocked node in $\mathsf{UNL}_i$ broadcasts $INIT(r,v)$ by the preceding paragraph. Thus $\mathcal{P}_i$ adds $v$ to $\mathsf{values}_i^r$ in step~\ref{ABBAaddToValues}, and by lemma~\ref{lemmaABBAValuesAgreement} every node $\mathsf{P}_j\in\mathsf{UNL}_k^{\infty}$ also eventually adds $v$ to $\mathsf{values}_j^r$.
	
	It remains to show the case where every unblocked node in $\mathsf{UNL}_k^{\infty}$ maintains two essential subsets $S,S'$ with $v_S\neq v_{S'}$. But in this case by the first paragraph every unblocked node in $\mathsf{UNL}_k^{\infty}$ eventually broadcasts both $INIT(r,0)$ and $INIT(r,1)$. Thus every unblocked node $\mathsf{P}_j\in\mathsf{UNL}_k^{\infty}$ eventually adds \textit{both} $0$ and $1$ to $\mathsf{values}_j^r$.
\end{proof}

Note that in the previous lemma the reason why we needed to specify ``no unblocked nodes in $\mathsf{UNL}_k^{\infty}$ terminate in round $r$" is because a node can possibly terminate at any time if it receives enough $FINISH$ messages, and therefore stop participating before adding a value to $\mathsf{values}^r$.

We now move onto the auxiliary phase.

\begin{lemma}\label{lemmaABBAAuxAgreement}
	If two honest nodes $\mathcal{P}_i$ and $\mathcal{P}_j$ are linked, then if $\mathcal{P}_i$ continues to step~\ref{ABBAconfStep} in round $r$ with $\mathsf{values}_i^r=\{v\}$, $\mathcal{P}_j$ cannot continue to step~\ref{ABBAconfStep} in round $r$ with $\mathsf{values}_j^r=\{\neg v\}$.
\end{lemma}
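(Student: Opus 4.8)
The plan is to reduce both halves of the hypothesis to statements about \emph{strong support} for contradictory $AUX$ messages and then close with a single application of lemma~\ref{lemmaSupportBlocking}, exactly as in the proof of proposition~\ref{propABBAAgreement}.

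First I would unpack what ``continuing to step~\ref{ABBAconfStep} with $\mathsf{values}_i^r=\{v\}$'' means. To clear the wait in step~\ref{ABBAaux}, $\mathcal{P}_i$ needs, for every essential subset $S\in\mathsf{ES}_i$, a set $T\subseteq S$ with $|T|\geqslant q_S$ such that each node in $T$ sent $AUX(v',r)$ for some $v'\in\mathsf{values}_i^r$. The step~\ref{ABBAaux} condition deliberately allows different nodes in $T$ to report different values, so in general this evidence need not constitute strong support for any single message; the hypothesis $\mathsf{values}_i^r=\{v\}$ is precisely what removes this freedom, forcing $v'=v$ for every node in $T$. Hence at the moment $\mathcal{P}_i$ passes step~\ref{ABBAaux} it has received $AUX(v,r)$ from at least $q_S$ nodes in every $S\in\mathsf{ES}_i$, which is exactly strong support for $AUX(v,r)$. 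The identical argument applied to $\mathcal{P}_j$ shows that reaching step~\ref{ABBAconfStep} with $\mathsf{values}_j^r=\{\neg v\}$ yields strong support for $AUX(\neg v,r)$.

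Next I would observe that $AUX(v,r)$ and $AUX(\neg v,r)$ are contradictory: by step~\ref{ABBAaddToValues} an honest node broadcasts $AUX(\mathunderscore,r)$ at most once in round $r$, so no honest node can have sent both. Since $\mathcal{P}_i$ and $\mathcal{P}_j$ are honest and linked, lemma~\ref{lemmaSupportBlocking} forbids them from simultaneously seeing strong support for these two contradictory messages, giving the desired contradiction. As in proposition~\ref{propABBAAgreement}, lemma~\ref{lemmaSupportBlocking} applies here even though the two nodes are only assumed honest rather than correct, because its argument relies solely on the honesty of the common senders in the shared essential subset, not on any property of the two receivers themselves.

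The only genuinely delicate point is the ``value collapse'' in the first step: once it is established that the singleton hypothesis turns step~\ref{ABBAaux}'s per-node votes into uniform strong support for a single value, the remainder is a routine invocation of lemma~\ref{lemmaSupportBlocking}, so I expect no further obstacles.
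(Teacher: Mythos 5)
Your proof is correct and takes exactly the paper's route: the singleton hypothesis collapses the step~\ref{ABBAaux} wait condition into strong support for $AUX(v,r)$ (resp.\ $AUX(\neg v,r)$), and lemma~\ref{lemmaSupportBlocking} then forbids two linked nodes from simultaneously seeing strong support for these contradictory messages. Your added care about the value collapse and about lemma~\ref{lemmaSupportBlocking} applying to merely honest (not correct) receivers only spells out what the paper's two-line proof leaves implicit.
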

\begin{proof}
	In order to progress to step~\ref{ABBAconfStep} with $\mathsf{values}_i^r=\{v\}$, $\mathcal{P}_i$ must receive strong support for $AUX(v,r)$. The lemma thus holds immediately by lemma~\ref{lemmaSupportBlocking}.
\end{proof}

Note that the above proposition doesn't guarantee that $\mathcal{P}_j$ will continue to step~\ref{ABBAconfStep} with $\mathsf{values}_j^r=\{v\}$. Instead $\mathcal{P}_j$ might continue to step~\ref{ABBAconfStep} with $\mathsf{values}_j^r=\{0,1\}$.

\begin{lemma}\label{lemmaABBAAuxTermination}
	If $\mathcal{P}_k$ is strongly connected, every unblocked node in $\mathsf{UNL}_k^{\infty}$ gets to step~\ref{ABBAbeginLoop} for round $r$, and no unblocked nodes in $\mathsf{UNL}_k^{\infty}$ terminate in round $r$, then eventually every unblocked node in $\mathsf{UNL}_k^{\infty}$ either progresses to step~\ref{ABBAconfStep} in round $r$ or terminates.
\end{lemma}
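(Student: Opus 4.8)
The plan is to build directly on the two preceding lemmas about the initialization phase. First I would invoke Lemma~\ref{lemmaABBAValuesTermination}: under the stated hypotheses (strong connectivity, all unblocked nodes reaching step~\ref{ABBAbeginLoop} in round $r$, and no unblocked node terminating in round $r$), every unblocked node $\mathcal{P}_j\in\mathsf{UNL}_k^{\infty}$ eventually adds some value to $\mathsf{values}_j^r$. By the logic of step~\ref{ABBAaddToValues}, the moment such a node adds its first value it broadcasts an $AUX(v,r)$ message for some $v\in\mathsf{values}_j^r$. So every unblocked node in $\mathsf{UNL}_k^{\infty}$ eventually broadcasts an AUX message carrying a value drawn from its own values set.

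Next I would fix an arbitrary unblocked node $\mathcal{P}_i\in\mathsf{UNL}_k^{\infty}$ and an arbitrary essential subset $S\in\mathsf{ES}_i$, and exhibit the witness set $T$ required by step~\ref{ABBAaux}. The natural candidate is the set of unblocked nodes in $S$. Because $\mathcal{P}_i$ is unblocked, at most $\min\{t_S,n_S-q_S\}$ nodes of $S$ fail to be unblocked, so $S$ contains at least $n_S-(n_S-q_S)=q_S$ unblocked nodes, giving $\vert T\vert\geqslant q_S$. Each such node lies in $\mathsf{UNL}_i\subseteq\mathsf{UNL}_k^{\infty}$, and since it belongs to $S\subseteq\mathsf{UNL}_i$ the node $\mathcal{P}_i$ listens to it, so its broadcast AUX message is eventually received by $\mathcal{P}_i$.

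The crux is matching the AUX values against $\mathsf{values}_i^r$: step~\ref{ABBAaux} only counts a node of $T$ if the value $v$ it sent lies in $\mathcal{P}_i$'s own values set. Here I would lean on the agreement Lemma~\ref{lemmaABBAValuesAgreement}: each value $v$ carried by an AUX message from an unblocked node was added to that node's values set, and since strong connectivity forces any value added by an honest node in $\mathsf{UNL}_k^{\infty}$ to eventually be added by every unblocked node in $\mathsf{UNL}_k^{\infty}$, each such $v$ eventually lies in $\mathsf{values}_i^r$. Thus, after enough time, $\mathcal{P}_i$ has simultaneously received all the AUX messages from the $\geqslant q_S$ unblocked nodes of $S$ and has every one of their values in $\mathsf{values}_i^r$, so the waiting condition of step~\ref{ABBAaux} for $S$ is satisfied. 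Since the protocol rechecks this condition continuously and the argument applies to every $S\in\mathsf{ES}_i$, $\mathcal{P}_i$ eventually broadcasts $CONF(\mathsf{values}_i^r,r)$ and enters step~\ref{ABBAconfStep}.

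I expect the main obstacle to be precisely this value-matching step. Without the convergence guaranteed by Lemma~\ref{lemmaABBAValuesAgreement}, an unblocked node in $S$ could perfectly well send an AUX for a value that $\mathcal{P}_i$ has not (yet) placed in its own values set, leaving $\mathcal{P}_i$ unable to assemble a qualifying $T$ of size $q_S$ and stalling the auxiliary phase; the agreement lemma is exactly what rules this out, and I would take care that both the message-receipt and the value-membership conditions hold \emph{simultaneously} rather than merely each eventually. A secondary point to dispatch cleanly is the ``or terminates'' clause in the conclusion: a node may at any time receive strong support for a $FINISH$ message (steps~\ref{ABBAfinish1}--\ref{ABBAfinish2}) and terminate before completing step~\ref{ABBAaux}, and I would simply remark that in that case the conclusion already holds, so it suffices to establish progress to step~\ref{ABBAconfStep} for nodes that do not terminate.
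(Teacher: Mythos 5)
Your proof is correct and follows essentially the same route as the paper's: invoke lemma~\ref{lemmaABBAValuesTermination} to get an $AUX$ broadcast from every unblocked node, use lemma~\ref{lemmaABBAValuesAgreement} to ensure each broadcast value eventually lands in $\mathsf{values}_i^r$, and count at least $q_S$ unblocked nodes in each essential subset to satisfy the waiting condition of step~\ref{ABBAaux}. Your added care about the simultaneity of the receipt/membership conditions (both are monotone) and the explicit dispatch of the ``or terminates'' clause are details the paper leaves implicit, but they do not change the argument.
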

\begin{proof}
	By lemma~\ref{lemmaABBAValuesTermination}, eventually every unblocked node in $\mathsf{UNL}_k^{\infty}$ broadcasts an $AUX$ message in round $r$. Further, by lemma~\ref{lemmaABBAValuesAgreement} if any unblocked node $\mathcal{P}_i\in\mathsf{UNL}_k^{\infty}$ broadcasts $AUX(v,r)$ then eventually every unblocked node $\mathcal{P}_j\in\mathsf{UNL}_k^{\infty}$ adds $v$ to $\mathsf{values}_j^r$. Thus for any unblocked node $\mathcal{P}_j\in\mathsf{UNL}_k^{\infty}$, every unblocked node in $\mathsf{UNL}_j$ will broadcast $AUX(v,r)$ for some $v$ which is eventually added to $\mathsf{values}_j^r$, so eventually $\mathcal{P}_j$ can progress to step~\ref{ABBAconfStep} since there are at least $q_S$ unblocked nodes in every essential subset $S\in\mathsf{ES}_j$.
\end{proof}

Finally, we make three quick lemmas about the confirmation phase.

\begin{lemma}\label{lemmaABBAConfAgreement}
If two honest nodes $\mathcal{P}_i$ and $\mathcal{P}_j$ are linked, then if $\mathcal{P}_i$ continues to step~\ref{ABBAfinal} in round $r$ with $\mathsf{values}_i^r=\{v\}$, $\mathcal{P}_j$ cannot continue to step~\ref{ABBAconfStep} in round $r$ with $\mathsf{values}_j^r=\{\neg v\}$.
\end{lemma}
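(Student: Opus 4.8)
The statement is almost identical to lemma~\ref{lemmaABBAAuxAgreement}, except that now $\mathcal{P}_i$ progresses one step further, to step~\ref{ABBAfinal} rather than step~\ref{ABBAconfStep}, while the claim about $\mathcal{P}_j$ remains that it cannot reach step~\ref{ABBAconfStep} with the opposite singleton $\mathsf{values}_j^r=\{\neg v\}$. Since getting to step~\ref{ABBAfinal} requires first getting to (and passing) step~\ref{ABBAconfStep}, my plan is to simply reduce this to the already-proven lemma~\ref{lemmaABBAAuxAgreement}. The key observation is that if $\mathcal{P}_i$ continues to step~\ref{ABBAfinal} with $\mathsf{values}_i^r=\{v\}$, then in particular it must have continued to step~\ref{ABBAconfStep} with $\mathsf{values}_i^r=\{v\}$, because $\mathsf{values}_i^r$ only grows over time and it equals $\{v\}$ at the later step~\ref{ABBAfinal}.

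More carefully, I would first argue monotonicity: the set $\mathsf{values}_i^r$ only has elements added to it (by step~\ref{ABBAaddToValues}), never removed, so if $\mathsf{values}_i^r=\{v\}$ when $\mathcal{P}_i$ is at step~\ref{ABBAfinal}, then $\mathsf{values}_i^r$ was a subset of $\{v\}$ at every earlier point, and in particular was exactly $\{v\}$ when $\mathcal{P}_i$ passed the waiting condition of step~\ref{ABBAconfStep} (it cannot have been empty there, since the $CONF$ waiting condition of step~\ref{ABBAconfStep} requires receiving $CONF(C,r)$ messages with $C\subseteq\mathsf{values}_i^r$, which presupposes $\mathsf{values}_i^r$ is nonempty by the time the node broadcasts its own $CONF$ in step~\ref{ABBAaux}). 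Hence $\mathcal{P}_i$ in fact continued to step~\ref{ABBAconfStep} with $\mathsf{values}_i^r=\{v\}$.

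Given that reduction, lemma~\ref{lemmaABBAAuxAgreement} applies verbatim: two linked honest nodes $\mathcal{P}_i$ and $\mathcal{P}_j$ cannot have $\mathcal{P}_i$ continue to step~\ref{ABBAconfStep} with $\mathsf{values}_i^r=\{v\}$ while $\mathcal{P}_j$ continues to step~\ref{ABBAconfStep} with $\mathsf{values}_j^r=\{\neg v\}$. So I would close the proof by citing lemma~\ref{lemmaABBAAuxAgreement} directly. The main (and only) subtlety to get right is the monotonicity argument ensuring that a singleton $\mathsf{values}_i^r$ at the \emph{later} step was already that same singleton at the \emph{earlier} step; I should be explicit that $\mathsf{values}$ is non-decreasing so that $\{v\}$ at step~\ref{ABBAfinal} forces $\{v\}$ at step~\ref{ABBAconfStep}. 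Everything else is immediate, so I expect no real obstacle here — this lemma is essentially a one-line corollary of lemma~\ref{lemmaABBAAuxAgreement}, included separately only because the subsequent randomness/independence arguments need the conclusion phrased at the step~\ref{ABBAfinal} boundary.
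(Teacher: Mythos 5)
Your proof is correct and takes essentially the same route as the paper: the paper's own proof of lemma~\ref{lemmaABBAConfAgreement} simply reads ``Identical to the proof of lemma~\ref{lemmaABBAAuxAgreement}'', i.e.\ it too reduces the claim to the AUX-phase blocking argument (strong support for $AUX(v,r)$ versus $AUX(\neg v,r)$ via lemma~\ref{lemmaSupportBlocking}). Your explicit monotonicity bookkeeping --- that $\mathsf{values}_i^r$ only grows and is nonempty upon reaching step~\ref{ABBAconfStep}, so the singleton $\{v\}$ at step~\ref{ABBAfinal} forces $\{v\}$ at step~\ref{ABBAconfStep} --- merely spells out what the paper leaves implicit.
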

\begin{proof}
Identical to the proof of lemma~\ref{lemmaABBAAuxAgreement}.
\end{proof}

\begin{lemma}\label{lemmaABBAConfTermination}
If $\mathcal{P}_k$ is strongly connected, every unblocked node in $\mathsf{UNL}_k^{\infty}$ gets to step~\ref{ABBAbeginLoop} for round $r$, and no unblocked nodes in $\mathsf{UNL}_k^{\infty}$ terminate in round $r$, then eventually every unblocked node in $\mathsf{UNL}_k^{\infty}$ either progresses to step~\ref{ABBAfinal} in round $r$ or terminates.
\end{lemma}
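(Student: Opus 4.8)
The plan is to mirror the proof of lemma~\ref{lemmaABBAAuxTermination} one phase later, using lemma~\ref{lemmaABBAValuesAgreement} to discharge the new subset condition $C\subseteq\mathsf{values}_i^r$ that appears in step~\ref{ABBAconfStep} but had no analogue in the auxiliary phase. First I would invoke lemma~\ref{lemmaABBAAuxTermination}: under the stated hypotheses, every unblocked node in $\mathsf{UNL}_k^{\infty}$ either progresses to step~\ref{ABBAconfStep} in round $r$ or terminates. Since by assumption no unblocked node in $\mathsf{UNL}_k^{\infty}$ terminates in round $r$, each such node in fact reaches step~\ref{ABBAconfStep}, which means it first completed step~\ref{ABBAaux} and therefore broadcast a message $CONF(\mathsf{values}_i^r,r)$.

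Next I would fix an arbitrary unblocked node $\mathcal{P}_j\in\mathsf{UNL}_k^{\infty}$ and argue that it eventually meets the waiting condition of step~\ref{ABBAconfStep} for every $S\in\mathsf{ES}_j$. The content of any CONF message broadcast by an unblocked node $\mathcal{P}_\ell$ is the finite set $\mathsf{values}_\ell^r\subseteq\{0,1\}$. Because each such $\mathcal{P}_\ell\in\mathsf{UNL}_j\subseteq\mathsf{UNL}_k^{\infty}$ is honest, lemma~\ref{lemmaABBAValuesAgreement} guarantees that every value $\mathcal{P}_\ell$ adds to its values set is eventually added by $\mathcal{P}_j$ to $\mathsf{values}_j^r$; since $\{0,1\}$ has only two elements, after finitely many such additions we have $\mathsf{values}_\ell^r\subseteq\mathsf{values}_j^r$, so $\mathcal{P}_\ell$'s CONF message becomes admissible for $\mathcal{P}_j$'s step~\ref{ABBAconfStep} check. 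As $\mathcal{P}_j$ is unblocked, each $S\in\mathsf{ES}_j$ contains at least $q_S$ unblocked nodes, so $\mathcal{P}_j$ eventually receives admissible CONF messages from a subset $T\subseteq S$ with $\vert T\vert\geqslant q_S$ in every essential subset, and thus progresses to step~\ref{ABBAfinal}.

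The one genuinely new point — and the step I expect to require the most care — is exactly this subset condition $C\subseteq\mathsf{values}_j^r$, which is why a bare transcription of lemma~\ref{lemmaABBAAuxTermination} does not suffice. The thing to verify carefully is that the values set a node attaches to its CONF message can never contain an element that fails to reach $\mathcal{P}_j$; this is precisely what lemma~\ref{lemmaABBAValuesAgreement} provides, combined with the observation that values sets are monotonically growing subsets of the two-element set $\{0,1\}$, so ``eventually a subset'' is attained after only finitely many additions rather than requiring a limiting argument. Once this admissibility is established, the rest of the argument — counting $q_S$ unblocked senders per essential subset and concluding that $\mathcal{P}_j$ can leave step~\ref{ABBAconfStep} — is a direct copy of the auxiliary-phase reasoning.
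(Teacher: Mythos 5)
Your treatment of the step-\ref{ABBAconfStep} waiting condition is correct, and is in fact spelled out in more detail than the paper, which simply discharges it as ``an identical proof to lemma~\ref{lemmaABBAAuxTermination}'': the admissibility of each unblocked node's $CONF$ message does follow from lemma~\ref{lemmaABBAValuesAgreement} together with the observation that the attached sets are finite subsets of $\{0,1\}$. (One small mischaracterization: this subset condition is not really new; the auxiliary phase already has the analogous membership condition $v\in\mathsf{values}_i^r$ on $AUX$ messages, handled in exactly the same way.)

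However, there is a genuine gap at the end. Between step~\ref{ABBAconfStep} and step~\ref{ABBAfinal} sits step~\ref{ABBArandom}: sample $\rho_r$ and place its value in $s_r$; and this step is blocking. Completing the wait of step~\ref{ABBAconfStep} only brings $\mathcal{P}_j$ to the point of \emph{sampling} the CRS; it does not hand $\mathcal{P}_j$ the value $s_r$, and step~\ref{ABBAfinal} cannot be executed without $s_r$ (its logic refers to $s_r$ in every case). A CRS is a distributed protocol, and CRS-Termination only guarantees that outputs are ever produced once \emph{every} unblocked node in $\mathsf{UNL}_k^{\infty}$ has sampled. So the argument must be closed as follows: since your step-\ref{ABBAconfStep} argument applies to every unblocked node in $\mathsf{UNL}_k^{\infty}$, all of them sample $\rho_r$; then strong connectivity of $\mathcal{P}_k$ together with CRS-Termination gives that each of them eventually receives $s_r$ and hence progresses to step~\ref{ABBAfinal}. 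This is precisely the second half of the paper's two-sentence proof (``the lemma thus follows from CRS-Termination''), and it is the part your proposal omits entirely: your closing claim that the remainder is ``a direct copy of the auxiliary-phase reasoning'' fails exactly here, because the auxiliary phase contains no CRS step, so nothing in that reasoning can substitute for the appeal to CRS-Termination.
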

\begin{proof}
By an identical proof to lemma~\ref{lemmaABBAAuxTermination}, every unblocked node progresses to step~\ref{ABBArandom}. The lemma thus follows from CRS-Termination.
\end{proof}

The final lemma for this phase shows why the confirmation phase is needed. It prevents the adversary from ``gaming" the CRS to learn the value it returns in advance and using that information to artificially coordinate the system to prevent termination.

\begin{lemma}\label{lemmaABBAConfUnpredictability}
If $\mathcal{P}_k$ is strongly connected and some healthy node $\mathcal{P}_i\in\mathsf{UNL}_k^{\infty}$ progresses to step~\ref{ABBAfinal} in round $r$ with $\mathsf{values}_i^r=\{v\}$, then $|\mathrm{Pr}[s_r=v]-1/2|\leqslant\epsilon$ for some negligible $\epsilon$.
\end{lemma}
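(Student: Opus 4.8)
The plan is to reduce the statement to the CRS-Randomness property of $\rho_r$. Since $\rho_r$ samples uniformly from $\{0,1\}$, we have $P(v)=1/2$, so it suffices to exhibit $x=v$ as a value that the adversary is able to produce \emph{before any healthy node in $\mathsf{UNL}_i$ has sampled $\rho_r$}; CRS-Randomness then yields $|\mathrm{Pr}[s_r=v]-1/2|\leqslant\epsilon$ directly. Before doing that I would check the preconditions of CRS-Randomness for $\mathcal{P}_i$: since $\mathcal{P}_i$ is healthy and reaches step~\ref{ABBAfinal} it is correct and behaving according to protocol; weak connectivity follows because $\mathcal{P}_k$ is strongly connected, so $\mathcal{P}_i$ is fully linked with every healthy node in $\mathsf{UNL}_i^{\infty}$; and the bound of at most $t_S$ adversary-controlled nodes per $S\in\mathsf{ES}_i$ follows from $\mathcal{P}_i$ being healthy. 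I would also invoke lemma~\ref{lemmaABBAConfAgreement} to note that $v$ is the \emph{unique} possible singleton value in round $r$ among healthy nodes, so there is no ambiguity in what ``$v$'' refers to.

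The heart of the argument is to show that $v$ is committed before the first honest sample, and this is where the $CONF$ round does its work. By the ordering of the protocol, $\mathcal{P}_i$ completes step~\ref{ABBAconfStep} with $\mathsf{values}_i^r=\{v\}$ \emph{strictly before} it samples $\rho_r$ in step~\ref{ABBArandom}; hence for every $S\in\mathsf{ES}_i$ it received $CONF(C,r)$ with $C\subseteq\{v\}$ from a set $Q_i^S\subseteq S$ of at least $q_S$ nodes, and since an honest sender's $C$ is its own nonempty values-snapshot, each honest node in $Q_i^S$ genuinely broadcast $CONF(\{v\},r)$. Now let $\mathcal{P}_f$ be the first healthy node in $\mathsf{UNL}_i$ to sample $\rho_r$; it too must complete step~\ref{ABBAconfStep} before sampling, so it received $CONF$ messages from a set $Q_f^S\subseteq S$ of at least $q_S$ nodes. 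Choosing $S$ to be a shared essential subset witnessing the full linkage of $\mathcal{P}_i$ and $\mathcal{P}_f$ (which exists because both are healthy nodes of $\mathsf{UNL}_k^{\infty}$ and $\mathcal{P}_k$ is strongly connected), equation~\ref{eqTandQ1} gives $|Q_i^S|+|Q_f^S|-n_S>t_S$, so $Q_i^S\cap Q_f^S$ contains more than $t_S$ nodes and therefore at least one honest node $\mathcal{P}_h$. Because $\mathcal{P}_h$ broadcasts a single $CONF$ message to everyone that listens to it, the message counted by $\mathcal{P}_i$ and the one counted by $\mathcal{P}_f$ coincide, so $\mathcal{P}_h$ broadcast $CONF(\{v\},r)$.

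Finally, $\mathcal{P}_f$ received $\mathcal{P}_h$'s $CONF$ message before completing step~\ref{ABBAconfStep} and hence before it sampled $\rho_r$; since $\mathcal{P}_f$ is by definition the first healthy node in $\mathsf{UNL}_i$ to sample, $\mathcal{P}_h$ committed to the singleton $\{v\}$ strictly before \emph{any} healthy node in $\mathsf{UNL}_i$ sampled $\rho_r$. Thus the adversary, observing $\mathcal{P}_h$'s broadcast, can produce $x=v$ before the first such sample, and applying CRS-Randomness to $\mathcal{P}_i$ gives $|\mathrm{Pr}[s_r=v]-1/2|\leqslant\epsilon$. I expect the main obstacle to be precisely this timing/commitment step: arguing rigorously that the $CONF$ quorum intersection forces an honest commitment to $v$ \emph{before} the earliest honest sample, which is exactly the property the extra $CONF$ round (absent from Most\'efaoui et al.) is designed to provide, and which prevents the adversary from first learning $s_r$ and then steering nodes toward the opposite singleton. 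The remaining subtleties are bookkeeping: verifying the quorum-intersection count from equation~\ref{eqTandQ1} under the at-most-$t_S$-Byzantine assumption supplied by linkage, and confirming the CRS-Randomness preconditions for $\mathcal{P}_i$.
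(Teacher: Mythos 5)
Your proof is correct and takes essentially the same route as the paper's: both argue that the $CONF$ phase pins the singleton value $v$ via an honest broadcast made before any healthy node in $\mathsf{UNL}_i$ samples $\rho_r$ (the paper citing lemma~\ref{lemmaABBAAuxAgreement} together with strong connectivity, you exhibiting the honest witness $\mathcal{P}_h$ in the intersection of the two $CONF$ quorums), and both then conclude by CRS-Randomness. Your explicit first-sampler and quorum-intersection bookkeeping is simply a more detailed rendering of the step the paper compresses into its one-sentence appeal to strong connectivity, so this is the same proof spelled out more carefully rather than a different one.
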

\begin{proof}
	In order for $\mathcal{P}_i$ to progress to step~\ref{ABBAfinal} in round $r$ with $\mathsf{values}_i^r=\{v\}$, $\mathcal{P}_i$ must have received strong support for $CONF(\{v\},r)$. By strong connectivity of $\mathcal{P}_k$, then any healthy node $\mathcal{P}_j\in\mathsf{UNL}_k^{\infty}$ that samples $\rho_r$ in step~\ref{ABBArandom} must have done so \textit{after} receiving strong support for $CONF(\{v\},r)$ from some healthy node in $\mathsf{UNL}_k^{\infty}$. By lemma~\ref{lemmaABBAAuxAgreement}, it cannot be the case that one healthy node in $\mathsf{UNL}_k^{\infty}$ broadcast $CONF(\{0\},r)$ while another healthy node broadcast $CONF(\{1\},r)$; thus the value of $v$ must have been determined \textit{before} $\mathcal{P}_j$ sampled $\rho_r$. Since $\rho_r$ samples randomly from $\{0,1\}$, by CRS-Randomness $|\textrm{Pr}[s_r=v]-1/2|\leqslant \epsilon$ for negligible $\epsilon$.
\end{proof}

We need two more quick lemmas that don't tie into either of the above ``phases", but rather deal with the correctness of the overall algorithm.

\begin{lemma}\label{lemmaABBAFinishValidity}
	If $\mathcal{P}_i$ is unblocked and outputs the value $v$, then there is some chain of unblocked nodes $\mathcal{P}_i=\mathcal{P}_{i_0},\mathcal{P}_{i_1},...,\mathcal{P}_{i_n}$, where for all $k\leqslant n$, $\mathcal{P}_{i_{k}}\in\mathsf{UNL}_{i_{k-1}}$, and the node $\mathcal{P}_{i_n}$ broadcast $FINISH(v)$ due to the logic in step~\ref{ABBAfinal}.
\end{lemma}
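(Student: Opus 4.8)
Lemma~\ref{lemmaABBAFinishValidity} asserts a chain-tracing validity property for the $FINISH$ mechanism, exactly parallel to Lemma~\ref{lemmaABBAValuesValidity} for the $\mathsf{values}^r$ sets. The plan is to unwind the two ways an unblocked node can come to broadcast a $FINISH$ message and trace backward along UNL membership until reaching a node that broadcast $FINISH(v)$ directly because of the loop logic in step~\ref{ABBAfinal}.

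First I would observe that by step~\ref{ABBAfinish2}, an unblocked node $\mathcal{P}_i$ outputs $v$ only upon receiving strong support for $FINISH(v)$. Strong support means $\mathcal{P}_i$ received $FINISH(v)$ from $q_S$ nodes in every essential subset $S\in\mathsf{ES}_i$; since $\mathcal{P}_i$ is unblocked, at most $\min\{t_S,n_S-q_S\}$ nodes in each such $S$ are not unblocked, and by equation~\ref{eqTandQ1} the count $q_S$ exceeds $n_S-q_S \geqslant t_S$, so some unblocked node in $\mathsf{UNL}_i$ must have broadcast $FINISH(v)$. This establishes the first step of the chain, placing an unblocked $\mathcal{P}_{i_1}\in\mathsf{UNL}_i$ that broadcast $FINISH(v)$.

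Next I would examine how any unblocked node $\mathcal{P}_{i_k}$ comes to broadcast $FINISH(v)$. There are exactly two places in the protocol where an honest node broadcasts a $FINISH$ message: step~\ref{ABBAfinish1}, triggered by weak support for $FINISH(v)$, and step~\ref{ABBAfinal}, where the node broadcasts $FINISH(s_r)$ when $\mathsf{values}^r=\{s_r\}$. In the former case, weak support means $\mathcal{P}_{i_k}$ received $FINISH(v)$ from $t_S+1$ nodes in some $S\in\mathsf{ES}_{i_k}$; since $\mathcal{P}_{i_k}$ is unblocked, at most $\min\{t_S,n_S-q_S\}\leqslant t_S$ of those are not unblocked, so some unblocked node in $\mathsf{UNL}_{i_k}$ must have broadcast $FINISH(v)$ strictly before $\mathcal{P}_{i_k}$ did. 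This lets me extend the chain by one link while strictly decreasing the ``broadcast time'' of the $FINISH(v)$ message along the chain.

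Finally I would invoke finiteness: the entire backward trace lives inside $\mathsf{UNL}_i^{\infty}$, which is finite, and each application of the step~\ref{ABBAfinish1} case pushes to an earlier broadcaster, so the process cannot continue indefinitely. Hence it must terminate at some unblocked $\mathcal{P}_{i_n}$ whose broadcast of $FINISH(v)$ arose from the step~\ref{ABBAfinal} logic rather than from weak support, which is precisely the claimed terminus of the chain. The main subtlety here — as opposed to an outright obstacle — is pinning down the well-foundedness argument: one must be careful that the ``first unblocked node to broadcast $FINISH(v)$'' is well defined, relying on the finiteness of $\mathsf{UNL}_i^{\infty}$ together with the fact that each step~\ref{ABBAfinish1} broadcast is causally preceded by an earlier $FINISH(v)$ broadcast from an unblocked UNL member, exactly as in the proof of Lemma~\ref{lemmaABBAValuesValidity}.
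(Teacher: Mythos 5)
Your proof is correct and follows essentially the same route as the paper, whose proof of this lemma simply reads ``identical to the proof of lemma~\ref{lemmaABBAValuesValidity}'' --- i.e.\ exactly your backward chain-trace through unblocked $FINISH(v)$ broadcasters, terminating via finiteness of $\mathsf{UNL}_i^{\infty}$ at a node that invoked the step~\ref{ABBAfinal} logic. One trivial slip: your claim $n_S-q_S\geqslant t_S$ is not guaranteed by equations~\ref{eqTandQ0}--\ref{eqTandQ2} (it is part of the definition of full linkage, not a general fact), but it is also unnecessary, since $q_S>\min\{t_S,n_S-q_S\}$ already follows from $q_S>n_S-q_S$ (or from equation~\ref{eqTandQ2}), which is all your counting argument needs.
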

\begin{proof}
	Identical to the proof of lemma~\ref{lemmaABBAValuesValidity}.
\end{proof}

\begin{lemma}\label{lemmaABBAOutputConsistency}
	If $\mathcal{P}_k$ is strongly connected, and in some round $r$ a healthy node $\mathcal{P}_i\in\mathsf{UNL}_k^{\infty}$ gets to step~\ref{ABBAfinal} with $\mathsf{values}_i^r=\{s_r\}$ where $s_r$ is the value obtained from the random oracle $\rho_r$, then for every $r'>r$, any healthy node $\mathcal{P}_j\in\mathsf{UNL}_k^{\infty}$ that begins round $r'$ does so with $\mathsf{est}_j^{r'}=s_r$.
\end{lemma}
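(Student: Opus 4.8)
The plan is to induct on the round index $r' > r$, showing at each stage that every healthy node in $\mathsf{UNL}_k^{\infty}$ which reaches step~\ref{ABBAfinal} of the relevant round carries the estimate $s_r$ into the next round. Throughout I would use that, since $\mathcal{P}_k$ is strongly connected, every pair of healthy nodes in $\mathsf{UNL}_k^{\infty}$ is fully linked and in particular linked. Two standing facts are also worth recording up front: $\mathsf{values}^{r'}$ only ever grows (values are added in step~\ref{ABBAaddToValues} and never removed), and $\mathsf{UNL}_j\subseteq\mathsf{UNL}_k^{\infty}$ for every honest $\mathcal{P}_j\in\mathsf{UNL}_k^{\infty}$, so any chain built through the UNLs of healthy nodes stays inside $\mathsf{UNL}_k^{\infty}$. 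Finally, by CRS-Consistency all these linked healthy nodes sample the same value $s_r$ from $\rho_r$, so I may speak of a single shared $s_r$.

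For the base case $r'=r+1$, fix a healthy node $\mathcal{P}_j\in\mathsf{UNL}_k^{\infty}$ that begins round $r+1$; then $\mathcal{P}_j$ must have passed step~\ref{ABBAconfStep} of round $r$. Since $\mathcal{P}_i$ continued to step~\ref{ABBAfinal} with $\mathsf{values}_i^r=\{s_r\}$ and $\mathcal{P}_i,\mathcal{P}_j$ are linked, lemma~\ref{lemmaABBAConfAgreement} (with $v=s_r$) forbids $\mathsf{values}_j^r=\{\neg s_r\}$ at step~\ref{ABBAconfStep}; as $\mathsf{values}$ only enlarges afterward and is nonempty once $\mathcal{P}_j$ has broadcast an $AUX$ message, at step~\ref{ABBAfinal} we have $\mathsf{values}_j^r\in\{\{s_r\},\{0,1\}\}$. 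In the first case the single-value rule of step~\ref{ABBAfinal} sets $\mathsf{est}_j^{r+1}=s_r$; in the second the two-value rule sets $\mathsf{est}_j^{r+1}=s_r$ as well, using CRS-Consistency so that $\mathcal{P}_j$'s sampled coin value coincides with $s_r$. Either way $\mathsf{est}_j^{r+1}=s_r$.

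For the inductive step, assume every healthy node in $\mathsf{UNL}_k^{\infty}$ that begins round $r'$ does so with estimate $s_r$, and let $\mathcal{P}_j\in\mathsf{UNL}_k^{\infty}$ be healthy and begin round $r'+1$, so it completed step~\ref{ABBAfinal} of round $r'$. I would rerun the argument of lemma~\ref{lemmaABBAValuesValidity}, but with ``healthy'' in place of ``unblocked'': any value $v$ added to $\mathsf{values}_j^{r'}$ requires strong support for $INIT(v,r')$, and since at most $t_S$ nodes per essential subset of $\mathcal{P}_j$ are unhealthy while $q_S-t_S>n_S-q_S\geq 0$ by equations~\ref{eqTandQ1} and~\ref{eqTandQ0}, some healthy node in one of those essential subsets must have broadcast $INIT(v,r')$. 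That node either had $\mathsf{est}^{r'}=v$ or saw weak support for $INIT(v,r')$ (forcing yet another healthy broadcaster), and finiteness of $\mathsf{UNL}_k^{\infty}$ terminates the chain at a healthy node in $\mathsf{UNL}_k^{\infty}$ whose round-$r'$ estimate is $v$. By the inductive hypothesis that estimate is $s_r$, so $v=s_r$ and hence $\mathsf{values}_j^{r'}\subseteq\{s_r\}$. Since $\mathcal{P}_j$ reached step~\ref{ABBAfinal}, $\mathsf{values}_j^{r'}$ is nonempty, so it equals $\{s_r\}$ and the single-value rule sets $\mathsf{est}_j^{r'+1}=s_r$, closing the induction.

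I expect the inductive step to be the main obstacle, specifically the bookkeeping needed to run the validity chain at the level of \emph{healthy} rather than unblocked nodes -- so that lemma~\ref{lemmaABBAValuesValidity} cannot simply be cited verbatim -- together with verifying that the chain never escapes $\mathsf{UNL}_k^{\infty}$ and that every chain endpoint has genuinely begun round $r'$ so the hypothesis applies. The base case is comparatively mechanical once lemma~\ref{lemmaABBAConfAgreement} excludes the wrong singleton and CRS-Consistency pins down the shared coin value.
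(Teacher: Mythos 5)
Your proof is correct and takes essentially the same route as the paper's: the same induction on rounds, with the base case handled by the Aux/Conf agreement lemma plus CRS-Consistency for the shared coin value, and the inductive step by the $INIT$-support chain argument. Indeed, your reworking of the chain argument of lemma~\ref{lemmaABBAValuesValidity} at the level of \emph{healthy} nodes is slightly more careful than the paper, which cites that lemma (stated for unblocked nodes) while applying it to healthy ones; your use of lemma~\ref{lemmaABBAConfAgreement} where the paper cites lemma~\ref{lemmaABBAAuxAgreement} is an immaterial difference, since the two lemmas are interchangeable here.
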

\begin{proof}
	Suppose in round $r'$ every healthy node $\mathcal{P}_j\in\mathsf{UNL}_k^{\infty}$ that begins round $r'$ does so with $\mathsf{est}_j^{r'}=s_r$. By taking the contrapositive of lemma~\ref{lemmaABBAValuesValidity}, one finds that every healthy node that gets to step~\ref{ABBAfinal} in round $r'$ must do so with $\mathsf{values}_{r'}=\{s_r\}$. Thus every healthy node $\mathcal{P}_j$ that begins round $r'+1$ does so with $\mathsf{est}_j^{r'+1}=s_r$.
	
	Therefore by induction it suffices to show that if in some round $r$ a healthy node $\mathcal{P}_i\in\mathsf{UNL}_k^{\infty}$ gets to step~\ref{ABBAfinal} with $\mathsf{values}_r=\{s_r\}$, then every healthy node $\mathcal{P}_j\in\mathsf{UNL}_k^{\infty}$ that begins round $r+1$ does so with $\mathsf{est}_j^{r+1}=s_r$. But by lemma~\ref{lemmaABBAAuxAgreement} and the assumption that $\mathcal{P}_k$ is strongly connected, any healthy node $\mathcal{P}_j\in\mathsf{UNL}_k^{\infty}$ that gets to step~\ref{ABBAfinal} in round $r$ must do so with either $\mathsf{values}_j^r=\{s_r\}$ or $\mathsf{values}_j^r=\{0,1\}$. In the former case, $\mathcal{P}_j$ continues to round $r+1$ with $\mathsf{est}_j^{r+1}=s_r$. In the latter case, $\mathcal{P}_j$ takes the value obtained from $\rho_r$ as $\mathsf{est}_j^{r+1}$; but by CRS-Agreement $\mathcal{P}_j$ outputs the same random value as $\mathcal{P}_i$, so again $\mathcal{P}_j$ continues to round $r+1$ with $\mathsf{est}_j^{r+1}=s_r$.
\end{proof}

Now with all of those lemmas out of the way, we can finally prove the correctness of the overall algorithm.

\begin{proposition}\label{propABBAStrongValidity}
	If $\mathcal{P}_i$ is unblocked and outputs $v$, then there is some chain of unblocked nodes $\mathcal{P}_i=\mathcal{P}_{i_0},\mathcal{P}_{i_1},...,\mathcal{P}_{i_n}$, where for all $k\leqslant n$, $\mathcal{P}_{i_{k}}\in\mathsf{UNL}_{i_{k-1}}$, and the node $\mathcal{P}_{i_n}$ input $v$.
\end{proposition}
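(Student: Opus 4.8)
The plan is to first establish, by induction on the round number $r$, the auxiliary claim that whenever an unblocked node $\mathcal{P}$ holds $\mathsf{est}_{\mathcal{P}}^{r}=v$, there is a chain of unblocked nodes (in the format of the proposition statement) running from $\mathcal{P}$ to some unblocked node that \emph{input} $v$. Once this claim is available, I would combine it with lemma~\ref{lemmaABBAFinishValidity} to conclude: that lemma already reduces ``$\mathcal{P}_i$ outputs $v$'' to reaching, via a chain of unblocked nodes, some unblocked node $\mathcal{P}_{i_n}$ that broadcast $FINISH(v)$ due to the logic in step~\ref{ABBAfinal}. Since a node only broadcasts $FINISH(v)$ there when $\mathsf{values}^{r}=\{v\}$ in some round $r$, such a node in particular added $v$ to its $\mathsf{values}^{r}$, so lemma~\ref{lemmaABBAValuesValidity} supplies a further chain from $\mathcal{P}_{i_n}$ to an unblocked node with $\mathsf{est}^{r}=v$, at which point the auxiliary claim applies. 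All the chains share their endpoints and consist of unblocked nodes with consecutive UNL membership, so they concatenate into a single chain of the required form.

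For the base case $r=0$ of the auxiliary claim, $\mathsf{est}_{\mathcal{P}}^{0}=v$ means $\mathcal{P}$ set its estimate directly from its input in step~\ref{ABBAsetup}, so the trivial one-node chain suffices. For the inductive step with $r>0$, the value $\mathsf{est}_{\mathcal{P}}^{r}$ was assigned in step~\ref{ABBAfinal} of round $r-1$. The crucial observation --- and the single place the argument needs genuine care --- is that in every branch of step~\ref{ABBAfinal} the assigned value lies in $\mathsf{values}_{\mathcal{P}}^{r-1}$: if $\mathsf{values}_{\mathcal{P}}^{r-1}=\{w\}$ then $\mathsf{est}_{\mathcal{P}}^{r}=w$ trivially belongs to the set, while if $|\mathsf{values}_{\mathcal{P}}^{r-1}|=2$ then $\mathsf{values}_{\mathcal{P}}^{r-1}=\{0,1\}$ contains both bits and hence contains the common random source output $v=s_{r-1}$ that was adopted. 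Thus in all cases $v\in\mathsf{values}_{\mathcal{P}}^{r-1}$.

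Having secured $v\in\mathsf{values}_{\mathcal{P}}^{r-1}$, I would apply lemma~\ref{lemmaABBAValuesValidity} to obtain a chain of unblocked nodes from $\mathcal{P}$ to some unblocked node $\mathcal{Q}$ with $\mathsf{est}_{\mathcal{Q}}^{r-1}=v$, then invoke the inductive hypothesis on $\mathcal{Q}$ at round $r-1$ to obtain a chain from $\mathcal{Q}$ to an input-$v$ node, and concatenate. The main obstacle is precisely the branch where $\mathsf{est}_{\mathcal{P}}^{r}$ originates from the common random source rather than from a unanimous $\mathsf{values}$ set: one naively worries that a CRS-derived estimate carries no provenance in any node's actual input, which would break the downward induction. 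The resolution is the observation above --- a CRS value is adopted only when $\mathsf{values}^{r-1}=\{0,1\}$, and because that set already contains $v$, lemma~\ref{lemmaABBAValuesValidity} still traces $v$ back through the estimates --- so the induction goes through and the concatenation of all the chains yields the required chain terminating at a node that input $v$.
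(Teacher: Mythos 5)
Your proposal is correct and follows essentially the same route as the paper's own proof: both start from lemma~\ref{lemmaABBAFinishValidity}, then repeatedly apply lemma~\ref{lemmaABBAValuesValidity} to walk backwards one round at a time until reaching round $0$, where the estimate coincides with an input. Your explicit observation that the estimate assigned in step~\ref{ABBAfinal} always lies in $\mathsf{values}^{r-1}$ (even in the CRS branch, since then $\mathsf{values}^{r-1}=\{0,1\}\ni v$) is exactly the fact the paper uses implicitly when it asserts the node ``must have gotten to step~\ref{ABBAfinal} in round $r'-1$ with $v\in\mathsf{values}^{r'-1}$''; your formal induction merely cleans up the paper's ``repeat the above logic'' phrasing.
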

\begin{proof}
	By lemma~\ref{lemmaABBAFinishValidity}, we can construct a chain of unblocked nodes $\mathcal{P}_i=\mathcal{P}_{i_0},\mathcal{P}_{i_1},...,\mathcal{P}_{i_{n_{r+1}}}$, where for all $k\leqslant n_{r+1}$, $\mathcal{P}_{i_{k}}\in\mathsf{UNL}_{i_{k-1}}$, and the node $\mathcal{P}_{i_n}$ broadcast $FINISH(v)$ due to the logic in step~\ref{ABBAfinal} in round $r$ for some $r\geqslant 0$. In particular,  $\mathcal{P}_{i_{n_{r+1}}}$ gets to step~\ref{ABBAfinal} in round $r$ with $\mathsf{values}_{i_{n_{r+1}}}^{r}=\{v\}$.
	
	We work backwards from $r$ to extend the chain until it reaches an unblocked node that input $v$.
	
	Let $r'\leqslant r$ and suppose $\mathcal{P}_{i_{n_{r'+1}}}$ is some unblocked node that gets to step~\ref{ABBAfinal} in round $r'$ with $v\in\mathsf{values}_{i_{n_{r'+1}}}^{r'}$. By lemma~\ref{lemmaABBAValuesValidity}, there is some chain of unblocked nodes $\mathcal{P}_{i_{n_{r'+1}}},\mathcal{P}_{i_{n_{r'+1}}+1},...,\mathcal{P}_{i_{n_{r'}}}$, where for all $k\leqslant n$, $\mathcal{P}_{i_{k}}\in\mathsf{UNL}_{i_{k-1}}$ and $\mathsf{est}_{i_{r'}}^{r'}=v$. But then either $r'=0$ and $\mathcal{P}_{i_{r'}}$ input $v$, or $r'>0$ and $\mathcal{P}_{i_{r'}}$ must have gotten to step~\ref{ABBAfinal} in round $r'-1$ with $v\in\mathsf{values}_{i_{r'}}^{r'-1}$.
	
	By repeating the above logic until we reach $r'=0$, we build out a chain $\mathcal{P}_i=\mathcal{P}_{i_0},\mathcal{P}_{i_1},...,\mathcal{P}_{i_{n_0}}$ satisfying the requirements of the proposition.
\end{proof}

\begin{proposition}\label{propABBATermination}
	If $\mathcal{P}_k$ is strongly connected and every unblocked node in $\mathsf{UNL}_k^{\infty}$ provides some input to the algorithm, then eventually every unblocked node in $\mathsf{UNL}_k^{\infty}$ terminates with probability $1$.
\end{proposition}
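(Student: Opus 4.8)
The plan is to argue in two layers: a deterministic ``progress'' layer showing that, as long as no unblocked node has terminated, every unblocked node in $\mathsf{UNL}_k^{\infty}$ advances synchronously through each round; and a probabilistic layer showing that with probability $1$ the nodes eventually reach a round after which all their estimates are permanently locked to a single value, from which termination is forced. For the progress layer I would chain lemmas~\ref{lemmaABBAValuesTermination}, \ref{lemmaABBAAuxTermination}, and~\ref{lemmaABBAConfTermination}: assuming every unblocked node reaches step~\ref{ABBAbeginLoop} of round $r$ and none terminates in round $r$, these lemmas show that every unblocked node adds a value to $\mathsf{values}^r$, then clears the auxiliary phase, then reaches step~\ref{ABBAfinal} (the last step invoking CRS-Termination so that the sample of $\rho_r$ returns). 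Since each node sets $\mathsf{est}^{r+1}$ at step~\ref{ABBAfinal}, an induction on $r$ gives that, absent termination, all unblocked nodes begin every round together.

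The core is the per-round probabilistic step. Fix a round $r$ and condition on the history up to the moment the nodes sample $\rho_r$. By lemmas~\ref{lemmaABBAAuxAgreement} and~\ref{lemmaABBAConfAgreement}, no two healthy nodes in $\mathsf{UNL}_k^{\infty}$ can reach step~\ref{ABBAfinal} with opposite singletons, so there is at most one singleton value $v$ occurring among them, and every unblocked node reaches step~\ref{ABBAfinal} with either $\mathsf{values}^r=\{v\}$ or $\mathsf{values}^r=\{0,1\}$. If no node holds a singleton, then every node sets $\mathsf{est}^{r+1}=s_r$ and they agree trivially; if some node holds the singleton $\{v\}$, then by lemma~\ref{lemmaABBAConfUnpredictability} the coin satisfies $|\mathrm{Pr}[s_r=v]-1/2|\leqslant\epsilon$, and on the event $s_r=v$ the singleton nodes keep $v$ while the $\{0,1\}$ nodes adopt $s_r=v$, so again all unblocked nodes enter round $r+1$ with the common estimate $v=s_r$. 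Hence in either case, with probability at least $1/2-\epsilon$, every unblocked node begins round $r+1$ with the same estimate.

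Once all unblocked nodes share an estimate $v^*$ entering a round, I would show the agreement is permanent and forces termination. By the contrapositive of lemma~\ref{lemmaABBAValuesValidity}, only $v^*$ can ever be added to any unblocked node's values set thereafter, so from that round on every unblocked node has $\mathsf{values}=\{v^*\}$ and keeps $\mathsf{est}=v^*$ (this is essentially lemma~\ref{lemmaABBAOutputConsistency}, with CRS-Consistency ensuring the nodes agree on the sampled coin). In the first subsequent round whose coin satisfies $s_r=v^*$, the condition $\mathsf{values}=\{s_r\}$ in step~\ref{ABBAfinal} holds for every unblocked node simultaneously, so they all broadcast $FINISH(v^*)$; each unblocked node then sees strong support for $FINISH(v^*)$ from its essential subsets and outputs $v^*$ via steps~\ref{ABBAfinish1} and~\ref{ABBAfinish2}. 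Because the coins are uniform and fresh, such a matching round occurs eventually with probability $1$. I would also note that if any unblocked node terminates earlier, the $FINISH$-amplification in steps~\ref{ABBAfinish1} and~\ref{ABBAfinish2} propagates termination to all unblocked nodes exactly as in the reliability argument of proposition~\ref{propRBCReliability} (via lemma~\ref{lemmaSupportTransfer}), so it suffices that a single unblocked node ever terminates.

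The hard part will be the probabilistic bookkeeping rather than any single deterministic step. I must argue that the per-round ``good event'' has probability at least $1/2-\epsilon$ conditioned on \emph{arbitrary} adversarial history, so that the events across rounds can be stacked (via a Borel--Cantelli or geometric argument) into an almost-sure eventual occurrence; this rests entirely on lemma~\ref{lemmaABBAConfUnpredictability}, whose whole purpose (through the $CONF$ phase) is to certify that the common singleton value was fixed \emph{before} the adversary could bias $\rho_r$. The remaining care is purely in the stacking: first that a locking round occurs almost surely, and then that, after locking, a coin-matching round occurs almost surely, together giving termination with probability $1$.
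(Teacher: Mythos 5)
Your proposal is correct and takes essentially the same route as the paper's own proof: round-by-round progress via the phase-termination lemmas, a per-round probability of at least $1/2-\epsilon$ of locking onto a common estimate via lemma~\ref{lemmaABBAConfUnpredictability}, permanence of the locked estimate via lemma~\ref{lemmaABBAOutputConsistency} (equivalently the contrapositive of lemma~\ref{lemmaABBAValuesValidity}), and propagation of termination from any single unblocked node to all of them as in proposition~\ref{propRBCReliability}. The only cosmetic differences are that you chain all three phase lemmas explicitly where the paper cites lemma~\ref{lemmaABBAAuxTermination} inductively, and you are somewhat more explicit about conditioning the per-round coin events on adversarial history before stacking them.
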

\begin{proof}
	Note that by lemma~\ref{lemmaABBAOutputConsistency}, any two unblocked nodes that broadcast $FINISH$ messages due to the logic in step~\ref{ABBAfinal} must broadcast the same $FINISH$ message. Thus by the same proof as proposition~\ref{propRBCReliability}, if any unblocked node in $\mathsf{UNL}_k^{\infty}$ terminates then all unblocked nodes in $\mathsf{UNL}_k^{\infty}$ terminate.
	
	Once every unblocked node in $\mathsf{UNL}_k^{\infty}$ provides some input in round $0$ then by applying lemma~\ref{lemmaABBAAuxTermination} inductively one sees that for every $r\geqslant 0$, either all nodes get to round $r$ or some unblocked node in $\mathsf{UNL}_k^{\infty}$ terminates before then. By the preceding paragraph, we derive that either every unblocked node in $\mathsf{UNL}_k^{\infty}$ eventually terminates, or every unblocked node in $\mathsf{UNL}_k^{\infty}$ gets to round $r$ for every $r\geqslant 0$.
	
	Suppose in some round $r$ every unblocked node $\mathcal{P}_j\in\mathsf{UNL}_k^{\infty}$ gets to step~\ref{ABBAfinal} with $\mathsf{values}_j^r=\{0,1\}$. Then every unblocked node in $\mathsf{UNL}_k^{\infty}$ will begin round $r+1$ with estimate set to the random oracle value from round $r$, so in particular every unblocked node begins round $r+1$ with a common value $s$ for their estimates. As in the proof of lemma~\ref{lemmaABBAOutputConsistency}, this implies that for all $r'>r$, every node will get to step~\ref{ABBAfinal} with $\mathsf{values}^{r'}=\{s\}$. Thus as soon as the CRS $\rho_{r'}$ returns $s$ for some $r'>r$---which happens within a finite number of rounds with probability $1$, and in fact takes only $2+\epsilon$ rounds in expectation for a negligible $\epsilon$---every unblocked node in $\mathsf{UNL}_k^{\infty}$ broadcasts $FINISH(s)$, allowing every unblocked node to terminate.
	
	Now on the other hand if in round $r$ there is some unblocked node $\mathcal{P}_i\in\mathsf{UNL}_k^{\infty}$ that gets to step~\ref{ABBAfinal} with $\mathsf{values}_i^r=\{v\}$, then either $s_r=v$, in which case by lemma~\ref{lemmaABBAOutputConsistency} and lemma~\ref{lemmaABBAValuesValidity} every unblocked node in $\mathsf{UNL}_k^{\infty}$ will get to step~\ref{ABBAfinal} in round $r'$ with $\mathsf{values}^{r'}=\{s\}$ for every $r'>r$, and as in the previous paragraph every unblocked node terminates with probability $1$. Otherwise the oracle in round $r$ returns $\neg v$, in which case the nodes go into the next round with some arbitrary state. However, by lemma~\ref{lemmaABBAConfUnpredictability} there is at least $1/2-\epsilon$ chance of the first option occurring, so with probability $1$ every unblocked node in $\mathsf{UNL}_k^{\infty}$ eventually terminates.
\end{proof}

\begin{theorem}\label{thmABBA}
	The protocol defined in section~\ref{sectionProtocol-ABBA} satisfies the properties of an asynchronous Byzantine binary agreement algorithm in the open network model.
\end{theorem}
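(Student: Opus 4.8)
The plan is to treat this theorem as an assembly result, exactly parallel to Theorem~\ref{thmRBC}: each of the four defining properties of an ABBA protocol has already been established as a standalone proposition earlier in section~\ref{sectionProofs-ABBA}, so the proof simply collects these and checks that the hypotheses and conclusions of each match the corresponding property definition verbatim. No new argument is required; the substance has already been spent in the preceding lemmas.

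Concretely, I would proceed property by property. ABBA-Consistency---that two honest, linked nodes cannot output different binary values---is precisely the statement of Proposition~\ref{propABBAAgreement}, whose proof rests on the single-$FINISH$-message restriction together with lemma~\ref{lemmaSupportBlocking}. ABBA-Termination, requiring that every unblocked node in $\mathsf{UNL}_k^{\infty}$ terminates with probability~$1$ when $\mathcal{P}_k$ is strongly connected and all such nodes provide input, is exactly Proposition~\ref{propABBATermination}; this is the deepest ingredient, since it chains together the phase-termination lemmas with the unpredictability lemma~\ref{lemmaABBAConfUnpredictability} to argue that the common random source eventually aligns all estimates. ABBA-Strong-Validity is the content of Proposition~\ref{propABBAStrongValidity}.

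The only clause needing a remark rather than a bare citation is ABBA-Validity. I would observe that it follows immediately from Strong Validity: the chain of unblocked nodes $\mathcal{P}_{i_0},\dots,\mathcal{P}_{i_n}$ produced by Proposition~\ref{propABBAStrongValidity} terminates at a node $\mathcal{P}_{i_n}$ that input $v$, and since that node is itself unblocked, some unblocked node indeed input $v$. I do not anticipate a genuine obstacle, as all the real work lives in the earlier propositions and the termination argument is the only substantive one; the sole thing to verify carefully is that the network hypotheses attached to each cited statement (linkage for Consistency, strong connectivity plus universal input for Termination, and unblocked-ness for the two validity variants) line up exactly with the corresponding clauses of the ABBA definition, so that no stronger assumption is silently imported or dropped.
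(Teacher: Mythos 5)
Your proposal matches the paper's proof essentially verbatim: the paper likewise assembles the theorem by citing proposition~\ref{propABBAAgreement} for Consistency, proposition~\ref{propABBATermination} for Termination, and proposition~\ref{propABBAStrongValidity} for Strong Validity, noting parenthetically (as you argue explicitly) that plain Validity follows from Strong Validity. Your additional care in checking that the network hypotheses of each cited proposition align with the corresponding clause of the ABBA definition is sound and consistent with the paper's intent, so there is no gap.
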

\begin{proof}
	Agreement is proven in proposition~\ref{propABBAAgreement}. Termination is proven in proposition~\ref{propABBATermination}. Strong Validity (and hence plain Validity as well) is proven in proposition~\ref{propABBAStrongValidity}.
\end{proof}

\subsection{Democratic Atomic Broadcast}

\subsubsection{Definition}\label{sectionProtocol-DABCDef}

Although we loosely defined the DABC problem in section~\ref{sectionModel}, at the time we were unable to explicitly describe the network assumptions required for each property to hold, so we reiterate the problem definition and clarify the assumptions now.

As stated in section~\ref{sectionModel}, a protocol that solves DABC allows proposers to broadcast amendments to the network. Each node can choose to either support or oppose each amendment it receives, and then each node over time ratifies some of those amendments and assigns each ratified amendment an activation time, according to the following properties:
\begin{itemize}
	\item DABC-Agreement: If $\mathcal{P}_k$ is strongly connected and some healthy node in $\mathsf{UNL}_k^{\infty}$ ratifies an amendment $A$ an assigns it the activation time $\tau$, then eventually every unblocked node in $\mathsf{UNL}_k^{\infty}$ also ratifies $A$  with probability $1$ and assigns it the activation time $\tau$.
	\item DABC-Linearizability: If any honest node ratifies an amendment $A$ before ratifying some other amendment $A'$, then every other honest node linked to it ratifies $A$ before $A'$.
	\item DABC-Democracy: If any healthy node $\mathcal{P}_i$ is weakly connected and ratifies an amendment $A$, then there exists some essential subset $S\in \mathsf{ES}_i$ such that the majority of all honest nodes in $S$ supported $A$ being ratified, and further supported $A$ being ratified in the context of all the amendments ratified before $A$.
	\item DABC-Liveness: If $\mathcal{P}_k$ is strongly connected and every unblocked node in $\mathsf{UNL}_k^{\infty}$ supports some unratified amendment $A$, then eventually every unblocked node in $\mathsf{UNL}_k^{\infty}$ ratifies a new amendment with probability $1$.
	\item DABC-Full-Knowledge: For every time $\tau$, a healthy node that is weakly connected can wait some amount of time and afterwards know that it is aware of every amendment that will be ratified with an activation time less than $\tau$. Further, if $\mathcal{P}_k$ is strongly connected, then any unblocked node in $\mathsf{UNL}_k^{\infty}$ only needs to wait a finite amount of time with probability $1$.
\end{itemize}

To solve DABC, we use a reduction to DRBC and a different agreement protocol called \textbf{external validity multi-valued Byzantine agreement} or \textbf{MVBA}. A protocol that solves MVBA allows each node $\mathcal{P}_i$ to dynamically maintain a set $\mathsf{values}_i^0$ known as its \textbf{valid inputs}, and then come to consensus on some value that everyone in the network considers a valid input. We assume that these sets satisfy the following ``reliability" and ``validity" conditions:
\begin{itemize}
	\item Assumed-Reliability: If $\mathcal{P}_k$ is strongly connected and any healthy node $\mathcal{P}_i\in\mathsf{UNL}_k^{\infty}$ adds $A$ to $\mathsf{values}_i^0$, then eventually every unblocked node $\mathcal{P}_j\in\mathsf{UNL}_k^{\infty}$ adds $A$ to $\mathsf{values}_j^0$.
	\item Assumed-Validity: If $\mathcal{P}_k$ is strongly connected and any unblocked node $\mathcal{P}_i\in\mathsf{UNL}_k^{\infty}$ adds $A$ to $\mathsf{values}_i^0$, then there is some unblocked node $\mathcal{P}_j\in\mathsf{UNL}_k^{\infty}$ such that for every $S\in\mathsf{ES}_j$, the majority of unblocked nodes in $S$ ``suggested" $A$ before beginning the protocol.
\end{itemize}

Assumed-Reliability is important for ensuring eventual termination. Assumed-Validity is only actually needed in appendix~\ref{sectionAppendix2} where we use it for proving a result about the relative efficiency of our MVBA algorithm.

Formally, under the above assumptions, an MVBA protocol is a protocol that allows nodes to output some value according to the following properties:
\begin{itemize}
	\item MVBA-Consistency: No two honest, linked nodes can output different values.
	\item MVBA-Termination: If $\mathcal{P}_k$ is strongly connected, $\mathsf{values}_i^0$ has bounded size for every unblocked node $\mathcal{P}_i\in\mathsf{UNL}_k^{\infty}$, and eventually some value $A$ is in $\mathsf{values}_i^0$ for every unblocked node $\mathcal{P}_i\in\mathsf{UNL}_k^{\infty}$; then eventually every unblocked node in $\mathsf{UNL}_k^{\infty}$ terminates with probability $1$.
	\item MVBA-Validity: If $\mathcal{P}_i$ outputs $A$, then $A\in\mathsf{values}_i^0$.
\end{itemize}

Note that our definition of MVBA is fairly different from that of Cachin et al. \citep{Cachin2001}. Cachin et al. don't assume any sort of reliability for their valid input sets, and instead use cryptographic proofs to guarantee that any honest node's input can be verified as valid by everyone else. Our different definition is necessitated by the lack of sufficiently expressive cryptographic proofs in our domain. In the complete network model, a protocol that satisfies our definition can trivially be applied in place of a protocol satisfying Cachin et al.'s definition, simply by specifying an honest node $\mathcal{P}_i$ adds a value $A$ to $\mathsf{values}_i^0$ if it receives a valid proof for $A$'s validity. This might not satisfy Assumed-Validity, but since Assumed-Validity is only needed for efficiency this is not a huge issue, and for most use cases of MVBA it \textit{will} satisfy Assumed-Validity.

The idea behind the reduction of DABC to MVBA is that each proposer uses DRBC to broadcast their amendment $A$ along with a \textbf{slot number} $n_A$ that identifies where in the total ordering of amendments $A$ is intended to be ratified. Then for each slot number $n$, a node waits until it has ratified an amendment with every earlier slot number and then supports $A$ if and only if it supports $A$ in the context of the amendments ratified before slot $n_A$. The nodes begin an MVBA instance tagged with $n_A$, and $\mathcal{P}_i$ sets $\mathsf{values}_i^0$ to be the set of all the amendments with slot number $n$ that $\mathcal{P}_i$ accepts through DRBC, and ratifies whichever amendment is eventually output from MVBA. Assumed-Reliability for the valid inputs holds immediately by RBC-Reliability, and Assumed-Validity holds if ``suggesting" refers to the act of supporting in DRBC. The actual reduction requires a slight extension to guarantee Full-Knowledge. The full reduction is described formally in section~\ref{sectionProtocol-FK}.

An alternative to specifying the slot number would be to include the hash of the most recently ratified amendment proposal in each amendment proposal. This would satisfy all the same properties, but may be more intuitive coming from the ``blockchain cannon". It also could make it easier to tell when the system has broken (since nodes that disagree with you will have different hashes for the previous amendment) which could help nodes to panic and halt everything until the system can be fixed rather than simply charging ahead and possibly increasing the amount of damage that needs to be repaired. We use the slot-based definition in this paper since it's notationally simpler, and leave the choice of which definition to actually use up to the implementors.

\subsubsection{Multi-Valued Agreement}\label{sectionProtocol-MIBA}

We now present our protocol for solving MVBA. To the author's knowledge, this protocol is not derived from any other complete network protocol. It relies upon a reduction of MVBA to ABBA.

Similar to the ABBA protocol from section~\ref{sectionProtocol-ABBA}, the MVBA protocol proceeds in rounds. The protocol uses a sequence of CRS instances to give a ``random index" to the values for each round. Specifically, we assume the existence of a collision resistant hash function $H$ in the random oracle model \citep{Bellare1993}. In other words, for every $x$, $H(x)$ is modeled as a true random variable drawn uniformly from the image of $H$, which can \textit{only} be derived by explicitly asking an imagined oracle to apply $H$ to a chosen input $x$. Let $\mathcal{S}$ be a uniform probability space over a set of size which is super-polynomial in the security parameter. For every $r\geqslant 0$, let $\rho_r$ be a CRS defined over $\mathcal{S}$. Then if $s_r$ is the value received from $\rho_r$, we define the functions $\mathcal{I}_r$ by $\mathcal{I}_r(A)=H(A||s_r)$. By the assumption that $\mathcal{S}$ is uniform over a super-polynomial set and the CRS-Randomness property, the adversary can only produce $s_r$ in advance with negligible probability. Thus for any $A$ the adversary can only produce $A||s_r$ with negligible probability, so until some healthy node samples $\rho$, with overwhelming probability $\mathcal{I}_r(A)$ is a sequence of independent, uniformly sampled random variables for every $r\geqslant 0$.

It is worth noting that unlike the ABBA protocol, the randomness of the CRS $\rho_r$ is not needed to guarantee termination. As long as $H$ is collision resistant, then even if the random values are known in advance there is no way for the network adversary to make the protocol continue for an infinite number of rounds. However, without the randomness of $\rho_r$, termination can take a number of rounds linear in the number of valid inputs, whereas with the randomness assumption termination only takes at most an expected logarithmic number of rounds. We prove this in section~\ref{sectionAppendix2}.

To run MVBA, the node $\mathcal{P}_i$ runs the following protocol.
\begin{enumerate}
	\item Set $\mathsf{values}_i^r=\emptyset$ for all $r>0$, and set $r=0$.
	\item\label{MVBAbeginLoop} Wait until $\mathsf{values}_i^r$ contains some value $A$, then broadcast $ELECT(A,r)$ if we have not yet broadcast $ELECT(\mathunderscore,r)$.
	\item\label{MVBAelecting} For every essential subset $S$, wait until there exists some subset $T\subseteq S$, such that $\vert T\vert\geqslant q_S$, we received $ELECT(\mathunderscore,r)$ from every node in $T$, and if any node in $T$ sent us $ELECT(A',r)$ for some $A'$, then $A'\in\mathsf{values}_i^r$.
	
	After waiting, if $\mathsf{values}_i^r=\{A\}$ for some value $A$, broadcast $FINISH(A,r)$. Otherwise broadcast $CONT(\mathsf{values}_i^r,r)$.
	\item\label{MVBAvoting} Upon receiving strong support for $FINISH(A,r)$, vote $1$ in an ABBA instance tagged with $(``STOP",r)$. Otherwise, upon receiving $CONT(C,r)$ from any node where $|C|\geqslant 2$ and $C\subseteq\mathsf{values}_i^r$, broadcast $CONT(\mathsf{values}_i^r,r)$ and then vote $0$ in the ABBA instance tagged with $(``STOP",r)$.
	\item\label{MVBALongStep} Wait until the ABBA instance tagged with $(``STOP",r)$ terminates. If it terminates on $1$, wait until we receive weak support for $FINISH(A,r)$ for some value $A$, then broadcast $FINISH(A,r)$ if we haven't already broadcast $FINISH(\mathunderscore,r)$; then wait until we receive strong support for $FINISH(A,r)$ where $A\in\mathsf{values}_i^0$, and then finally output $A$ and terminate.
	
	Otherwise if the ABBA instance terminates on $0$, wait until we receive $CONT(C,r)$ from some node, where $|C|\geqslant 2$ and $C\subseteq\mathsf{values}_i^r$. Then broadcast $CONT(\mathsf{values}_i^r,r)$; further, if $\mathsf{values}_i^r$ later grows then each time broadcast $CONT(\mathsf{values}_i^r,r)$ with the updated set. For every essential subset $S$, wait until there exists some set $C\subseteq \mathsf{values}_i^r$ such that we've received strong support for $CONT(C,r)$, then query the random oracle $\rho_r$ for $s_r$, set $\mathsf{est}_i^{r+1}$ to the value in $\mathsf{values}_i^r$ with minimum $\mathcal{I}_r$ index, and broadcast $INIT(\mathsf{est}_i^{r+1},r+1)$.
	\item\label{MVBAinit1} Upon receiving weak support for $INIT(A,r+1)$ for an arbitrary value $A$, or upon adding $A$ to $\mathsf{values}_i^r$ for some value $A$ such that $\mathcal{I}_r(A)<\mathcal{I}_r(\mathsf{est}_i^r)$, broadcast $INIT(A,r+1)$ if we have not already done so.
	\item\label{MVBAinit2} Upon receiving strong support for $INIT(A,r+1)$, add $A$ to $\mathsf{values}_i^{r+1}$, set $r=r+1$, and return to step~\ref{MVBAbeginLoop} if we have not yet done so in this round.
\end{enumerate}

The above protocol is again defined asynchronously, so that once you get to some step in the protocol you keep running that step forever. This is important since for example you might need to add more values to $\mathsf{values}_i^r$ than simply the first one that you add before jumping back to step~\ref{MVBAbeginLoop}.

One easy optimization is to begin broadcasting messages for round $r+1$ without waiting for the round $r$ ABBA to terminate. As long as we follow the termination procedure for the first round in which ABBA terminates on $1$, this can cut down the latency by a significant fraction without affecting the correctness of the protocol.

\subsubsection{Analysis}\label{sectionProofs-MIBA}

We will first prove the correctness of the MVBA algorithm, and then at the end we will prove the correctness of our reduction from DABC to MVBA.

The following proposition shows that consistency is a local property. Thus, although forward progress may depend on the configuration of other nodes in the network, a node can at least guarantee that the amendments it observes are consistent with the rest of the network as long as \textit{it alone} is well configured.

\begin{proposition}\label{propDABCConsistency}
	If two honest nodes $\mathcal{P}_i,\mathcal{P}_j$ are linked, then if $\mathcal{P}_i$ outputs $A$, $\mathcal{P}_j$ cannot output any $A'\neq A$.
\end{proposition}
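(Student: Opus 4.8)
The plan is to reduce this local consistency claim to two results already in hand: the consistency of the underlying binary agreement (proposition~\ref{propABBAAgreement}) and the support-blocking lemma (lemma~\ref{lemmaSupportBlocking}). Reading off step~\ref{MVBALongStep}, a node outputs a value $A$ and terminates only once it has seen strong support for $FINISH(A,r)$ in the round $r$ whose associated ABBA instance, tagged $(``STOP",r)$, terminated on $1$. So I would record for each of the two nodes the round in which it terminates, say $r_i$ for $\mathcal{P}_i$ and $r_j$ for $\mathcal{P}_j$, together with the fact that the corresponding $FINISH$ messages received strong support. The proof then has two independent goals: that $r_i=r_j$, and that within that common round the supported $FINISH$ values agree.

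For the value-agreement goal, I would first verify the invariant that an honest node broadcasts at most one $FINISH$ message per round: the emission in step~\ref{MVBAelecting} is the one-shot decision taken when the wait there first completes (and $\mathsf{values}_i^r$ only grows, so it cannot later become a different singleton), while the emission in step~\ref{MVBALongStep} is explicitly guarded against re-broadcasting a $FINISH$ for the same round. Hence for $A\neq A'$ the messages $FINISH(A,r)$ and $FINISH(A',r)$ are contradictory in the sense required by lemma~\ref{lemmaSupportBlocking}. Applying that lemma to the linked pair $\mathcal{P}_i,\mathcal{P}_j$ then shows they cannot both have seen strong support for these two messages in a common round $r$, forcing $A=A'$. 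This step is essentially immediate once the single-$FINISH$-per-round invariant is in place.

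The round-agreement goal is where the real work lies. The clean idea is that the two nodes run the same sequence of ABBA instances $(``STOP",0),(``STOP",1),\dots$; by proposition~\ref{propABBAAgreement} they cannot disagree on the output of any single instance; and each node terminates in the first round whose instance yields $1$. If $r_i\neq r_j$, say $r_i<r_j$, then $\mathcal{P}_j$ advanced beyond round $r_i$, which by the continuation branch of step~\ref{MVBALongStep} means its instance $(``STOP",r_i)$ yielded $0$ for it, whereas $\mathcal{P}_i$ terminated there, so that instance yielded $1$ for $\mathcal{P}_i$; as both nodes are honest and linked this contradicts proposition~\ref{propABBAAgreement}, forcing $r_i=r_j$.

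I expect this last argument to be the main obstacle, because making it airtight requires care with the asynchronous, pipelined execution semantics and with the fact that consistency is only assumed local, so only linkage (not full linkage or strong connectivity) is available. In particular one must argue that $\mathcal{P}_j$ genuinely produces an output from the instance $(``STOP",r_i)$ --- so that proposition~\ref{propABBAAgreement} can be applied to it --- rather than merely advancing its round counter via weak $INIT$ support while that instance remains unresolved; the optimization note's stipulation that a node follows the termination procedure for the \emph{first} round in which ABBA yields $1$ is what pins down the termination round and must be invoked explicitly. Once the rounds are shown to coincide, the value-agreement paragraph closes the proof.
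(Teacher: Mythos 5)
Your proposal is correct and takes essentially the same route as the paper's own proof: the paper also pins down a common termination round by invoking proposition~\ref{propABBAAgreement} (linked honest nodes cannot see different outputs from the $(``STOP",r)$ ABBA instances, so $\mathcal{P}_j$ must output in the same round $r$ as $\mathcal{P}_i$) and then forces $A'=A$ from the fact that honest nodes broadcast only one $FINISH(\mathunderscore,r)$ message per round together with lemma~\ref{lemmaSupportBlocking}. The difference is only in presentation: the paper compresses the round-agreement step into a single sentence, whereas you make explicit the pipelining subtlety and the single-$FINISH$ invariant that the paper leaves implicit.
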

\begin{proof}
	Suppose $\mathcal{P}_i$ outputs $A$. Then there must be some round $r$ where $\mathcal{P}_i$ saw that ABBA instance tagged with $(``STOP",r)$ terminate with $1$, the ABBA instances tagged with $(``STOP",r')$ for every $r'<r$ terminate with $0$, and $\mathcal{P}_i$ received strong support for $FINISH(A,r)$. By proposition~\ref{propABBAAgreement}, $\mathcal{P}_j$ cannot see different ABBA outputs, so if $\mathcal{P}_j$ outputs $A'$ it must do so due to receiving strong support for $FINISH(A',r)$. Since honest nodes can only broadcast a single $FINISH(\mathunderscore,r)$ message, by lemma~\ref{lemmaSupportBlocking}, $A'=A$.
\end{proof}

We now develop a few lemmas before we can prove the stronger consensus-properties of MVBA.

\begin{lemma}\label{lemmaDABCValidAgreement}
	If $\mathcal{P}_k$ is strongly connected and any healthy node $\mathcal{P}_i\in\mathsf{UNL}_k^{\infty}$ adds $A$ to $\mathsf{values}_i^r$ for some $r\geqslant 0$, then every unblocked node $\mathcal{P}_j\in\mathsf{UNL}_k^{\infty}$ will eventually add $A$ to $\mathsf{values}_j^r$.
\end{lemma}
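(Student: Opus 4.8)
The plan is to induct on the round number $r$, exploiting the fact that the two regimes $r=0$ and $r\geqslant 1$ grow $\mathsf{values}_i^r$ by completely different mechanisms. For $r=0$ the set $\mathsf{values}_i^0$ is populated externally from the valid inputs, so the claim is literally the Assumed-Reliability hypothesis on the valid input sets and there is nothing further to show. For $r\geqslant 1$, by contrast, the only way a value enters $\mathsf{values}_i^r$ is through step~\ref{MVBAinit2}, i.e.\ by the node receiving strong support for $INIT(\cdot,r)$; this is the case that carries the real content.

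For the inductive step, suppose a healthy node $\mathcal{P}_i\in\mathsf{UNL}_k^{\infty}$ adds $A$ to $\mathsf{values}_i^{r+1}$, so by step~\ref{MVBAinit2} it received strong support for $INIT(A,r+1)$. From here I would run exactly the reliable-broadcast cascade of proposition~\ref{propRBCReliability} (equivalently lemma~\ref{lemmaABBAValuesAgreement}), treating $INIT(\cdot,r+1)$ as the analogue of the RBC $READY$ message. Since $\mathcal{P}_k$ is strongly connected, $\mathcal{P}_i$ is fully linked with every unblocked node in $\mathsf{UNL}_k^{\infty}$, so lemma~\ref{lemmaSupportTransfer} gives every such node eventual weak support for $INIT(A,r+1)$; by the first trigger of step~\ref{MVBAinit1} each then rebroadcasts $INIT(A,r+1)$. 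Hence every healthy and correct node in $\mathsf{UNL}_k^{\infty}$ broadcasts $INIT(A,r+1)$, and in particular for any unblocked $\mathcal{P}_j$ the at-least-$q_S$ unblocked nodes in each $S\in\mathsf{ES}_j$ broadcast it, so $\mathcal{P}_j$ eventually sees strong support and adds $A$ to $\mathsf{values}_j^{r+1}$ by step~\ref{MVBAinit2}. Note that I would use only the weak-support trigger of step~\ref{MVBAinit1}, not the $\mathcal{I}_r$-index trigger, so the cascade never depends on the nodes agreeing on their estimates $\mathsf{est}^r$.

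I expect the main obstacle to be two interlocking bookkeeping points rather than a deep idea. First, one must check that this cascade needs \emph{no} analogue of the blocking lemma~\ref{propRBCReadyBlocking} that RBC relied on: this is fine precisely because step~\ref{MVBAinit1} lets a node rebroadcast $INIT(A,r+1)$ for \emph{every} weakly supported value, so distinct values propagate independently and there is simply no conflicting message to exclude. Second, and more delicate, is justifying that the reactive rules of steps~\ref{MVBAinit1} and~\ref{MVBAinit2} actually fire for round $r+1$ at a node whose main loop has not yet reached round $r+1$; this is exactly the force of the ``keep running every step forever'' asynchronous reading of the protocol, and it is the point I would invoke explicitly to close the argument. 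Phrasing the whole thing as an induction on $r$ is the cleanest way to keep this progression issue in view, even though the round-$(r+1)$ cascade itself is essentially self-contained once strong support for a single $INIT(A,r+1)$ is assumed.
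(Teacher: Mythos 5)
Your proof is correct and takes essentially the same route as the paper's: the paper dispatches $r=0$ by Assumed-Reliability and for $r>0$ declares the proof identical to that of lemma~\ref{lemmaABBAValuesAgreement} (in turn identical to proposition~\ref{propRBCReliability}), which is exactly the weak-support-to-strong-support cascade you spell out. Your two bookkeeping observations---that no analogue of lemma~\ref{propRBCReadyBlocking} is needed because $INIT$ rebroadcasts in step~\ref{MVBAinit1} are per-value rather than one-per-round, and that the asynchronous reading of the protocol is what lets the round-$(r+1)$ rules fire at nodes still in earlier rounds---are details the paper leaves implicit, but they refine rather than alter its argument.
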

\begin{proof}
	For $r=0$ this follows by Assumed-Reliability. For $r>0$, the proof is identical to the proof of lemma~\ref{lemmaABBAValuesAgreement}.
\end{proof}

For each $r\geqslant 0$ and each node $\mathcal{P}_i$, let $S_i^r$ be the set of all values that are eventually added to $\mathsf{values}_i^r$.

\begin{lemma}\label{lemmaDABCFiniteness}
	For any strongly connected node $\mathcal{P}_k$, if $S_i^0$ is finite for every unblocked node $\mathcal{P}_i\in\mathsf{UNL}_k^{\infty}$, then for every $r\geqslant 0$ and every unblocked node $\mathcal{P}_j\in\mathsf{UNL}_k^{\infty}$, $|S_j^r|>|S_j^{r+1}|$.
\end{lemma}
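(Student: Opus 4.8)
The plan is to reduce everything to a single common sequence of value sets and then show that each such set loses at least its maximal-index element per round. First I would record that, since every unblocked node is in particular healthy, applying lemma~\ref{lemmaDABCValidAgreement} with each unblocked node of $\mathsf{UNL}_k^\infty$ as the source shows all unblocked nodes in $\mathsf{UNL}_k^\infty$ have the same eventual value set in every round; write $S^r$ for this common set, so $S_j^r=S^r$ for every unblocked $\mathcal{P}_j\in\mathsf{UNL}_k^\infty$. Because $S^0$ is finite and (as I establish below) $S^{r+1}\subseteq S^r$, every $S^r$ is finite, so modelling $H$ as a random oracle the indices $\mathcal{I}_r$ are pairwise distinct on $S^r$ with overwhelming probability and I may speak of the unique value $A_{\max}\in S^r$ of maximal $\mathcal{I}_r$-index.

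The core of the argument is controlling which values an honest node ever broadcasts $INIT(\cdot,r+1)$ for. A value enters $\mathsf{values}^{r+1}$ only upon strong support for the corresponding $INIT$ message (step~\ref{MVBAinit2}); since $\mathcal{P}_j$ is unblocked, each of its essential subsets $S$ contains at most $\min\{t_S,n_S-q_S\}$ non-unblocked nodes, so among the $q_S$ senders in $S$ at least $q_S-\min\{t_S,n_S-q_S\}\geqslant 1$ are unblocked and hence healthy. I would then invoke the first-originator technique of lemma~\ref{propRBCReadyBlocking}: among the healthy nodes of $\mathsf{UNL}_k^\infty$ that broadcast $INIT(A,r+1)$ there is a first one, $\mathcal{P}_i$, and it cannot have acted on weak support (that would require an earlier healthy broadcaster inside one of its essential subsets, using that a healthy node's subsets contain at most $t_S$ unhealthy members). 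Hence $\mathcal{P}_i$ broadcast $INIT(A,r+1)$ either as its estimate in step~\ref{MVBALongStep} or upon a small-index addition in step~\ref{MVBAinit1}; either way $A\in\mathsf{values}_i^r$, and lemma~\ref{lemmaDABCValidAgreement} gives $A\in S^r$. This proves $S^{r+1}\subseteq S^r$.

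For strict containment I would show $A_{\max}\notin S^{r+1}$ by ruling out both ways the first healthy originator $\mathcal{P}_i$ could have broadcast $INIT(A_{\max},r+1)$. The key observation is that when the round-$r$ $STOP$-instance terminates on $0$, step~\ref{MVBALongStep} forces a node to wait for some $CONT(C,r)$ with $|C|\geqslant 2$ and $C\subseteq\mathsf{values}_i^r$ \emph{before} it ever computes its estimate, so $|\mathsf{values}_i^r|\geqslant 2$ at that moment; its estimate, being the minimum-index element, then has index strictly below $\mathcal{I}_r(A_{\max})$ and so is not $A_{\max}$, killing the step~\ref{MVBALongStep} case. In the step~\ref{MVBAinit1} case the trigger requires $\mathcal{I}_r(A_{\max})<\mathcal{I}_r(\mathsf{est}_i^{r+1})$, but the estimate lies in $\mathsf{values}_i^r\subseteq S^r$ and hence has index at most $\mathcal{I}_r(A_{\max})$, a contradiction. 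Therefore $A_{\max}\in S^r\setminus S^{r+1}$, which together with $S^{r+1}\subseteq S^r$ yields $|S_j^{r+1}|=|S^{r+1}|<|S^r|=|S_j^r|$ for every unblocked $\mathcal{P}_j\in\mathsf{UNL}_k^\infty$.

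The main obstacle is this last step: making precise why no honest node can ever promote the maximal-index value into the next round. This is exactly where the $CONT$ guard of step~\ref{MVBALongStep} earns its keep — without the requirement to wait for a $CONT$ set of size at least two, a node could terminate the $STOP$-instance while its own $\mathsf{values}_i^r$ is still the singleton $\{A_{\max}\}$, adopt $A_{\max}$ as its estimate, and reintroduce it in round $r+1$, destroying the strict decrease. I also need to be careful to run the first-originator reduction over \emph{healthy} nodes rather than merely honest ones, since only healthiness bounds the number of bad nodes inside each essential subset and thereby forces any weak-support set to contain an earlier healthy broadcaster.
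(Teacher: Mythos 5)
Your proof is correct and follows essentially the same route as the paper's: establish $S^{r+1}\subseteq S^r$ via the fact that entry into $\mathsf{values}^{r+1}$ requires a healthy $INIT$-broadcaster, then eliminate the maximum-$\mathcal{I}_r$-index element of $S^r$ because step~\ref{MVBALongStep} only ever selects the minimum-index element of a set of size at least $2$, and step~\ref{MVBAinit1} only fires for indices below the current estimate. The only difference is one of explicitness: you spell out the first-originator argument and the step~\ref{MVBAinit1} trigger case, which the paper compresses into the single assertion that no honest node ever broadcasts $INIT(A_{max},r+1)$.
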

\begin{proof}	
	Since $\mathcal{P}_k$ is strongly connected, for every healthy node $\mathcal{P}_i\in\mathsf{UNL}_k^{\infty}$ and every unblocked node $\mathcal{P}_j\in\mathsf{UNL}_k^{\infty}$, $S_i^r\subseteq S_j^r$ by lemma~\ref{lemmaDABCValidAgreement}. Thus if a value $A$ is not in $S_j^r$, then no healthy node in $\mathsf{UNL}_k^{\infty}$ will ever broadcast $INIT(A,r+1)$, so $\mathcal{P}_j$ will never add $A$ to $\mathsf{values}_j^{r+1}$ implying $A\notin S_j^{r+1}$. Thus $S_j^{r+1}\subseteq S_j^r$, so to show that $|S_j^r|>|S_j^{r+1}|$ it suffices to show that there is some value in $S_j^r$ that is not in $S_j^{r+1}$.
	
	For a given $r\geqslant 0$, let $A_{max}$ be the value with maximum $\mathcal{I}_r$ index in $S_j^r$. By step~\ref{MVBALongStep} of the protocol, an honest node $\mathcal{P}_i\in\mathsf{UNL}_k^{\infty}$ only sets $\mathsf{est}_i^{r+1}$ to some value $A$ if $|\mathsf{values}_i^r|\geqslant 2$ and $A$ is the value with minimum $\mathcal{I}_r$ index in $\mathsf{values}_i^r$. But $S_j^r\supseteq S_i^r\supseteq \mathsf{values}_i^r$, so if $|\mathsf{values}_i^r|\geqslant 2$ then the value with minimum $\mathcal{I}_r$ index in $\mathsf{values}_i^r$ must have index strictly less than $A_{max}$ (strictness comes from collision resistance of $H$). Thus no honest node in $\mathsf{UNL}_k^{\infty}$ will ever broadcast $INIT(A_{max},r+1)$, so $\mathcal{P}_j$ can never add $A_{max}$ to $\mathsf{values}_j^{r+1}$, so $A_{max}\notin S_j^{r+1}$.
\end{proof}

\begin{lemma}\label{lemmaDABCElectTermination}
	If $\mathcal{P}_k$ is strongly connected and every unblocked node in $\mathsf{UNL}_k^{\infty}$ gets to step~\ref{MVBAelecting} in round $r\geqslant 0$, then eventually either every unblocked node in $\mathsf{UNL}_k^{\infty}$ terminates in round $r$ or every unblocked node in $\mathsf{UNL}_k^{\infty}$ progresses to round $r+1$, with probability $1$.
\end{lemma}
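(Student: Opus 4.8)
The plan is to reduce everything to the behaviour of the single ABBA instance tagged $(``STOP",r)$ that governs round $r$. I will show that (i) every unblocked node in $\mathsf{UNL}_k^{\infty}$ eventually casts a vote into this instance, so that by Proposition~\ref{propABBATermination} it terminates and by Proposition~\ref{propABBAAgreement} all unblocked nodes see the same output bit $b$; (ii) if $b=1$ every unblocked node terminates in round $r$; and (iii) if $b=0$ every unblocked node gets to broadcast $INIT(\cdot,r+1)$ and add a value to $\mathsf{values}^{r+1}$, i.e.\ progresses to round $r+1$. Since the common output $b$ is the same for all unblocked nodes, this is exactly the stated dichotomy, and the ``with probability $1$'' comes from Proposition~\ref{propABBATermination} together with CRS-Termination.

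For (i) I would first argue that every unblocked node clears step~\ref{MVBAelecting}: each essential subset $S$ of an unblocked node contains at least $q_S$ unblocked nodes, each of which broadcasts some $ELECT(\cdot,r)$ in step~\ref{MVBAbeginLoop}, and by lemma~\ref{lemmaDABCValidAgreement} those values eventually appear in the receiver's $\mathsf{values}^r$, so the waiting condition is met. The key observation is that a node clears step~\ref{MVBAelecting} with a singleton $\mathsf{values}^r=\{A\}$ only after seeing strong support for $ELECT(A,r)$; hence by lemma~\ref{lemmaSupportBlocking} and strong connectivity, all unblocked nodes that broadcast $FINISH$ in step~\ref{MVBAelecting} do so for one common value $A$. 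I then run a dichotomy on each unblocked node $\mathcal{P}_i$: if some unblocked node in one of its essential subsets ever broadcasts a $CONT$ message, then (its value set being $\subseteq\mathsf{values}_i^r$ eventually, again by lemma~\ref{lemmaDABCValidAgreement}) $\mathcal{P}_i$ receives a qualifying $CONT$ and votes $0$ in step~\ref{MVBAvoting}; otherwise every unblocked node in every essential subset of $\mathcal{P}_i$ only ever broadcasts $FINISH(A,r)$ for the common $A$, giving $\mathcal{P}_i$ strong support for $FINISH(A,r)$ and a vote of $1$. Either way $\mathcal{P}_i$ votes.

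For (ii), by ABBA-Strong-Validity (Proposition~\ref{propABBAStrongValidity}) some unblocked node input $1$, so it saw strong support for $FINISH(A,r)$ for the (unique, by the previous paragraph) value $A$. Then lemma~\ref{lemmaSupportTransfer} propagates weak support for $FINISH(A,r)$ to every unblocked node, which rebroadcasts $FINISH(A,r)$ in the terminate-on-$1$ branch of step~\ref{MVBALongStep}; once all have done so each sees strong support and outputs $A$. A short induction tracing $A$ back through the $INIT$ rounds to round $0$, combined with lemma~\ref{lemmaDABCValidAgreement}, shows $A\in\mathsf{values}_i^0$ for every unblocked $\mathcal{P}_i$, so the side condition in step~\ref{MVBALongStep} is satisfiable and they all terminate.

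The main obstacle is (iii): showing that a node which voted $1$ does not get stuck in the terminate-on-$0$ branch of step~\ref{MVBALongStep} waiting for a $CONT$ it might never receive under naive flooding, which only travels from a node to its listeners. The clean way around this is again ABBA-Strong-Validity: since $b=0$, every unblocked node's ABBA output traces through a chain $\mathcal{P}_i=\mathcal{P}_{i_0},\dots,\mathcal{P}_{i_n}$ with $\mathcal{P}_{i_{m}}\in\mathsf{UNL}_{i_{m-1}}$ to an unblocked node $\mathcal{P}_{i_n}$ that input $0$ and therefore broadcast $CONT$. Because each $\mathcal{P}_{i_{m-1}}$ listens to $\mathcal{P}_{i_m}$, the $CONT$ floods \emph{up} the chain: $\mathcal{P}_{i_n}$'s message reaches $\mathcal{P}_{i_{n-1}}$, which rebroadcasts, reaching $\mathcal{P}_{i_{n-2}}$, and so on up to $\mathcal{P}_i$. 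Thus every unblocked node receives a qualifying $CONT$, rebroadcasts $CONT(\mathsf{values}^r,r)$, and as value sets converge to the common eventual set $V^r$ (which has $|V^r|\geqslant 2$, since a $0$-vote requires a $CONT$ with at least two values) each node obtains strong support for $CONT(V^r,r)$ in every essential subset and samples $\rho_r$ via CRS-Termination. Finally, since the globally $\mathcal{I}_r$-minimal value $A^*$ of $V^r$ is eventually added to every node's $\mathsf{values}^r$ and triggers $INIT(A^*,r+1)$ by the minimum-index rule of step~\ref{MVBAinit1}, every unblocked node eventually broadcasts $INIT(A^*,r+1)$, obtains strong support, adds $A^*$ to $\mathsf{values}^{r+1}$ in step~\ref{MVBAinit2}, and thereby progresses to round $r+1$.
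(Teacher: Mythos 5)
Your proposal is correct and matches the paper's own proof in all essentials: the same reduction to the $(``STOP",r)$ ABBA instance, the same use of lemma~\ref{lemmaDABCValidAgreement} to show every unblocked node clears step~\ref{MVBAelecting} and casts a vote, and the same case analysis of the common output via proposition~\ref{propABBATermination} and proposition~\ref{propABBAStrongValidity}. In particular, your device of flooding a valid $CONT$ back up the chain of UNLs provided by Strong Validity is exactly the paper's argument for the terminate-on-$0$ branch, and your minimum-index argument for entering round $r+1$ is the paper's as well.
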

\begin{proof}
	By assumption eventually every unblocked node in $\mathsf{UNL}_k^{\infty}$ broadcasts $ELECT(\mathunderscore,r)$. Further, by lemma~\ref{lemmaDABCValidAgreement}, if any unblocked node in $\mathsf{UNL}_k^{\infty}$ broadcasts $ELECT(A,r)$ then eventually every unblocked node $\mathcal{P}_j\in\mathsf{UNL}_k^{\infty}$ adds $A$ to $\mathsf{values}_j^r$. Thus for any unblocked node $\mathcal{P}_j\in\mathsf{UNL}_k^{\infty}$, every unblocked node in $\mathsf{UNL}_j$ will eventually broadcast $ELECT(A,r)$ for some $A$ which is eventually in $\mathsf{values}_j^r$, allowing $\mathcal{P}_j$ to progress to step~\ref{MVBAvoting}.
	
	Since a healthy node only broadcasts $FINISH(A,r)$ for some value $A$ if some healthy node in its UNL broadcast $FINISH(A,r)$ first or it received strong support for $ELECT(A,r)$, by the same proof as in lemma~\ref{propRBCReadyBlocking} every healthy node in $\mathsf{UNL}_k^{\infty}$ that broadcasts a $FINISH(A,r)$ message does so for a common value $A$.
	
	Since every unblocked node $\mathsf{P}_i\in\mathsf{UNL}_k^{\infty}$ gets to step~\ref{MVBAvoting} in round $r$ by the first paragraph, every unblocked node in $\mathsf{UNL}_k^{\infty}$ either broadcasts $FINISH(A,r)$ for some common value $A$ or $CONT(\mathsf{values}_r,r)$ where $|\mathsf{values}_i^r|\geqslant 2$. For a given unblocked node $\mathcal{P}_j\in\mathsf{UNL}_k^{\infty}$, if every unblocked node in $\mathsf{UNL}_j$ broadcasts $FINISH(A,r)$ then $\mathcal{P}_j$ eventually receives strong support for $FINISH(A,r)$ and votes $1$ in the ABBA instance tagged with $(``STOP",r)$. Otherwise $\mathcal{P}_j$ eventually receives some $CONT(C,n,r)$ from some unblocked node $\mathcal{P}_i\in\mathsf{UNL}_j$. Since $\mathcal{P}_i$ is healthy, $C$ must have been a subset of $\mathsf{values}_i^r$, so by lemma~\ref{lemmaDABCValidAgreement} eventually $C\subseteq\mathsf{values}_j^r$, so $\mathcal{P}_j$ eventually sees the $CONT$ message as valid and votes $0$ in the ABBA instance tagged with $(``STOP",r)$. Thus every unblocked node in $\mathsf{UNL}_k^{\infty}$ eventually votes in the ABBA instance, and by proposition~\ref{propABBATermination}, the instance eventually terminates with probability $1$.
	
	Suppose the ABBA instance terminates on $1$. Then by proposition~\ref{propABBAStrongValidity}, there must have been some unblocked node in $\mathsf{UNL}_k^{\infty}$ that voted $1$ and thus received strong support for $FINISH(A,r)$. But if any unblocked node in $\mathsf{UNL}_k^{\infty}$ receives strong support for $FINISH(A,r)$ then by a similar proof as in proposition~\ref{propRBCReliability}, eventually every other unblocked node in $\mathsf{UNL}_k^{\infty}$ will receive strong support for $FINISH(A,r)$. Since an honest node $\mathcal{P}_i$ only broadcasts $FINISH(A,r)$ if $A\in\mathsf{values}_i^r\subseteq\mathsf{values}_i^0$, by lemma~\ref{lemmaDABCValidAgreement} eventually every unblocked node in $\mathsf{UNL}_k^{\infty}$ adds $A$ as a valid input. Thus after seeing that the ABBA instance tagged $(``STOP",r)$ terminated on $1$, eventually every unblocked node in $\mathsf{UNL}_k^{\infty}$ outputs $A$ in round $r$ and terminates.
	
	If on the other hand the ABBA instance terminates on $0$, then by proposition~\ref{propABBAStrongValidity}, for every unblocked node $\mathcal{P}_i\in\mathsf{UNL}_k^{\infty}$ there is a chain of unblocked nodes $\mathcal{P}_i=\mathcal{P}_{i_0},...,\mathcal{P}_{i_n}$ where $\mathcal{P}_{i_k}\in\mathsf{UNL}_{i_{k-1}}$ for all $k\leqslant n$ and $\mathcal{P}_{i_n}$ voted $0$. But a healthy and correct node $\mathcal{P}_{i_n}$ only votes $0$ in the $(``STOP",r)$ ABBA instance if it has broadcast a $CONT$ message which by lemma~\ref{lemmaDABCValidAgreement}, eventually every unblocked node in $\mathsf{UNL}_k^{\infty}$ can recognize as valid. Thus this $CONT$ message can be passed back along the chain until it reaches $\mathcal{P}_i$, who eventually sees it as valid. By lemma~\ref{lemmaDABCValidAgreement}, eventually there is some set $S$ such that for every unblocked node $\mathcal{P}_j\in\mathsf{UNL}_k^{\infty}$, $\mathsf{values}_j^r=S$, so eventually $\mathcal{P}_i$ will receive strong support for $CONT(S,r)$ and proceed to step~\ref{MVBAinit1}.
	
	Let $\mathcal{P}_j\in\mathsf{UNL}_k^{\infty}$ be unblocked and let $A_{min}$ be the value with minimum $\mathcal{I}_r$ index in $S_j^r$. For every unblocked node $\mathcal{P}_i\in\mathsf{UNL}_k^{\infty}$, since $\mathcal{P}_i$ sets $\mathsf{est}_i^{r+1}$ by hypothesis and $S_j^r=S_i^r$ by lemma~\ref{lemmaDABCValidAgreement}, we have $\mathcal{I}_r(A_{min})\leqslant \mathcal{I}_r(\mathsf{est}_i^{r+1})$ so $\mathcal{P}_i$ eventually adds $A_{min}$ to $\mathsf{values}_i^r$. Thus eventually every unblocked node in $\mathsf{UNL}_k^{\infty}$ broadcasts $INIT(A_{min},r+1)$, so eventually $\mathcal{P}_j$ can add $A_{min}$ to $\mathsf{values}_j^{r+1}$ and progress to round $r+1$.
\end{proof}

\begin{proposition}\label{lemmaDABCTermination}
	If $\mathcal{P}_k$ is strongly connected and for every unblocked node $\mathsf{P}_i\in\mathsf{UNL}_k^{\infty}$ $\mathsf{values}_i^0$ is bounded in size and eventually nonempty, then eventually every unblocked node in $\mathsf{UNL}_k^{\infty}$ outputs some value with probability $1$.
\end{proposition}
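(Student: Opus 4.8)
The plan is to combine the per-round dichotomy of Lemma~\ref{lemmaDABCElectTermination} with the strict ``set-shrinking'' bound of Lemma~\ref{lemmaDABCFiniteness} to show that the protocol can only run for a bounded number of rounds before being forced to terminate, and then to compose the finitely many probability-$1$ round-completion events into a single probability-$1$ guarantee.

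First I would establish the base case. Since $\mathsf{values}_i^0$ is eventually nonempty for every unblocked node $\mathcal{P}_i\in\mathsf{UNL}_k^{\infty}$, every such node eventually broadcasts an $ELECT$ message for round $0$ in step~\ref{MVBAbeginLoop} and hence reaches step~\ref{MVBAelecting} in round $0$. I would then argue by induction on $r$ that, with probability $1$, for each $r$ either every unblocked node in $\mathsf{UNL}_k^{\infty}$ terminates in some round $\leqslant r$, or every unblocked node \emph{enters} round $r$ (i.e.\ reaches step~\ref{MVBAelecting} in round $r$). The inductive step is precisely Lemma~\ref{lemmaDABCElectTermination}: if everyone reaches step~\ref{MVBAelecting} in round $r$, then with probability $1$ either everyone terminates in round $r$ or everyone progresses to round $r+1$; and a node progresses only by adding a value to $\mathsf{values}^{r+1}$ in step~\ref{MVBAinit2}, so progressing means it re-enters step~\ref{MVBAbeginLoop} with $\mathsf{values}^{r+1}$ nonempty and thereby reaches step~\ref{MVBAelecting} in round $r+1$. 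Crucially, this dichotomy is all-or-nothing per round, so it suffices to bound the number of rounds that can be entered.

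The crux is that bound. Writing $N=|S_i^0|$ (finite by the bounded-size hypothesis, and independent of the unblocked node by strong connectivity together with Lemma~\ref{lemmaDABCValidAgreement}), I would observe that any round $r$ that is actually entered has $|S_j^r|\geqslant 1$, while Lemma~\ref{lemmaDABCFiniteness} forces $|S_j^r|>|S_j^{r+1}|$ whenever the nodes progress from round $r$ into a (nonempty) round $r+1$. Hence the entered rounds satisfy $N=|S_j^0|>|S_j^1|>\cdots\geqslant 1$, a strictly decreasing sequence of positive integers, so at most $N-1$ progressions are possible and no round $r\geqslant N$ can ever be entered. Therefore at the latest in round $N-1$ the nodes cannot progress, and the dichotomy of Lemma~\ref{lemmaDABCElectTermination} must resolve to termination. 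Concretely, once $|S_j^r|=1$ every unblocked node's $\mathsf{values}^r$ has size at most $1$, so no unblocked node can accept a valid $CONT(C,r)$ with $|C|\geqslant 2$ in step~\ref{MVBAvoting} and hence none votes $0$; by the validity of the round-$r$ ABBA instance (proposition~\ref{propABBAStrongValidity}) that instance cannot output $0$, so it outputs $1$ and every unblocked node follows the terminate-on-$1$ branch of step~\ref{MVBALongStep} and outputs a value.

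Finally I would assemble the probabilistic conclusion: the protocol runs for at most $N$ rounds, and each entered round completes (terminates or progresses) with probability $1$ by Lemma~\ref{lemmaDABCElectTermination}, which ultimately rests on proposition~\ref{propABBATermination}. Since a finite intersection of probability-$1$ events still has probability $1$, with probability $1$ every unblocked node in $\mathsf{UNL}_k^{\infty}$ eventually outputs some value. I expect the main difficulty to be organizational rather than technical: phrasing the induction so that ``termination'' is genuinely all-or-nothing per round, and confirming that a progressing round really forces $|S^r|\geqslant 2$ so that the strictly decreasing size sequence cannot stall at $1$ without terminating. The substantive work is already carried by Lemmas~\ref{lemmaDABCValidAgreement}, \ref{lemmaDABCFiniteness}, and~\ref{lemmaDABCElectTermination}.
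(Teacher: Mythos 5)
Your proof is correct and takes essentially the same route as the paper's: the per-round dichotomy of lemma~\ref{lemmaDABCElectTermination} combined with the strict shrinking of lemma~\ref{lemmaDABCFiniteness} and the boundedness of $\mathsf{values}_i^0$ forces termination within at most $|S_i^0|$ rounds. The one divergence worth noting is the final round: the paper argues positively that with $S_k^r=\{A\}$ every unblocked node broadcasts $ELECT(A,r)$, hence $FINISH(A,r)$, hence votes $1$ (which is precisely what guarantees the ABBA instance receives an input from every unblocked node and therefore terminates), whereas your ``concretely'' aside is purely negative (no one can vote $0$, so by proposition~\ref{propABBAStrongValidity} the instance cannot output $0$) --- on its own that would leave ABBA's termination unestablished, but since your primary appeal is to the dichotomy lemma, whose proof already shows that every unblocked node casts a vote, the argument stands.
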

\begin{proof}
	By lemma~\ref{lemmaDABCElectTermination}, either every unblocked node in $\mathsf{UNL}_k^{\infty}$ terminates in some round $r$ or for every $r\geqslant 0$ every unblocked node in $\mathsf{UNL}_k^{\infty}$ eventually gets to round $r$ with probability $1$.
	
	Therefore, by lemma~\ref{lemmaDABCFiniteness} and our assumption that $\mathsf{values}_i^0$ is bounded (i.e., $S_i^0$ is finite), eventually either every unblocked node in $\mathsf{UNL}_k^{\infty}$ terminates or every unblocked node in $\mathsf{UNL}_k^{\infty}$ gets to some round $r$ where $|S_k^r|\leqslant 1$ with probability $1$. If $|S_k^r|<1$, then no honest node can ever progress past step~\ref{MVBAbeginLoop}, implying that every unblocked node terminates (since otherwise there would be an $r\geqslant 0$ such that no unblocked node in $\mathsf{UNL}_k^{\infty}$ eventually gets to round $r$ with probability $1$).
	
	Thus, every unblocked node in $\mathsf{UNL}_k^{\infty}$ gets to some round $r$ where $|S_k^r|=1$ with probability $1$. Letting $S_k^r=\{A\}$, every unblocked node is guaranteed to broadcast $ELECT(A,r)$, so every unblocked node broadcasts $FINISH(A,r)$, so every unblocked node votes $1$ in the ABBA instance tagged with $(``STOP",r)$, and finally every unblocked node terminates in round $r$ and ratifies $A$.
\end{proof}

\begin{theorem}\label{thmMIBA}
The protocol defined in section~\ref{sectionProtocol-MIBA} satisfies the properties of an external validity multi-valued Byzantine agreement algorithm in the open network model.
\end{theorem}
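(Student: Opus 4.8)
The plan is to verify the three defining properties of MVBA in turn, observing that two of the three have already been established in full generality above. MVBA-Consistency is precisely the statement of Proposition~\ref{propDABCConsistency}, which shows that two honest linked nodes cannot output distinct values, so I would simply cite it. Likewise, MVBA-Termination is precisely Proposition~\ref{lemmaDABCTermination}, which establishes that under strong connectivity of $\mathcal{P}_k$, together with the hypotheses that $\mathsf{values}_i^0$ is bounded and eventually nonempty for every unblocked node, every unblocked node in $\mathsf{UNL}_k^{\infty}$ eventually outputs a value with probability $1$; again I would cite it directly.

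The only property requiring a fresh argument is MVBA-Validity, and for this I would appeal directly to the structure of the protocol. The sole place a node produces an output is step~\ref{MVBALongStep}, and only along the branch taken when the ABBA instance tagged $(``STOP",r)$ terminates on $1$. In that branch the node explicitly waits for strong support for $FINISH(A,r)$ subject to the side condition $A\in\mathsf{values}_i^0$ before it outputs $A$ and terminates. Hence by construction any value $A$ output by $\mathcal{P}_i$ satisfies $A\in\mathsf{values}_i^0$, which is exactly MVBA-Validity.

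I do not anticipate a genuine obstacle here: all the substantive reasoning---the agreement transfer across rounds, the interaction with the underlying ABBA instances, the monotone shrinking of the valid-input sets from Lemma~\ref{lemmaDABCFiniteness}, and the probabilistic termination argument relying on CRS-Randomness and the collision resistance of $H$---has already been carried out in the preceding lemmas and propositions. The one point worth double-checking is that no alternative output path is hidden in the asynchronous re-execution of earlier steps; a brief inspection confirms that a node terminates only in the round-$r$ ``terminates on $1$'' branch of step~\ref{MVBALongStep}, so the validity check $A\in\mathsf{values}_i^0$ cannot be bypassed. The proof of the theorem therefore amounts to assembling these three observations.
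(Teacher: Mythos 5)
Your proposal is correct and matches the paper's proof exactly: the paper likewise cites Proposition~\ref{propDABCConsistency} for Consistency, Proposition~\ref{lemmaDABCTermination} for Termination, and observes that Validity follows trivially because step~\ref{MVBALongStep} only accepts a value $A$ if $A\in\mathsf{values}_i^0$. Your extra check that no alternative output path exists in the asynchronous re-execution is a sensible bit of diligence the paper leaves implicit.
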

\begin{proof}
Consistency is proven in proposition~\ref{propDABCConsistency}. Termination is proven in proposition~\ref{lemmaDABCTermination}.

Validity follows trivially from the fact that in step~\ref{MVBALongStep} we only accept the value $A$ if it is included in $\mathsf{values}_i^0$.
\end{proof}

\subsection{Reducing DABC to MVBA}
\subsubsection{Protocol}\label{sectionProtocol-FK}

Having developed our MVBA protocol, all that remains is to formalize our reduction of DABC to MVBA and prove its correctness. We begin first though with an intuitive discussion that helps to better understand our choice for how we guarantee Full-Knowledge for Cobalt.

As stated in section~\ref{sectionProtocol-DABCDef}, the basic idea of our reduction is to have the proposers distribute their amendment proposals using DRBC, and then use MVBA to agree on a single amendment for each slot. An obvious first option for agreeing on the activation time for an amendment $A$ is to include the activation time as part of the proposal for $A$. This easily guarantees agreement on activation times by the Agreement property of DRBC.

Unfortunately, there's no way to make such a system satisfy both Liveness and Full-Knowledge. For Full-Knowledge, nodes need to agree at some point after time $\tau$ on which amendments might be ratified with activation times earlier than $\tau$. If the proposal for $A$ specifies that $A$ must have activation time $\tau$, then the network adversary can thus just wait until the honest nodes have decided on which amendment could be ratified before time $\tau$, and then deliver $A$ to the honest nodes only after that point. Since no honest nodes knew about $A$ in time, there is then no way for $A$ to be ratified. Thus Liveness can't be guaranteed, since every amendment can be withheld long enough to cancel its validity.

Because of this problem, rather than requiring amendments to come packaged with an activation time, it becomes necessary to be able to agree cooperatively on an activation time for $A$ \textit{after} $A$ is received by the network. We now formally describe how we do this.

First, we assume there is some implementation-defined parameter $\tau_{int}$ that defines some interval duration. Making this parameter longer reduces contention going into consensus (which can speed up termination) and decreases network congestion, but making it too long can mean that you force you to wait longer before accepting  (which can slow down termination). Thus finding a good balance is important for optimal performance. In practice, setting $\tau_{int}$ to around $15$ seconds should give better performance than would be needed for any reasonable level of required urgency, while avoiding an unreasonable level of added network congestion.

We consider for every natural number $n$, there is a unique instance of MVBA that is designated for slot $n$. A proposer that wants to propose the amendment $A$ for slot $n_A$ runs DRBC to broadcast the message $(A,n_A)$. A node $\mathcal{P}_i$ supports this message in DRBC only if $\mathcal{P}_i$ has ratified an amendment for every slot below $n_A$, and $\mathcal{P}_i$ supports $A$ in the context of all of these previously ratified amendments. 

Let $P$ be a set that starts out empty. Upon accepting DRBC for $(A,n_A)$, $\mathcal{P}_i$ adds $(A,n_A)$ to $P$. For every time $\tau$ which is a multiple of $\tau_{int}$, upon arriving at time $\tau$, $\mathcal{P}_i$ runs the following protocol:
\begin{enumerate}
	\item Broadcast $CHECK(P,\tau)$.
	\item For a given pair $(A,n_A)$, once we have received a $CHECK(\mathunderscore,\tau)$ message that includes $(A,n_A)$ in its $P$ set from $q_S$ nodes in every essential subset $S\in\mathsf{ES}_i$, broadcast $ACCEPT(A,n_A,\tau)$. We may broadcast multiple $ACCEPT$ messages if the condition is also satisfied at some point for a different pair.
	\item Upon receiving weak support for $ACCEPT(A,n_A,\tau)$, broadcast $ACCEPT(A,n_A,\tau)$.
	\item Upon receiving strong support for $ACCEPT(A,n_A,\tau)$, add $(A,\tau)$ to $\mathsf{valid}_i^0$ in the MVBA instance for slot $n_A$, and remove any pairs from $P$ with slot $n_A$ (and don't add any new pairs to $P$ in the future that have slot $n_A$).
\end{enumerate}

We call the combination of the DRBC instances with the above protocol the \textbf{stamping protocol}. Effectively the stamping protocol just makes us continually try to pick out activation times for any supported amendment until eventually we see enough $ACCEPT$ messages that agree on the same timestamp so that we can use it for MVBA. Note that it is entirely possible with the above protocol to have multiple valid inputs that pertain to the same amendment and only differ in activation times. MVBA will choose a single activation time that everyone agrees upon, so this does not cause any issues.

Now to check which amendments are ratified by time $\tau$, we use a one-message \textbf{waiting protocol}: wait until, for every time $\tau'\leqslant \tau$ which is a multiple of $\tau_{int}$ and for every essential subset $S$, there exists some subset $T_{\tau'}\subseteq S$, such that $\vert T_{\tau'}\vert\geqslant q_S$, and from each node in $T_{\tau'}$ we received some message $CHECK(P,\tau')$ (possibly with different sets $P$ from different nodes) such that for every pair $(A,n_A)\in P$ we've ratified some amendment for the slot $n_A$.

Roughly speaking, the rationality behind these protocols is that if any healthy node broadcasts a $CHECK$ message for some amendment $A$, then we guarantee that some pair $(A,n_A,\tau)$ will eventually be accepted in the stamping protocol by all unblocked nodes. Therefore every unblocked node eventually provides some input into MVBA for the slot $n_A$, after which MVBA is guaranteed to terminate in a finite amount of time with probability $1$. This prevents infinite waiting in the waiting protocol. On the other hand, if any healthy node progresses past the waiting protocol for time $\tau$ without having seen some amendment $A$, then we guarantee that there could not have been enough $CHECK(\mathunderscore,\tau)$ messages containing $(A,n_A)$ for any healthy node to broadcast $ACCEPT(A,n_A,\tau)$, so $A$ cannot be accepted with timestamp $\tau$ by any healthy node.

Note that the above protocol usually requires waiting a short amount of time past $\tau$ for DABC to resolve before a node can learn all the amendments ratified before time $\tau$. A slight optimization would be to specify another duration parameter $\tau_{adv}$, and modify the protocol slightly so that an amendment that is accepted as $(A,n_A,\tau)$ actually has activation time $\tau+\tau_{adv}$, and the waiting protocol for time $\tau$ only waits for $\tau'\leqslant \tau-\tau_{adv}$. If $\tau_{adv}$ is set to the expected maximum amount of time that DABC should take to ratify some slot after all nodes provide input for that slot, then under normal conditions the waiting protocol for time $\tau$ will already be finished by time $\tau$.

\subsubsection{Analysis}\label{sectionProofs-FK}

We now prove the correctness of the full DABC protocol.

\begin{proposition}\label{propFKDRBCReliability}
	Outputs from the stamping protocol satisfy Assumed-Reliability and Assumed-Validity, if suggesting $(A,\tau)$ is defined to be broadcasting $CHECK(P,\tau)$ with $(A,n_A)\in P$.
\end{proposition}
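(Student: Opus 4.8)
The plan is to exploit the fact that the $ACCEPT$ messages in the stamping protocol behave exactly like the ``ready'' phase of the RBC protocol of section~\ref{sectionProtocol-RBC}. A healthy node first emits $ACCEPT(A,n_A,\tau)$ only after receiving $CHECK(\mathunderscore,\tau)$ messages listing $(A,n_A)$ from $q_S$ nodes in every $S\in\mathsf{ES}_i$ (the analogue of strong support for $ECHO$), then relays it upon weak support, and finally treats strong support for it as ``acceptance''---precisely the moment $(A,\tau)$ enters $\mathsf{values}_i^0$. Because distinct tuples $(A,n_A,\tau)$ never block one another (a node may broadcast any number of $ACCEPT$ messages), each tuple can be analysed as an independent reliable broadcast and no ``ready-blocking'' lemma is needed. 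I would therefore prove the two assumed properties separately, reusing the machinery already developed for RBC and DRBC, with ``suggesting $(A,\tau)$'' read as ``broadcasting $CHECK(P,\tau)$ with $(A,n_A)\in P$''.

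For Assumed-Reliability, suppose $\mathcal{P}_k$ is strongly connected and a healthy node $\mathcal{P}_i\in\mathsf{UNL}_k^{\infty}$ adds $(A,\tau)$ to $\mathsf{values}_i^0$; by the final step of the stamping protocol this means $\mathcal{P}_i$ saw strong support for $ACCEPT(A,n_A,\tau)$. Since strong connectivity makes every pair of healthy nodes in $\mathsf{UNL}_k^{\infty}$ fully linked, lemma~\ref{lemmaSupportTransfer} gives that every unblocked node in $\mathsf{UNL}_k^{\infty}$ eventually sees weak support for $ACCEPT(A,n_A,\tau)$ and hence broadcasts it. Exactly as in proposition~\ref{propRBCReliability}, once every unblocked node broadcasts $ACCEPT(A,n_A,\tau)$, each unblocked node $\mathcal{P}_j$ receives it from at least $q_S$ nodes in every $S\in\mathsf{ES}_j$ (at most $\min\{t_S,n_S-q_S\}\leqslant n_S-q_S$ nodes per subset are non-unblocked), so $\mathcal{P}_j$ likewise sees strong support and adds $(A,\tau)$ to $\mathsf{values}_j^0$.

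For Assumed-Validity, suppose an unblocked node $\mathcal{P}_i\in\mathsf{UNL}_k^{\infty}$ adds $(A,\tau)$, so it saw strong support for $ACCEPT(A,n_A,\tau)$; since $\mathcal{P}_i$ is unblocked, in each $S\in\mathsf{ES}_i$ at least $q_S-t_S>0$ of the senders are unblocked, so some unblocked node in $\mathsf{UNL}_k^{\infty}$ broadcast $ACCEPT(A,n_A,\tau)$. As $\mathsf{UNL}_k^{\infty}$ is finite there is a \emph{first} such unblocked node $\mathcal{P}_j$. I would then argue, as in lemma~\ref{propRBCReadyBlocking}, that $\mathcal{P}_j$ cannot have fired from weak support: weak support needs $t_S+1$ senders in some $S\in\mathsf{ES}_j$, and since $\mathcal{P}_j$ is unblocked at most $\min\{t_S,n_S-q_S\}\leqslant t_S$ of those are non-unblocked, forcing an earlier unblocked broadcaster and contradicting minimality. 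Hence $\mathcal{P}_j$ fired via the $CHECK$ rule, receiving suggestions of $(A,\tau)$ from $q_S$ nodes in every $S\in\mathsf{ES}_j$, all sent before any node could begin MVBA for slot $n_A$. Writing $b\leqslant\min\{t_S,n_S-q_S\}$ for the number of non-unblocked nodes in $S$, at most $b$ of these $q_S$ senders are non-unblocked, so at least $q_S-b$ unblocked nodes in $S$ suggested $(A,\tau)$, while the unblocked non-suggesters number at most $n_S-q_S$. Equation~\ref{eqTandQ1} gives $b\leqslant t_S<2q_S-n_S$, hence $q_S-b>n_S-q_S$, so strictly more than half of the unblocked nodes in $S$ suggested $(A,\tau)$.

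The main obstacle is the Assumed-Validity counting; the reliability direction is essentially immediate from the RBC analysis. The delicate point is that the witness must be produced as \emph{unblocked}, not merely healthy, so that the bound $b\leqslant\min\{t_S,n_S-q_S\}$ is available on each of its essential subsets---this is exactly what forces the ``first unblocked broadcaster'' argument in place of the simpler ``first healthy broadcaster'' of lemma~\ref{propRBCReadyBlocking}. Once the correct witness is isolated, the majority bound is a one-line consequence of equation~\ref{eqTandQ1}.
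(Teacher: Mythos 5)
Your proof is correct and follows essentially the same route as the paper's: Assumed-Reliability is obtained by identifying the $ACCEPT$ mechanics with the $READY$ mechanics of RBC and reusing proposition~\ref{propRBCReliability}, and Assumed-Validity is obtained by exhibiting an unblocked node that fired via the $CHECK$ rule and then counting with equation~\ref{eqTandQ1}. The only difference is that you spell out the ``first unblocked broadcaster'' minimality argument to rule out firing from weak support, a step the paper's proof glosses over (it simply asserts the unblocked broadcaster must have seen $q_S$ suggestions in every essential subset), so your write-up is if anything slightly more rigorous on that point.
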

\begin{proof}
	The mechanics of the $ACCEPT$ message in modified DRBC are identical to the mechanics of the $READY$ message in RBC, so the proof of Assumed-Reliability is the same as proposition~\ref{propRBCReliability}.
	
	For Assumed-Validity, suppose $\mathcal{P}_k$ is strongly connected and an unblocked node $\mathcal{P}_i\in\mathsf{UNL}_k^{\infty}$ adds $(A,\tau)$ to $\mathsf{values}_i^0$. Then some unblocked node $\mathcal{P}_j\in\mathsf{UNL}_k^{\infty}$ must have broadcast $ACCEPT(A,n_A,\tau)$, which it can only do having received messages suggesting $(A,\tau)$ from $q_S$ nodes in every essential subset $S\in\mathsf{ES}_j$. But for any node that broadcasts $CHECK(P,\tau)$ \textit{after} beginning MVBA for slot $n_A$, the stamping protocol necessitates that no pair in $P$ can have slot $n_A$. Thus $q_S$ nodes in every essential subset $S\in\mathsf{ES}_j$ suggested $(A,\tau)$ before beginning MVBA for slot $n_A$, from which Assumed-Validity follows from equation~\ref{eqTandQ1}.
\end{proof}

The following two lemmas are key to how the modified algorithm satisfies the Full Knowledge property.

\begin{lemma}\label{lemmaFKDRBCAgreement}
	If $\mathcal{P}_k$ is strongly connected and some healthy node in $\mathsf{UNL}_k^{\infty}$ broadcasts $CHECK(P,t)$, then for every $A\in P$, eventually every unblocked node in $\mathsf{UNL}_k^{\infty}$ accepts modified DRBC for some pair $(A,\mathunderscore)$.
\end{lemma}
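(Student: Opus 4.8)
The plan is to reduce this directly to the Reliability property of modified DRBC (proposition~\ref{propRBCReliability}), which continues to hold for the modified protocol as established in the DRBC theorem. The key observation is purely mechanical and comes straight from the stamping protocol: a node adds a pair $(A,n_A)$ to its set $P$ \emph{exactly} when it accepts modified DRBC for $(A,n_A)$. Hence if a healthy node $\mathcal{P}_h\in\mathsf{UNL}_k^{\infty}$ broadcasts $CHECK(P,t)$, then for every pair $(A,n_A)$ present in $P$ at the moment of that broadcast, $\mathcal{P}_h$ must have previously accepted modified DRBC for $(A,n_A)$. Here I would use that $\mathcal{P}_h$ is healthy and hence honest, so the broadcast faithfully reflects its genuine protocol state: an honest node only sends messages it would send were it correct, and a correct node includes $(A,n_A)$ in $P$ only after accepting DRBC for it.

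With this in hand the rest is immediate. I would observe that $\mathcal{P}_h$ is a healthy node in $\mathsf{UNL}_k^{\infty}$ and that $\mathcal{P}_k$ is strongly connected by hypothesis, so the hypotheses of proposition~\ref{propRBCReliability} are satisfied for the DRBC instance that broadcasts $(A,n_A)$. Applying that proposition yields that every unblocked node $\mathcal{P}_j\in\mathsf{UNL}_k^{\infty}$ eventually accepts modified DRBC for $(A,n_A)$, which is in particular a pair of the form $(A,\mathunderscore)$. Since the argument applies to every pair in $P$, it applies to every $A$ appearing in $P$, and the lemma follows.

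There is essentially no hard step here; the statement is a direct corollary of Reliability. The only points that require any care are (i) confirming that membership of $(A,n_A)$ in $P$ at broadcast time genuinely certifies a prior DRBC acceptance --- this needs only the observation that pairs enter $P$ precisely upon acceptance, and that the later \emph{removal} of a pair with slot $n_A$ (when it is added to the MVBA valid inputs) does not retroactively undo the fact that it was once accepted --- and (ii) confirming that Reliability indeed carries over to the democratic broadcast variant, which is already justified in the proof of the DRBC theorem. Once these two observations are made explicit, no further argument is needed.
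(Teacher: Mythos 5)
Your proof has a genuine gap, and it comes from conflating two different acceptance events in the stamping protocol of section~\ref{sectionProtocol-FK}. A pair $(A,n_A)$ enters $P$ when a node accepts \emph{plain} DRBC for the broadcast message $(A,n_A)$, i.e.\ upon strong support for $READY$ in that DRBC instance. ``Accepting modified DRBC for a pair'' --- the event in this lemma's conclusion, in lemma~\ref{lemmaFKSufficientWaiting}, and in proposition~\ref{propFKDRBCReliability} --- is a different event: receiving strong support for $ACCEPT(A,n_A,\tau)$ and thereby adding $(A,\tau)$ to the valid inputs $\mathsf{values}_i^0$ of the MVBA instance for slot $n_A$. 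In particular, the second coordinate of the accepted pair is an activation \emph{timestamp}, not the slot number; the underscore in $(A,\mathunderscore)$ ranges over times $\tau$. So membership of $(A,n_A)$ in $P$ at broadcast time certifies only a prior plain-DRBC acceptance by the broadcaster; it certifies no modified-DRBC acceptance by anyone --- at the moment $CHECK(P,t)$ is sent, no $ACCEPT$ message concerning $A$ need exist anywhere in the network. Consequently your application of proposition~\ref{propRBCReliability} to ``the DRBC instance that broadcasts $(A,n_A)$'' proves only that every unblocked node in $\mathsf{UNL}_k^{\infty}$ eventually accepts that underlying broadcast, i.e.\ adds $(A,n_A)$ to its own $P$ set. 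That is strictly weaker than the lemma, and it is only the first step of the paper's argument.

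What is missing is the passage from universal plain-DRBC acceptance to universal stamped acceptance, which is where the $CHECK$/$ACCEPT$ machinery is actually used. The paper first reduces the goal via proposition~\ref{propFKDRBCReliability} (Assumed-Reliability of the stamped outputs): it suffices that \emph{some} healthy node in $\mathsf{UNL}_k^{\infty}$ accept a stamped pair $(A,t')$. Then it argues the dichotomy: either some unblocked node already accepts such a pair (done), or else every unblocked node --- each of which eventually accepts plain DRBC for $(A,n_A)$ by proposition~\ref{propRBCReliability} and so holds $(A,n_A)$ in $P$ --- broadcasts $CHECK(\mathunderscore,t')$ containing $(A,n_A)$ at some common later multiple $t'$ of $\tau_{int}$; since every essential subset of an unblocked node contains at least $q_S$ unblocked nodes, each unblocked node then meets the quorum condition, broadcasts $ACCEPT(A,n_A,t')$, and hence eventually sees strong support for $ACCEPT(A,n_A,t')$ and accepts $(A,t')$. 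Note also that your remark (i) about removal from $P$ points in the wrong direction: removal is harmless for certifying the broadcaster's past acceptance, but it is precisely what threatens the forward propagation just described (a node that has already stamped slot $n_A$ stops including $(A,n_A)$ in its $CHECK$ messages), which is why the dichotomy above is needed. None of this coordination-on-a-common-timestamp argument appears in your writeup, so as written the proof does not establish the lemma.
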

\begin{proof}
	By proposition~\ref{propFKDRBCReliability}, if any healthy node in $\mathsf{UNL}_k^{\infty}$ accepts modified DRBC for some pair $(A,t')$ then eventually every unblocked node in $\mathsf{UNL}_k^{\infty}$ accepts modified DRBC for $(A,t')$. Thus it suffices to show that if some healthy node in $\mathsf{UNL}_k^{\infty}$ broadcasts $CHECK(P,t)$, then for every $A\in P$, eventually some healthy node in $\mathsf{UNL}_k^{\infty}$ accepts modified DRBC for some pair $(A,t')$ with $t'\geqslant t$.
	
	Note that if $\mathcal{P}_i$ is healthy and has not yet accepted some pair $(A,\mathunderscore)$, then $\mathcal{P}_i$ broadcasts $CHECK(P,t)$ if and only if it would have accepted unmodified DRBC for every $A\in P$ before time $t$. By proposition~\ref{propRBCReliability}, if $\mathcal{P}_i$ broadcasts $CHECK(P,t)$ then for every $A\in P$ either some unblocked node in $\mathsf{UNL}_k^{\infty}$ accepts some pair $(A,\mathunderscore)$ or eventually there is some $t'$ for which every unblocked node in $\mathsf{UNL}_k^{\infty}$ broadcasts some $CHECK(P,t')$ with $A\in P$. Thus every unblocked node in $\mathsf{UNL}_k^{\infty}$ broadcasts $ACCEPT(A,t')$, so eventually every unblocked node in $\mathsf{UNL}_k^{\infty}$ accepts $(A,t')$.
\end{proof}

\begin{lemma}\label{lemmaFKSufficientWaiting}
If $\mathcal{P}_k$ is healthy and weakly connected and receives strong support for $CHECK(\mathunderscore,t)$, then for any amendment $A$ that is not present in \textit{any} of the received $CHECK(\mathunderscore, t)$ messages, no $\mathcal{P}_k$ will never ever accept modified DRBC for $(A,t)$.
\end{lemma}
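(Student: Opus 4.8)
The plan is to run a blocking argument in the style of lemma~\ref{propRBCReadyBlocking}, in which the $CHECK$ messages carrying timestamp $t$ play the role of the $ECHO$ phase and the $ACCEPT$ messages play the role of the $READY$ phase; indeed, proposition~\ref{propFKDRBCReliability} already observes that the $ACCEPT$ mechanics coincide with the $READY$ mechanics of RBC. Fix the amendment $A$ together with its slot $n_A$, and say that an honest node is an \emph{$M$-witness} if the single $CHECK$ message it broadcasts for timestamp $t$ has $(A,n_A)$ in its $P$ set, and an \emph{$M'$-witness} otherwise. Since each honest node broadcasts exactly one $CHECK$ message for the fixed timestamp $t$, and that message either contains $(A,n_A)$ or not, no honest node is both an $M$-witness and an $M'$-witness; thus ``received $(A,n_A)\in P$'' and ``received $(A,n_A)\notin P$'' are contradictory messages in the sense required by lemma~\ref{lemmaSupportBlocking}.

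First I would record that $\mathcal{P}_k$ has strong support for ``$(A,n_A)\notin P$'': by hypothesis it received a $CHECK$ message of timestamp $t$ from $q_S$ nodes in every essential subset $S\in\mathsf{ES}_k$, and none of those messages contains $A$, so each of those $q_S$ senders is an $M'$-witness. I may also assume $\mathcal{P}_k$ is correct, since a healthy but non-correct (hence crashed) node never accepts anything and the conclusion holds vacuously. The core step is then to show that no healthy node in $\mathsf{UNL}_k^{\infty}$ ever broadcasts $ACCEPT(A,n_A,t)$. Suppose to the contrary that some healthy node does; since $\mathsf{UNL}_k^{\infty}$ is finite there is a temporally first healthy node $\mathcal{P}_{first}\in\mathsf{UNL}_k^{\infty}$ to broadcast $ACCEPT(A,n_A,t)$. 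An honest node broadcasts $ACCEPT(A,n_A,t)$ only after either (i) receiving strong support for ``$(A,n_A)\in P$'' among timestamp-$t$ $CHECK$ messages, or (ii) receiving weak support for $ACCEPT(A,n_A,t)$; in case (ii), exactly as in lemma~\ref{propRBCReadyBlocking}, a healthy node must have received that $ACCEPT$ from a healthy node in its own UNL, a node that lies in $\mathsf{UNL}_k^{\infty}$ and broadcast $ACCEPT(A,n_A,t)$ strictly earlier, contradicting the minimality of $\mathcal{P}_{first}$. Hence $\mathcal{P}_{first}$ is in case (i) and has strong support for ``$(A,n_A)\in P$''. But $\mathcal{P}_{first}$ is healthy and lies in $\mathsf{UNL}_k^{\infty}$, so by weak connectivity $\mathcal{P}_k$ and $\mathcal{P}_{first}$ are fully linked, hence linked; applying lemma~\ref{lemmaSupportBlocking} to $\mathcal{P}_k$'s strong support for ``$(A,n_A)\notin P$'' and $\mathcal{P}_{first}$'s strong support for the contradictory ``$(A,n_A)\in P$'' yields a contradiction.

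Finally I would close the argument by observing that for $\mathcal{P}_k$ to accept $(A,t)$ it must receive strong support for $ACCEPT(A,n_A,t)$, that is, $ACCEPT(A,n_A,t)$ from $q_S$ nodes in every $S\in\mathsf{ES}_k$. Since $\mathcal{P}_k$ is healthy, each such $S$ contains at most $\min\{t_S,n_S-q_S\}\leqslant t_S$ unhealthy nodes, so at least $q_S-t_S\geqslant 1$ of those senders are healthy; any such healthy sender lies in $\mathsf{UNL}_k\subseteq\mathsf{UNL}_k^{\infty}$ and would have broadcast $ACCEPT(A,n_A,t)$, contradicting the core step. Therefore $\mathcal{P}_k$ never accepts $(A,t)$. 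I expect the main obstacle to be the core step of the second paragraph: correctly peeling the $ACCEPT$ messages back through the weak-support route to a single healthy node that directly witnessed strong support for a timestamp-$t$ $CHECK$ message containing $(A,n_A)$, so that the overlap guarantee of lemma~\ref{lemmaSupportBlocking} applies against $\mathcal{P}_k$'s strong support for the complementary event. The subtle bookkeeping point that makes the blocking lemma applicable is that a single $CHECK$ message is either an $M$-witness or an $M'$-witness but never both, so that the two supports are genuinely contradictory.
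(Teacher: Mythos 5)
Your proof is correct and takes essentially the same route as the paper's: a blocking/counting argument that uses weak connectivity (full linkage of $\mathcal{P}_k$ with a healthy node in $\mathsf{UNL}_k^{\infty}$) to pit $\mathcal{P}_k$'s strong support for timestamp-$t$ $CHECK$ messages omitting $(A,n_A)$ against strong support for $CHECK$ messages containing it, exactly in the style of lemma~\ref{lemmaSupportBlocking}. The only difference is that you spell out the peeling-back step (locating the temporally first healthy broadcaster of $ACCEPT(A,n_A,t)$, which must have been triggered by strong $CHECK$ support), a step the paper compresses into its opening sentence; this is a faithful elaboration rather than a different approach.
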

\begin{proof}
The proof of this is more or less the same as the proof of lemma~\ref{lemmaSupportBlocking}.

Suppose a healthy node $\mathcal{P}_i\in\mathsf{UNL}_k^{\infty}$ accepts modified DRBC for $(A,t)$. Then there must have been $q_S$ nodes in every essential subset $S\in\mathsf{ES}_i$ which broadcast some $CHECK(\mathunderscore,t)$ message including $A$. Since $\mathcal{P}_k$ is weakly connected, it is in particular fully linked to $\mathcal{P}_i$, so there is some $S\in\mathsf{ES}_k$ in which at least $q_S-t_S\geqslant t_S+1$ correct nodes broadcast some $CHECK(\mathunderscore,t)$ message including $A$ and $q_S\geqslant n_S-t_S$. Since honest nodes only broadcast a single $CHECK$ message for each timestamp, $\mathcal{P}_k$ thus can receive at most $n_S-(t_S+1)<q_S$ $CHECK(\mathunderscore,t)$ messages from nodes in $S$ that do not include $A$.
\end{proof}

\begin{lemma}\label{propDABCTermination}
If $\mathcal{P}_k$ is strongly connected and any healthy node in $\mathsf{UNL}_k^{\infty}$ broadcasts $CHECK(P,\tau)$ with some pair $(A,n_A)\in P$, then eventually every unblocked node in $\mathsf{UNL}_k^{\infty}$ ratifies some pair $(A',\tau')$ for slot $n_A$.
\end{lemma}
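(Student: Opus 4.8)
The goal is to show that once some healthy node broadcasts a $CHECK$ message containing a pair $(A, n_A)$, every unblocked node in $\mathsf{UNL}_k^{\infty}$ eventually ratifies \emph{some} amendment for slot $n_A$ (not necessarily $A$ itself, since MVBA may pick a different valid input). The plan is to chain together the stamping-protocol reliability result with the MVBA termination theorem. First I would apply lemma~\ref{lemmaFKDRBCAgreement}: since $\mathcal{P}_k$ is strongly connected and a healthy node broadcast $CHECK(P,\tau)$ with $(A,n_A) \in P$, every unblocked node in $\mathsf{UNL}_k^{\infty}$ eventually accepts modified DRBC for some pair $(A, \mathunderscore)$. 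By the $ACCEPT$ mechanics in step~4 of the stamping protocol, this means every unblocked node eventually adds some pair $(A, \tau')$ to its valid-input set $\mathsf{values}_i^0$ for the MVBA instance tagged with slot $n_A$.

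The next step is to feed this into MVBA-Termination (proposition~\ref{lemmaDABCTermination}). That theorem requires three things of the slot-$n_A$ MVBA instance: strong connectivity of $\mathcal{P}_k$ (given by hypothesis), that $\mathsf{values}_i^0$ is eventually nonempty for every unblocked node in $\mathsf{UNL}_k^{\infty}$ (just established via the preceding paragraph), and that $\mathsf{values}_i^0$ has bounded size. With these in hand, proposition~\ref{lemmaDABCTermination} gives that every unblocked node in $\mathsf{UNL}_k^{\infty}$ eventually outputs some value from the slot-$n_A$ MVBA instance with probability $1$, and by the reduction this output is exactly the ratified pair $(A', \tau')$ for slot $n_A$. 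Also note that Assumed-Reliability and Assumed-Validity for this instance are supplied by proposition~\ref{propFKDRBCReliability}, so the MVBA instance is legitimately set up.

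The main obstacle I expect is justifying the boundedness of $\mathsf{values}_i^0$, which proposition~\ref{lemmaDABCTermination} explicitly requires but which is not immediately guaranteed here. A node adds a pair $(A', \tau')$ to its slot-$n_A$ valid inputs for each amendment accepted through the stamping protocol for that slot, and in principle the network adversary could try to inject unboundedly many distinct amendments (or distinct timestamps) for slot $n_A$. The resolution is that step~4 of the stamping protocol stops adding pairs with slot $n_A$ once MVBA for that slot has begun and prunes $P$ accordingly; combined with the fact that only finitely many amendments can be accepted through DRBC before the instance terminates, the valid-input set stays finite. I would argue this carefully, possibly invoking that each $(A', \tau')$ requires strong support for a distinct $ACCEPT$ message and that the waiting/pruning discipline bounds how many such messages can accumulate before the slot-$n_A$ MVBA resolves. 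Once boundedness is secured, the termination conclusion follows directly.
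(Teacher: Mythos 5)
Your overall decomposition matches the paper's proof exactly: show that every unblocked node in $\mathsf{UNL}_k^{\infty}$ eventually holds a common valid input for the slot-$n_A$ MVBA instance, show that the valid-input sets are bounded, and then invoke MVBA-Termination (proposition~\ref{lemmaDABCTermination}). Your first step is in fact slightly cleaner than the paper's: you cite lemma~\ref{lemmaFKDRBCAgreement} directly, whereas the paper re-runs a case analysis from DRBC-Reliability (either some unblocked node gets strong support for an $ACCEPT(A,n_A,\tau')$, or every unblocked node eventually broadcasts a $CHECK$ containing $(A,n_A)$), with both cases landing at the same conclusion via Assumed-Reliability. That substitution is legitimate and loses nothing.

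The gap is in your boundedness argument, which as sketched is circular. You bound the number of valid inputs by how many amendments can be accepted ``before the instance terminates'' (and later ``before the slot-$n_A$ MVBA resolves''), but termination of that very instance is what the lemma is trying to establish; if the instance never terminated, your argument would yield no bound at all, and MVBA-Termination could never be applied. The paper anchors the cutoff on a different event, one already guaranteed unconditionally by the first part of the argument: every unblocked node eventually accepts its \emph{first} valid input for slot $n_A$, and from that moment on, by the pruning rule in step 4 of the stamping protocol, it never again places a slot-$n_A$ pair in any $CHECK$ message. Since each node can suggest only finitely many amendments in the finite time before its own cutoff, and since a pair can only become a valid input if unblocked nodes suggested it (strong support for an $ACCEPT$ traces back to $CHECK$ messages from unblocked nodes), the total number of slot-$n_A$ valid inputs that ever exist is finite --- with no reference to whether or when MVBA terminates. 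Replacing ``before MVBA resolves'' with ``before the node accepts its first valid input for slot $n_A$'' closes the gap, and the rest of your plan then goes through as stated.
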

\begin{proof}
By DRBC-Reliability, if a healthy node in $\mathsf{UNL}_k^{\infty}$ broadcasts $CHECK(P,\tau)$ with $(A,n_A)\in P$, then eventually either some unblocked node in $\mathsf{UNL}_k^{\infty}$ receives strong support for $ACCEPT(A,n_A,\tau')$ for some $\tau'$ or eventually every unblocked node broadcasts $CHECK(\mathunderscore,\tau')$ for some $\tau'$ and with a $P$-set containing $(A,n_A)$. In the former case every unblocked node in $\mathsf{UNL}_k^{\infty}$ eventually adds $(A,\tau')$ as a valid input for MVBA on slot $n_A$ by Assumed-Reliability; in the latter case the same is clearly true.

Since honest nodes stop suggesting new amendments with slot number $n_A$ after they accept their first amendment through DRBC for an amendment with slot number $n_A$, if eventually every unblocked node in $\mathsf{UNL}_k^{\infty}$ accepts a valid input for slot number $n_A$, then every unblocked node can only accept a finite number of valid inputs for slot number $n$; indeed, an unblocked node in $\mathsf{UNL}_k^{\infty}$ can only accept a valid input if it some unblocked node in $\mathsf{UNL}_k^{\infty}$ suggested it, but since a node clearly cannot suggest an infinite number of amendments in a finite amount of time, only a finite number of amendments with slot number $n_A$ are supported by any unblocked node in $\mathsf{UNL}_k^{\infty}$.

Thus eventually every unblocked node in $\mathsf{UNL}_k^{\infty}$ eventually sees a common value $(A,\tau')$ as a valid input for MVBA on slot $n_A$, and the number of valid inputs for any unblocked node in $\mathsf{UNL}_k^{\infty}$ is bounded. Thus by MVBA-Termination, every unblocked node in $\mathsf{UNL}_k^{\infty}$ terminates MVBA with probability $1$.
\end{proof}

\begin{proposition}\label{propFKNonBlock}
If $\mathcal{P}_k$ is strongly connected and unblocked, and runs the waiting protocol for any time $\tau$, then eventually the waiting protocol terminates.
\end{proposition}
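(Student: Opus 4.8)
The plan is to show that the waiting condition — which is monotone, so that once met it stays met — is eventually satisfied with probability $1$ (matching the probabilistic phrasing of DABC-Full-Knowledge). Recall that $\mathcal{P}_k$ waits until, for every multiple $\tau'\leqslant\tau$ of $\tau_{int}$ and every essential subset $S\in\mathsf{ES}_k$, there is a quorum $T_{\tau'}\subseteq S$ with $|T_{\tau'}|\geqslant q_S$ from each node of which it has received a $CHECK(\mathunderscore,\tau')$ message all of whose advertised slots it has already ratified. First I would observe that since $\mathcal{P}_k$ is unblocked, each $S\in\mathsf{ES}_k$ contains at most $\min\{t_S,n_S-q_S\}\leqslant n_S-q_S$ non-unblocked nodes, hence at least $q_S$ unblocked nodes; these all lie in $\mathsf{UNL}_k\subseteq\mathsf{UNL}_k^{\infty}$, and since they are in $\mathsf{UNL}_k$, $\mathcal{P}_k$ listens to them and receives all their broadcasts. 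The natural candidate for each $T_{\tau'}$ is therefore the set of unblocked nodes in $S$.

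Next I would fix such a $\tau'$, an $S\in\mathsf{ES}_k$, and an unblocked node $\mathcal{P}_j\in S$. Because time always advances, $\mathcal{P}_j$ eventually reaches local time $\tau'$ and broadcasts a single $CHECK(P_j,\tau')$, where $P_j$ is finite since a node can only have accepted finitely many DRBC pairs by any finite time; this message reaches $\mathcal{P}_k$. For each $(A,n_A)\in P_j$, the node $\mathcal{P}_j$ is healthy (being unblocked) and lies in $\mathsf{UNL}_k^{\infty}$, so Lemma~\ref{propDABCTermination} applies and guarantees that, with probability $1$, every unblocked node in $\mathsf{UNL}_k^{\infty}$ eventually ratifies some amendment for slot $n_A$.

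The main obstacle is that Lemma~\ref{propDABCTermination} speaks only of unblocked nodes in $\mathsf{UNL}_k^{\infty}$, whereas the waiting condition needs $\mathcal{P}_k$ \emph{itself} to have ratified slot $n_A$, and $\mathcal{P}_k$ need not belong to $\mathsf{UNL}_k^{\infty}$. I would close this gap by noting that ratification of slot $n_A$ is purely local: through the output in step~\ref{MVBALongStep}, a node ratifies by receiving strong support for a $FINISH(A,r)$ message, and MVBA-Consistency forces this value $A$ to be common across all the unblocked nodes that terminate. Since each $S\in\mathsf{ES}_k$ already contains at least $q_S$ unblocked nodes of $\mathsf{UNL}_k^{\infty}$, and each such node broadcasts this common $FINISH(A,r)$, the node $\mathcal{P}_k$ — being strongly connected, unblocked, and running the same MVBA instance — eventually receives strong support for $FINISH(A,r)$ in every one of its essential subsets and so ratifies slot $n_A$ as well. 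In other words, the argument establishing termination for nodes in $\mathsf{UNL}_k^{\infty}$ transfers verbatim to $\mathcal{P}_k$.

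Finally I would assemble the pieces. There are only finitely many multiples $\tau'\leqslant\tau$ of $\tau_{int}$, finitely many essential subsets in $\mathsf{ES}_k$, finitely many unblocked nodes in each, and each such node advertises a finite $P_j$, so the full waiting condition is a conjunction of finitely many events, each holding eventually with probability $1$ and remaining true once true (received messages and ratifications persist). A finite intersection of probability-$1$ events has probability $1$, so with probability $1$ the unblocked quorums $T_{\tau'}$ simultaneously witness the condition for every pair $(\tau',S)$, and the waiting protocol terminates.
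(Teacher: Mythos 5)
Correct, and essentially the paper's own argument: you take the unblocked nodes in each $S\in\mathsf{ES}_k$ (at least $q_S$ of them, since $\mathcal{P}_k$ is unblocked) as the quorum witnesses, feed their $CHECK(\mathunderscore,\tau')$ messages into lemma~\ref{propDABCTermination}, and conclude that every slot they advertise is eventually ratified, so the monotone waiting condition is eventually satisfied. The only divergences are that you invoke lemma~\ref{propDABCTermination} directly rather than routing through lemma~\ref{lemmaFKDRBCAgreement} first, and that you explicitly bridge the fact that the lemma guarantees ratification only for unblocked nodes in $\mathsf{UNL}_k^{\infty}$ whereas the waiting condition concerns $\mathcal{P}_k$ itself (which need not lie in $\mathsf{UNL}_k^{\infty}$) --- a detail the paper's proof silently elides by applying the lemma to $\mathcal{P}_k$ outright, and which your quorum-of-$FINISH$-broadcasts argument closes soundly.
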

\begin{proof}
	Once $\mathcal{P}_i$ has received all of the $CHECK(\mathunderscore,t')$ messages from every unblocked node in $\mathsf{UNL}_k$ for every $t'\leqslant t$, then for any amendment $A$ included in one of these $CHECK$ messages, by lemma~\ref{lemmaFKDRBCAgreement} eventually every unblocked node in $\mathsf{UNL}_k^{\infty}$ accepts modified DRBC for some pair $(A,\mathunderscore)$. Thus by lemma~\ref{propDABCTermination}, $\mathcal{P}_i$ eventually ratifies some amendment for slot $n_A$.
\end{proof}

\begin{proposition}\label{propFKSufWait}
	If $\mathcal{P}_k$ is healthy and weakly connected and eventually ratifies some amendment $A$ with activation time $t$, then $\mathcal{P}_k$ will wait until it has ratified $A$ before completing the waiting protocol for any time $t'\geqslant t$.
\end{proposition}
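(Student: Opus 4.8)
The plan is to show that clearing the waiting protocol for any $t'\geqslant t$ forces $\mathcal{P}_k$ to have already ratified slot $n_A$, whose MVBA output can only be $A$. First I would observe that every activation time is a multiple of $\tau_{int}$ (an accepted triple $(A,n_A,t)$ carries the timestamp $t$ of the $CHECK$ round in which it was stamped), so the value $\tau'=t$ lies inside the range $\tau'\leqslant t'$ over which the waiting protocol for $t'$ quantifies. It therefore suffices to show that $\mathcal{P}_k$ cannot satisfy the waiting condition for the single time $\tau'=t$ until it has ratified $A$.

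Next I would locate, inside $\mathsf{UNL}_k^{\infty}$, a healthy node that stamped $A$ directly from $CHECK$ messages. Since $\mathcal{P}_k$ ratifies $A$ with activation time $t$, by MVBA-Validity $(A,t)\in\mathsf{values}_k^0$ for slot $n_A$, so $\mathcal{P}_k$ saw strong support for $ACCEPT(A,n_A,t)$; as $\mathcal{P}_k$ is healthy, counting healthy senders via equation~\ref{eqTandQ1} shows at least one healthy node in $\mathsf{UNL}_k\subseteq\mathsf{UNL}_k^{\infty}$ broadcast $ACCEPT(A,n_A,t)$. By proposition~\ref{propFKDRBCReliability} the $ACCEPT$ message behaves exactly like the $READY$ message of RBC, so the ``first healthy node'' argument of lemma~\ref{propRBCReadyBlocking} gives a healthy $\mathcal{P}_j\in\mathsf{UNL}_k^{\infty}$ that broadcast $ACCEPT(A,n_A,t)$ not by relay but by step~2 of the stamping protocol, meaning $\mathcal{P}_j$ received $CHECK(\mathunderscore,t)$ messages whose $P$-set contained $(A,n_A)$ from $q_{S'}$ nodes in every $S'\in\mathsf{ES}_j$.

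Then I would run the support-blocking computation, exactly as in lemma~\ref{lemmaFKSufficientWaiting}. By weak connectivity $\mathcal{P}_k$ is fully linked to $\mathcal{P}_j$, so there is a shared $S_{shared}\in\mathsf{ES}_j\cap\mathsf{ES}_k$; among the $q_{S_{shared}}$ senders of $(A,n_A)$-bearing $CHECK(\mathunderscore,t)$ seen by $\mathcal{P}_j$, at most $t_{S_{shared}}$ are actively Byzantine, so more than $n_{S_{shared}}-q_{S_{shared}}$ honest nodes of $S_{shared}$ broadcast a $CHECK(\mathunderscore,t)$ containing $(A,n_A)$. Since honest nodes emit only a single $CHECK$ per timestamp, equation~\ref{eqTandQ1} yields that fewer than $q_{S_{shared}}$ nodes of $S_{shared}$ can ever supply $\mathcal{P}_k$ with a $CHECK(\mathunderscore,t)$ that omits $(A,n_A)$. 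Hence any $T_t\subseteq S_{shared}$ of size $q_{S_{shared}}$ that $\mathcal{P}_k$ uses to clear the $\tau'=t$ requirement must contain a node whose $CHECK(\mathunderscore,t)$ lists $(A,n_A)$, and to count that node $\mathcal{P}_k$ must already have ratified some amendment for slot $n_A$, which by MVBA-Consistency is $A$ itself. Thus the waiting protocol for $t'$ cannot complete before $\mathcal{P}_k$ ratifies $A$.

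The main obstacle I expect is the bookkeeping in the second step: arguing that the $ACCEPT$ flooding really does originate, \emph{within} $\mathsf{UNL}_k^{\infty}$, from a healthy node that saw the $q_{S'}$ $CHECK$ messages directly rather than merely relaying, so that weak connectivity legitimately furnishes the shared essential subset $S_{shared}$. Once that node is pinned down, the final counting argument is a routine reuse of the overlap inequality~\ref{eqTandQ1}, dual in form to lemma~\ref{lemmaFKSufficientWaiting}.
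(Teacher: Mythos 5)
Your proof is correct and follows essentially the same route as the paper: the paper simply cites lemma~\ref{lemmaFKSufficientWaiting} (whose overlap-counting argument you inline, via the shared essential subset and equation~\ref{eqTandQ1}) and then observes that the waiting protocol at $\tau'=t$ therefore cannot complete until some amendment is ratified for slot $n_A$, which is $A$ by hypothesis. Your write-up just makes explicit the details the paper leaves implicit --- the reduction to the single time $\tau'=t$, the appeal to MVBA-Validity, and the first-healthy-node tracing argument for the $ACCEPT$ relay --- and in fact your counting step (bounding the omitting senders by $n_S-(q_S-t_S)<q_S$ directly from equation~\ref{eqTandQ1}) is tighter than the paper's own version of that inequality.
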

\begin{proof}
	By lemma~\ref{lemmaFKSufficientWaiting}, if $\mathcal{P}_k$ eventually ratifies $A$ with activation time $t$ then $\mathcal{P}_k$ cannot receive $CHECK(\mathunderscore,t)$ from $q_S$ nodes in every essential subset $S\in\mathsf{ES}_k$ such that $A$ that is not present in any of the received $CHECK(\mathunderscore, t)$ messages. Thus in the waiting protocol for time $t'$, $\mathcal{P}_k$ will wait until it has ratified some amendment for slot $n_A$, and we ratify $A$ for slot $n_A$ by hypothesis.
\end{proof}

\begin{theorem}\label{thmFK}
The modified DABC protocol defined in section~\ref{sectionProtocol-FK} satisfies the properties of a democratic atomic broadcast algorithm in the open network model, along with the additional Full Knowledge property.
\end{theorem}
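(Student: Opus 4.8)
The plan is to verify the five DABC properties (plus Full Knowledge) one at a time in the same ``wrap-up'' style as Theorems~\ref{thmRBC}, \ref{thmABBA}, and \ref{thmMIBA}, reducing each to results already proven for MVBA, modified DRBC, and the stamping and waiting protocols. Consistency of the values output by the per-slot MVBA instances (Proposition~\ref{propDABCConsistency}) together with the termination guarantees (Theorem~\ref{thmMIBA}, Lemma~\ref{propDABCTermination}, Proposition~\ref{propFKNonBlock}) do most of the heavy lifting, so the proof is mostly a matter of wiring these together and checking the small gluing arguments.

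For Agreement I would first note that a healthy node ratifying $A$ with activation time $\tau$ means the slot-$n_A$ MVBA instance output $(A,\tau)$; tracing this value back through MVBA-Validity and the $ACCEPT$/$CHECK$ mechanics shows that some healthy node in $\mathsf{UNL}_k^{\infty}$ must have broadcast a $CHECK$ message carrying a pair with slot $n_A$. Lemma~\ref{propDABCTermination} then gives that, with probability~$1$, every unblocked node in $\mathsf{UNL}_k^{\infty}$ eventually ratifies \emph{some} pair for slot $n_A$, and Proposition~\ref{propDABCConsistency} (applicable because strong connectivity makes every relevant pair linked) pins that pair down to exactly $(A,\tau)$. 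Linearizability is then immediate: each amendment carries a slot number fixed by RBC-Consistency on its DRBC instance, a node only ratifies a slot after ratifying all lower slots, and Proposition~\ref{propDABCConsistency} makes linked honest nodes agree on the amendment of each slot, so every honest node ratifies amendments in the same increasing-slot order.

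Democracy reduces to Proposition~\ref{propDRBCDemocracy}; the point to nail down is that a healthy, weakly connected node that ratifies $A$ must itself accept the DRBC instance for $(A,n_A)$, since ratifying requires $(A,\tau)\in\mathsf{values}_i^0$, which requires strong support for $ACCEPT(A,n_A,\tau)$, and the $ACCEPT$/$CHECK$ layer forces the underlying amendment to have been accepted through DRBC. Because ``support'' in that DRBC instance is exactly ``support $A$ being ratified in the context of the amendments ratified before slot $n_A$,'' Proposition~\ref{propDRBCDemocracy} then delivers an essential subset whose honest majority supported $A$ in context. For Liveness, if every unblocked node supports the unratified $A$ then, by the Censorship-Resilience mechanics of modified DRBC, every unblocked node broadcasts $ECHO(A)$ and hence accepts $(A,n_A)$, places it in its $P$-set, and broadcasts a $CHECK$ carrying it; Lemma~\ref{propDABCTermination} then forces ratification of some amendment for the previously unratified slot $n_A$, i.e.\ a new amendment.

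Finally, Full Knowledge is obtained by combining its two halves: Proposition~\ref{propFKNonBlock} shows the waiting protocol for any time $\tau$ always terminates (in finite time with probability~$1$ under strong connectivity), and Proposition~\ref{propFKSufWait} shows a node cannot finish that waiting protocol until it has already ratified every amendment whose activation time is at most $\tau$; together these yield that once the waiting protocol completes, the node is aware of every amendment that will be ratified with activation time below $\tau$. I expect Full Knowledge and Agreement to be where the care is needed, since both glue the probabilistic termination guarantees to consistency and depend on the activation-time bookkeeping of the stamping and waiting protocols lining up exactly with the window $<\tau$; the genuinely new reasoning lies in checking that the sufficiency and non-blocking lemmas together cover this window without gaps.
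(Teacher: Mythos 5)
Your proposal is correct and follows essentially the same route as the paper's own (much terser) wrap-up proof: Linearizability from MVBA-Consistency, Democracy from MVBA-Validity plus DRBC-Democracy, Liveness from DRBC-Censorship-Resilience plus Lemma~\ref{propDABCTermination}, Agreement from the $ACCEPT$ mechanics plus termination plus MVBA-Consistency, and Full Knowledge from Propositions~\ref{propFKNonBlock} and~\ref{propFKSufWait}. The only wrinkle is your claim that a ratifying node must \emph{itself} accept the underlying DRBC instance for $(A,n_A)$ --- strictly, strong support for $ACCEPT(A,n_A,\tau)$ only implies that some healthy node in its extended UNL accepted that DRBC instance, after which the tracing argument inside the proof of Proposition~\ref{propDRBCDemocracy} (rather than the proposition applied verbatim) combined with weak connectivity of the ratifying node yields the required essential subset in $\mathsf{ES}_i$; the paper glosses this identically, so this does not separate your proof from the paper's.
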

\begin{proof}
	Linearizability follows directly from MVBA-Consistency. Democracy follows immediately from MVBA-Validity and the corresponding Democracy property of DRBC.
	
	Liveness follows from DRBC-Censorship-Resilience and lemma~\ref{propDABCTermination}. Democracy follows from DRBC-Democracy and MVBA-Validity.
	
	Agreement follows because a healthy node only outputs $(A,\tau)$ if it received enough $ACCEPT(A,n_A,\tau)$ messages to guarantee that every unblocked node in $\mathsf{UNL}_k^{\infty}$ adds $(A,\tau)$ to its valid inputs for MVBA on slot $n_A$, in which case every unblocked node in $\mathsf{UNL}_k^{\infty}$ terminates MVBA with probability $1$, and must in fact output $A$ by MVBA-Consistency.
	
Full Knowledge follows from proposition~\ref{propFKNonBlock} and proposition~\ref{propFKSufWait}.
\end{proof}

\paragraph{Acknowledgements.} Thank you to Brad Chase and Stefan Thomas for providing helpful discussion and revisions, to Rome Reginelli for careful editing, and to David Schwartz for designing the original XRP Ledger consensus protocol, without which this research would never have been conducted. This work was funded by Ripple.

\bibliography{refs.bib}

\appendix

\section{Ordering Transactions}\label{sectionAppendix}

The discussion of Cobalt up until this point has been kept fairly general and detached from any specific use-case. However, Cobalt is intended to be used for XRP, which has a very specific use-case: the XRP Ledger is first and foremost a system for generating a public log of \textit{transactions}. Thus it would be somewhat strange to not discuss how Cobalt relates to transaction processing.

The primary goal of a decentralized transaction processing system is to determine which transactions did or did not occur. Since transactions are signed and universal constraints like ``an empty account cannot send payments" govern validity, if all nodes in the network can agree on a total ordering for the transactions then every node can independently ``apply" transactions in that order, generate consistent ledgers at every step, and agree on which transactions were valid by the universality of the constraints. Thus we consider a ``transaction processing" mechanism to be simply some mechanism which allows all nodes in the network to agree on the order in which transactions should be applied.

Since Cobalt is in particular a form of atomic broadcast algorithm, it can be directly applied to ordering transactions by sending transactions as amendments that are supported automatically if they're valid. For efficiency's sake it would be best to remove the activation time extension for this purpose, as it adds significant weight and there's no need to agree on activation times for transactions; instead a node can just add a block as a valid input for MVBA after accepting DRBC (or just regular RBC) for it. Even with the removal of activation times, this would be horribly inefficient though, since only a single transaction is accepted per MVBA instance. Further, a client with very fast network connections could censor other clients' transactions by submitting their transaction for every slot first.

An alternative is to use the ``blockchain model" and batch transactions into blocks and submit the \textit{blocks} as amendments. This is much less inefficient, but still less than optimal: if $P$ is the number of proposers and $D$ is the sum of $n_i$ across all nodes $\mathcal{P}_i$, the latency per block would likely be at least several seconds and grow logarithmically with $P$ (see appendix~\ref{sectionAppendix2}), while the communication complexity would be $O(D\cdot P)$ -- which is probably $O(n^3)$ asymptotically -- placing a relatively low limit on the possible throughput. Nonetheless, as described at the end of this section, this mechanism is effective enough to be used as a backup in emergencies, and has the benefit of being fully asynchronous unlike the alternative we present.

For these reasons, rather than having every node in the decentralized network participate in the agreement protocol for deciding the order of transactions, we recommend instead using Cobalt to vote on a universally agreed-upon set of nodes that run some fast and robust complete-network consensus algorithm like Honeybadger \citep{Miller2016} or Aardvark \citep{Clement2009} to decide on the order of transactions. In the sequel, to avoid confusion we refer to the network of nodes running Cobalt as the \textbf{Cobalt network}, and the network of nodes agreeing on transactions the \textbf{transaction network}. Changes to the transaction network are agreed upon as amendments by the Cobalt network. To ensure that nodes in the transaction network know about amendments by their activation time, we assume that nodes in the transaction network are also nodes in the Cobalt network, so that every correct node in the transaction network can reap the benefits of the full knowledge property of Cobalt. We assume that Cobalt nodes still individually validate transactions they receive from the transaction network, and throw out any transactions that are invalid, so that a malicious transaction network cannot arbitrarily modify the ledger state in illegal ways.

Clearly there is no way to guarantee forward progress if every node in the transaction network fails. However, we would like to at the very least ensure that every correct node in the Cobalt network agrees on the transaction log whenever the Cobalt network is safe, regardless of how many transaction nodes fail. To make this work, rather than simply blindly accepting blocks from the transaction network, we run a PBFT-like protocol that uses the transaction network as a distributed ``leader" and guarantees consistency even when the leader fails.

We assume that there is an infinite sequence of transaction networks (possibly not all disjoint, or possibly not even unique) which we denote by $v_1,v_2,...$ in analogy with the ``views" of PBFT. In practice Cobalt is used to agree on the sequence of views in a lazy way: amendments are proposed to add new views that can be switched to in the event that the current transaction network seems to be failing. Theoretically the views could be agreed upon in real time so that $v_{n+1}$ is decided upon only after $v_{n}$ is observed to be failing. However, designating several ``backups" in advance greatly increases the resilience and adaptability of the algorithm so that almost all issues can be detected using automated metrics and resolved in a matter of seconds using purely machine agreement.


Let $v$ be the current view, and let $t(v)$ be the threshold of tolerated faulty nodes in $v$. Further let $\mathsf{lock}(v)$ be a boolean variable for each view that initializes as false, and let $\mathsf{min}(v)$ be a positive integer constant  (in the first view of all time, $\mathsf{min}(v)=0$; for other views, $\mathsf{min}(v')$ gets set as part of the view change protocol further below).

Blocks are generated by the transaction network with increasing ``sequence numbers" describing where the block is supposed to sit in the totally ordered blockchain. When the nodes in $v$ have agreed on a block $B$ with sequence $n_B$, they each broadcast $INIT(B,n_B)$ to the Cobalt network.

A node $\mathcal{P}_i$ runs the protocol below to decide when to accept blocks from the transaction network. Note the similarity to the RBC protocol.
\begin{enumerate}
	\item Do not broadcast any messages pertaining to a sequence number $n$ unless $n\geqslant \mathsf{min}(v)$ and until we have accepted a batch for every sequence $n'$ with $\mathsf{min}(v)\leqslant n'$ and $n'<n$.
	\item Upon receiving $INIT(B,n_B,v)$ from $t(v)+1$ nodes in $v$, broadcast $ECHO(B,n_B,v)$ if we have not already broadcast $ECHO(\mathunderscore, n_B,v)$.
	\item Upon receiving weak support for $ECHO(B,n_B,v)$, broadcast $ECHO(B,n_B,v)$ if we have not already broadcast $ECHO(\mathunderscore, n_B,v)$.
	\item Upon receiving strong support for $ECHO(B,n_B,v)$, broadcast $READY(B,n_B,v)$ if we have not already broadcast $READY(\mathunderscore, n_B,v)$.
	\item Upon receiving weak support for $READY(B,n_B,v)$, broadcast $READY(B,n_B,v)$ if we have not already broadcast $READY(\mathunderscore, n_B,v)$.
	\item Upon receiving strong support for $READY(B,n_B,v)$, broadcast $CHECK(B,n_B,v)$ if $\mathsf{lock}(v)$ is false and we have not already broadcast $CHECK(\mathunderscore, n_B,v)$.
	\item Upon receiving strong support for $CHECK(B,n_B,v)$, accept the batch $B$ for sequence $n_B$.
\end{enumerate}

Clearly this shares all the same properties as a normal RBC algorithm.

To ensure that during ordinary cases (when the transaction network is not critically failing) forward progress is being made, we assume that every correct Cobalt node opens a reliable authenticated channel allowing every transaction node to broadcast to it. By RBC-Non-Triviality then, as long as the transaction network is not critically failed every Cobalt node will eventually accept every transaction batch processed by the transaction network.

By the properties of RBC, if any Cobalt node accepts some batch of transactions, then every Cobalt node eventually accepts the same batch of transactions, and two Cobalt nodes never accept inconsistent batches. Thus if any correct node observes that some transaction occurred, then every other correct node will observe that transaction occurred.

Combined with the fact that Cobalt nodes individually validate all transactions, this implies that regardless of the state of the transaction network, every correct Cobalt node is consistent and does not accept any invalid transactions, so safety is reduced purely to the correct configuration of the Cobalt network. This is a significant improvement over other algorithms that elect a transaction network but which suffer from the fact that safety is weaker than the safety of the election network.

To complete the protocol specification, nodes need a way to trigger a view change and agree on what the most recently accepted batch of transactions was so that these transactions are not overwritten in the next view. Our view change protocol is somewhat different from that of PBFT due to the lack of fully expressive cryptography in our setting.

To request a view change, $\mathcal{P}_i$ runs the following protocol.

\begin{enumerate}
	\item Broadcast $CHANGE(v')$ where $v'$ is the next view.
	\item Upon receiving strong support for $CHANGE(v')$, broadcast $CONFIRM(v')$ if we have not already done so.
	\item Upon receiving weak support for $CONFIRM(v')$, broadcast $CONFIRM(v')$ if we have not already done so.
	\item Upon receiving strong support for $CONFIRM(v')$, set $\mathsf{lock}(v)$ to true and broadcast $LOCK(v',n)$, where $n$ is the highest sequence number of any batch we have accepted from $v$.
	\item Wait until, for every essential subset $S\in\mathsf{ES}_i$, we have received $LOCK(v',\mathunderscore)$ from every node in some subset $T\subseteq S$ with $|T|=q_S$, such that if we received $LOCK(v',n)$ for any $n$ and from any node in $T$, then we have received strong support for $READY(\mathunderscore,n)$. Let $n_{locked}$ be the maximum sequence number present in any of the $LOCK(v',\mathunderscore)$ messages we received from nodes in one of the $T$ sets.
	\item If $\mathcal{P}_i$ is a member of $v'$, then $\mathcal{P}_i$ runs an external validity MVBA consensus mechanism to agree on a sequence number $n_{cont}$ which is greater than $n_{locked}$ but for which we have received strong support for $READY(B,n_{cont}-1,v)$ for some batch $B$. $\mathcal{P}_i$ then broadcasts $NEWVIEW(v',n_{cont})$.
	\item Upon receiving $NEWVIEW(v',n_{cont})$ from $t(v')+1$ nodes in $v'$, if $n_{cont}$ is greater than $n_{locked}$ and we have received strong support for $READY(B,n_{cont}-1,v)$ for some batch $B$, then broadcast $ECHO(v',n_{cont})$ if we have not already broadcast $ECHO(v',\mathunderscore)$.
	\item Upon receiving weak support for $ECHO(v',n_{cont})$, broadcast $ECHO(v',n_{cont})$ if we have not already broadcast $ECHO(v',\mathunderscore)$.
	\item Upon receiving strong support for $ECHO(v',n_{cont})$, broadcast $READY(v',n_{cont})$ if we have not already broadcast $READY(v',\mathunderscore)$.
	\item Upon receiving weak support for $READY(v',n_{cont})$, broadcast $READY(v',n_{cont})$ if we have not already broadcast $READY(v',\mathunderscore)$.
	\item Upon receiving strong support for $READY(v',n_{cont})$, for every $n<n_{cont}$ wait until we've received strong support for $READY(B,n,v)$ for some batch $B$, then accept $B$ as the batch with sequence $n$. Finally, switch the view to $v'$ and set $\mathsf{min}(v')=n_{cont}$.
\end{enumerate}

We omit the proofs that the above protocol is correct. It is very similar to the proofs of Full Knowledge in section~\ref{sectionProofs-FK}. Note that nodes can request a view change again even before receiving a $NEWVIEW$ message, which is necessary in the event that the $v'$ network starts out failed. The view change protocol can be optimized slightly further, but considering that we expect it to be rarely invoked, we opt for the less optimized protocol since we feel it is clearer.

One remaining issue with the above protocol is that if all of the planned backup views fail simultaneously, then the network can be shut down for an extended period of time until human node operators can agree on a new set of transaction nodes and ratify the amendment for it. Since the Cobalt nodes cannot distinguish node failure from communication failure, this opens a path for effectively attacking the network: launch a temporary IP routing attack against the backup views that lasts just long enough to make the Cobalt nodes panic. If the attack can last for a minute or two (just long enough to run through all of the backup views) then even after the attacker stops being active, it could take hours to restore the network.

In situations like this where we run out of backup views, we thus resort to using Cobalt to order transactions; since the alternative is total network halting, the inefficiency of Cobalt is acceptable here. The Cobalt transaction blocks are run in parallel on a separate chain from the amendments, since there's no need to order them relative to each other and doing so would harm performance. Further, Cobalt is run without activation times for agreeing on transaction blocks, since there's no need.

As it stands, Cobalt is not at all censorship resilient: a well-connected malicious node can always force its own blocks to be the ones included. We thus need to make one more small change to prevent censorship. Rather than including the slot number as part of the information in a transaction block proposal, each block is acceptable anywhere in the chain. Once a node sees a certain block $B$ as a valid input, it continues considering it as valid for all future slots, and it refuses to support any other blocks even for future slots until $B$ is ratified for some slot. This guarantees that every single block proposed will eventually be included in the chain, which trivially prevents censorship. Unlike amendments, there is no danger in allowing blocks to be placed at an indeterministic location in the chain, since the validity of each transaction can be checked externally. However, the performance is clearly very poor when the blocks have high overlap, which is why we refrain from using this mechanism in the ordinary case.

\section{Implementing Cryptographic Randomness}\label{sectionAppendix3}

In section~\ref{sectionProtocol-Coin} we defined the properties of a common random source protocol. Here we describe how such a protocol can be implemented in the open network model.

To begin, suppose there is some value $s$ that can only be constructed by the adversary with negligible probability. For a given probability space $\mathcal{S}$, let $G$ be some cryptographic pseudorandom generator that is modeled as a random oracle that samples $\mathcal{S}$ \citep{Bellare1993}. Then by definition of a random oracle, $G(s)$ is a true random value until the adversary can construct $s$, which we assumed can only occur with negligible probability.

Cachin et al. construct a CRS in the complete network model by a reduction to a robust $(t+1,n)$-threshold signature scheme \citep{Cachin2005}. A robust $(t+1,n)$-threshold signature scheme is a protocol where a group of $n$ nodes has ``shares" of some secret key $s$, and can collaborate to produce a signature $\sigma(M)$ over a given message $M$ using $s$. We require that if all the unblocked nodes in the group try to sign a given message then they can eventually produce the signature, and further a computationally bounded adversary controlling up to $t$ nodes in the group with overwhelming probability cannot construct $\sigma(M)$ until at least one honest node in the group has tried to sign $M$. Thus if $M$ is a proactively agreed upon unique tag for the CRS instance, then letting the output of CRS be $G(\sigma(M))$ immediately gives a protocol that satisfies the required properties.

It is not immediately clear how to adapt this scheme to the essential subset model, where the notion of a ``threshold" is undefined. Our adaptation centers around taking a single secret $s$ and distributing it as a threshold secret among $S$ for \textit{multiple} essential subsets $S$. Thus any single such subset can reconstruct $s$ on its own. A naive implementation of this would be insecure though, since a single poorly configured essential subset could leak the secret. Ideally, the only assumption that $\mathcal{P}_i$ should need to make is that the essential subsets in $\mathsf{ES}_i$ are all well-configured, since otherwise $\mathcal{P}_i$ can't guarantee termination regardless.

To enable every node to verify locally that the secret cannot be leaked to the adversary, we suppose informally that there exists a way of combining several values such that if any single value is secret then the output is also secret. For example, concatenating the values and running them through a random oracle would suffice. We call such a function a mixer.

Now suppose $\mathcal{P}_i$ has some secret $s$ with a corresponding public key $p$. We use an asynchronous verifiable secret sharing (AVSS) scheme. An AVSS protocol allows a specified dealer to distribute shares of a secret $s$ between a set of nodes in a way that an honest node which terminates can guarantee with overwhelming probability that shares of the actual secret corresponding to $p$ has been distributed to all the honest nodes in the group, even if the dealer is Byzantine. For example, the scheme presented by Cachin et al. would work without modification \citep{Cachin2002}. Using such an AVSS scheme, $\mathcal{P}_i$ can distribute $(t_S+1,n_S)$-threshold shares of $s$ to every essential subset $S\in\mathsf{ES}_i$. As mentioned in section~\ref{sectionModel}, $\mathcal{P}_i$ may have to pay a fee or provide a proof-of-work in order to convince the nodes in these sets to participate in its secret sharing protocols, but we assume that if $\mathcal{P}_i$ is non-faulty and reasonably determined then it can successfully distribute $s$.

Although the same $s$ is distributed to each essential subset, we assume that for any two essential subsets $S,S'\in\mathsf{ES}_i$, and any two subsets $T\subseteq S,T'\subseteq S'$ with $|T|\leqslant t_S,|T'|\leqslant t_{S'}$, the shares of $s$ in $T$ are independent of the shares of $s$ in $T'$. This can be achieved for example with Shamir's threshold sharing scheme \citep{Shamir1979} by generating a different polynomial $p_S(x)=s+c_{S,1}x+...+c_{S,{t_S}}x^{t_S}$ for each essential subset $S\in\mathsf{ES}_i$, where the non-$s$ coefficients are all uniformly sampled and independent between essential subsets.

We introduce the notion of a pseudo-amendment as an amendment that doesn't have an actual ``proposer". Instead, some external mechanism allows nodes to learn about the amendment details, and then they support it as usual by broadcasting an $ECHO$ message for it. After determining that AVSS succeeded, a node $\mathcal{P}_j$ in one of $\mathcal{P}_i$'s essential subsets broadcasts a confirmation $ALLOW(p)$ where $p$ is the public key corresponding to $s$. If a node receives weak support for $ALLOW(p)$, then it votes to support a Cobalt pseudo-amendment that adds $p$ to a common set of ``randomizing keys". Thus honest, weakly connected nodes are guaranteed to have their randomizing key accepted by DABC-Liveness (since adding randomization keys does not contradict any other amendments, if a slot fails to add $p$ then nodes can try again; we assume that the technique mentioned at the end of appendix~\ref{sectionAppendix} for guaranteeing full Censorship-Resilience is used so that $p$ is eventually accepted).

The general idea is to create signatures over a message $M$ corresponding to each randomization key, and then mix them all together to create the seed for the random function $G$. By the definition of mixing, adding an extra randomizing key cannot decrease the security of the overall protocol, since as long as the secret a single randomizing key is secure then the result of mixing signatures is also secure.

CRS-Agreement follows immediately from the DABC-Agreement and DABC-Full-Knowledge properties of Cobalt. Indeed, for any given time $\tau$, every node agrees on the set of amendments activated before $\tau$, so every node agrees on the same set of randomizing keys. Since any node can verify a signature locally, every node that outputs a signature over the specified tag $M$ for every randomizing key must output the exact same set of signatures, and thus produces the same result for CRS.

CRS-Termination follows by DABC-Democracy and the assumed robustness of the threshold signature scheme. Because of the way we use $ALLOW$ messages, DABC-Democracy only guarantees that for any weakly connected unblocked node $\mathcal{P}_i$ and any randomizing key $p$, there is some unblocked node in $\mathsf{UNL}_i$ that can receive shares of the signature corresponding to $p$ from one of \textit{its} essential subsets. Thus we assume that nodes that receive shares of $\sigma(M)$ echo the message after they have successfully reconstructed it. Since $\mathcal{P}_i$ can verify the authenticity of $\sigma(M)$ locally, this does not hamper safety and allows $\mathcal{P}_i$ to eventually produce an output.

CRS-Randomness is simply by reduction to the security of the threshold signature scheme. We can assume that $\mathcal{P}_i$ has distributed its secret and successfully planted a public key $p$ among the randomization keys (which requires only that $\mathcal{P}_i$ was at one point correct and weakly connected). Then by the definition of mixing, the output of CRS cannot be predicted until the signature over $M$ corresponding to $p$ is known. By threshold security and our assumptions about $\mathsf{ES}_i$, this cannot occur until some honest node in one of $\mathcal{P}_i$'s essential subsets has revealed its signature share over $M$ corresponding to $p$. Thus by modeling $G$ as a random oracle, we have that with overwhelming probability the adversary cannot distinguish in advance a true random variable sampled over $\mathcal{S}$ from the output of CRS, since the output of CRS is by definition $G$ applied to the mixed signatures.

An unfortunate requirement of this system is that it requires consensus to be running properly for new nodes to add their own randomization keys. Thus if the adversary is ever able to compromise every single randomization key, then theoretically the system may be unable to ever recover. It is unclear if it is possible to construct an efficient CRS system in our network model that is capable of recovering from total compromise. Nonetheless, in practice this is unlikely to be an issue: assuming a decent initial setup, the likelihood of every randomization key ever being simultaneously compromised is very low, and even with foresight of the CRS output values, in practice it would be very difficult for the adversary to prevent termination of Cobalt for an extended period of time, so recovery even from total compromise should always be possible in practice.

\section{Logarithmic Time MVBA}\label{sectionAppendix2}

Although the results in section~\ref{sectionProofs-MIBA} fully prove correctness of the MVBA protocol, so far we have only shown that the number of rounds MVBA could theoretically take is bounded by the number of valid inputs, which would imply rather poor worst-case performance. The following proposition refines the performance analysis and proves that for a large enough hash function $H$, the \textit{expected} number of rounds is in fact at most logarithmic in the number of valid inputs. This shows that Cobalt is actually reasonably efficient.

\begin{proposition}\label{propDABCFastTermination}
	Suppose $H$ is a random oracle. For any strongly connected node $\mathcal{P}_k$, if $\mathcal{P}_i\in\mathsf{UNL}_k^{\infty}$ is unblocked, then if the image of $H$ is large enough the expected number of rounds after which MVBA terminates is at most $c+\log_3(|S_i^0|)+O(1/|S_i^0|)$ where $c$ is a small constant $c\approx 0$.
\end{proposition}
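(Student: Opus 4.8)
The plan is to track how the set $S_i^r$ of values that eventually enter round $r$ shrinks from round to round, and to show it contracts by a factor of $3$ in expectation each round, so that after roughly $\log_3|S_i^0|$ rounds a single value remains and MVBA is forced to terminate. By strong connectivity of $\mathcal{P}_k$ and lemma~\ref{lemmaDABCValidAgreement}, all unblocked nodes in $\mathsf{UNL}_k^{\infty}$ share a common surviving set $S^r := S_i^r$ in every round, and by lemma~\ref{lemmaDABCFiniteness} this set strictly shrinks while $|S^r|\ge 2$ and triggers termination once $|S^r|=1$. So it suffices to bound the expected first round $T$ at which $|S^T|=1$.

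First I would give an exact combinatorial description of $S^{r+1}$ in terms of $S^r$ and the random indices $\mathcal{I}_r$. Writing $V_i$ for the value set $\mathsf{values}_i^r$ held by an unblocked node at the instant it samples $\rho_r$ in step~\ref{MVBALongStep}, a value $A\in S^r$ is re-broadcast as $INIT(A,r+1)$, and hence enters $S^{r+1}$ by step~\ref{MVBAinit2}, exactly when its index does not exceed some node's est-index: combining step~\ref{MVBALongStep} with step~\ref{MVBAinit1} gives $S^{r+1}=\{A\in S^r:\mathcal{I}_r(A)\le\theta_r\}$, where $\theta_r=\max_i\mathcal{I}_r(\mathsf{est}_i^{r+1})=\max_i\min_{B\in V_i}\mathcal{I}_r(B)$. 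Thus $|S^{r+1}|$ equals the rank of $\theta_r$ in the $\mathcal{I}_r$-ordering of $S^r$, i.e. the largest ``minimum rank'' of any $V_i$.

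The heart of the argument is to bound this rank. Using a quorum-intersection argument on the $CONT$ messages, analogous to lemma~\ref{lemmaSupportBlocking} together with equation~\ref{eqTandQ1}, any two unblocked nodes that obtain strong support for $CONT(C,r)$ and $CONT(C',r)$ must share an honest sender in a common essential subset; since honest nodes broadcast $CONT$ for a monotonically growing value set, $C$ and $C'$ are nested. Hence the strongly-supported $CONT$ sets are totally ordered, there is a minimal one $C_{\min}$ with $|C_{\min}|\ge 2$ (a node only broadcasts $CONT$ once $|\mathsf{values}|\ge 2$, otherwise it broadcasts $FINISH$ in step~\ref{MVBAelecting}), and $V_i\supseteq C_{\min}$ for every unblocked node. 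Therefore $\theta_r\le\min_{B\in C_{\min}}\mathcal{I}_r(B)$, so $|S^{r+1}|$ is at most the rank of the smallest-index element of $C_{\min}$. The crucial independence step, mirroring lemma~\ref{lemmaABBAConfUnpredictability} and the very reason the $CONT$ phase exists, is that $C_{\min}$ is pinned down before $s_r$, and hence before the indices $\mathcal{I}_r(\cdot)=H(\cdot\,\|\,s_r)$ are revealed; conditioned on $S^r$, the indices are then a uniform random permutation independent of $C_{\min}$. Since the minimum rank of a fixed set of size $\ge 2$ within a uniform permutation of $[m]$ has expectation at most $(m+1)/3$, we obtain $E[\,|S^{r+1}|\mid |S^r|=m\,]\le (m+1)/3$.

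Finally I would convert this one-step contraction into the round bound via the potential $\phi(m)=2m-1$, which satisfies $E[\phi(|S^{r+1}|)\mid S^r]\le\phi(|S^r|)/3$ and hence $E[\phi(|S^r|)]\le(2|S_i^0|-1)/3^r$. Since termination occurs exactly when $\phi(|S^r|)$ drops to $1$, while $\phi\ge 3$ beforehand, Markov's inequality gives $P(T>r)\le(2|S_i^0|-1)/3^{r+1}$; summing over $r$ and capping each term at $1$ yields $E[T]\le\log_3|S_i^0|+O(1)$. A slightly sharper inductive estimate on $g(m)=E[T\mid |S^0|=m]$, using monotonicity of $g$ and the exact contraction, tightens the additive constant to the claimed $c\approx 0$ with an $O(1/|S_i^0|)$ correction coming from the geometric tail. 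I expect the main obstacle to be the independence step: rigorously arguing that an adaptive adversary cannot, after learning $s_r$, schedule deliveries so that some node settles on a high-$\mathcal{I}_r$-index $CONT$ set. This is precisely what the $CONT$ phase is engineered to prevent, and making it airtight demands the same care as the unpredictability argument in lemma~\ref{lemmaABBAConfUnpredictability}.
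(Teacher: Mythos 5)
Your proposal reconstructs the paper's central mechanism almost exactly: the nestedness of strongly-supported $CONT$ sets via quorum intersection (as in lemma~\ref{lemmaSupportBlocking}), the existence of a minimal such set $C_{\min}$ with $|C_{\min}|\geqslant 2$ contained in every sampling node's value set, the fact that $C_{\min}$ is pinned down before any healthy node queries $\rho_r$ so that CRS-Randomness makes the indices $\mathcal{I}_r$ independent of it, and the resulting bound of $|S^{r+1}|$ by the rank of the smaller of the two pinned indices. Where you genuinely diverge is the endgame. The paper converts the contraction into a \emph{per-value} survival probability $q\leqslant \tfrac{1}{3}+O(1/L)$ per round, models the rounds as an urn-emptying game with \emph{independent} balls, and cites Szpankowski and Vernon \citep{Szpankowski1990} for the sharp asymptotics, subtracting one at the end to get $c\approx 0.03$. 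You instead bound $E[\,|S^{r+1}|\mid |S^r|=m\,]\leqslant (m+1)/3$ directly via the expected rank of the minimum of two random indices, and run a supermartingale/Markov argument on the potential $2m-1$. Your route is self-contained, and in one respect more careful than the paper's: the survival events of distinct values are \emph{not} independent (they are all governed by the single threshold $x_r$, and the two pinned values themselves survive that round with probability about $1/2$, not $1/3$), so the independent-urn reduction does not strictly apply; your rank bound absorbs these correlations automatically.

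The gap is your final constant. The Markov tail sum as you set it up gives $c\approx 2$, and even the sharper argument you gesture at (optional stopping with the concave potential $\log_3(2m-1)$, using Jensen and $E[X_{r+1}]\leqslant (X_r+1)/3$) only yields $E[T]\leqslant \log_3(2|S_i^0|-1)$, i.e.\ $c=\log_3 2\approx 0.63$. Moreover this is essentially tight for the framework you chose: the deterministic process $X_{r+1}=(X_r+1)/3$ satisfies your one-step expectation bound and takes exactly $\log_3(2m-1)$ rounds to reach $1$, so \emph{no} argument using only the expectation contraction can deliver $c\approx 0$. To recover a near-zero constant one must exploit the actual distribution of the rank of $\min\{\mathcal{I}_r(A_1),\mathcal{I}_r(A_2)\}$ (which is strongly left-skewed, so the process contracts faster than $3\times$ in geometric mean), and that distributional step is precisely what the paper outsources to the urn result --- admittedly itself not airtight, given the correlation issue above. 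So: the leading term $\log_3(|S_i^0|)$ is correct and well proved in your write-up, but the closing claim that monotonicity plus the exact contraction ``tightens the additive constant to $c\approx 0$'' is asserted rather than proved, and is provably unattainable from the expectation bound alone.
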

\begin{proof}
	To show that MVBA is expected to terminate at or before the $R$-th round, it suffices to show that the expected number of rounds until $|S_i^{r}|=0$ is at most $R+1$. We do this by showing that the random oracles force a constant fraction of possible values to be cut out each round, and then compute the expected value analytically.
	
	First, suppose $A\in S_i^0$ for any unblocked node $\mathcal{P}_i\in\mathsf{UNL}_k^{\infty}$. Then by Assumed-Validity, there must be some unblocked node $\mathcal{P}_j\in\mathsf{UNL}_k^{\infty}$ such that for every $S\in\mathsf{ES}_j$ the majority of nodes in $S$ suggested $A$ before beginning MVBA. If $\mathcal{P}_{i'}\in\mathsf{UNL}_i$ is healthy and sampled $\rho_r$ for any $r\geqslant 0$, then because $\mathcal{P}_{i'}$ waits for enough $CONT$ messages (which can only be sent by nodes that have started MVBA) before sampling $\rho_r$, strong connectivity implies that some honest node in $\mathsf{UNL}_k^{\infty}$ must have begun MVBA before $\mathcal{P}_{i'}$ sampled $\rho_r$ and also suggested $A$ before beginning MVBA. Thus $A$ must have been chosen before $\mathcal{P}_{i'}$ sampled $\rho_r$. By CRS-Randomness, if $s_r$ is the value returned by $\rho_r$, the probability of the adversary being able to construct $s_r$ at the time of choosing $A$ is negligible. Thus with overwhelming probability, given any $A,A'\in S_i^0$ and $r,r'\geqslant 0$ with $A\neq A'$ and/or $r\neq r'$, $\mathcal{I}_r(A)$ and $\mathcal{I}_{r'}(A')$ are independent uniform random variables sampled from the image of the hash functions.
	
	For any healthy node $\mathcal{P}_i\in\mathsf{UNL}_k^{\infty}$ that gets past step~\ref{MVBALongStep} in round $r$, let $C_i\subseteq\mathsf{values}_i^{r}$ be the set for which $\mathcal{P}_i$ saw strong support for $CONT(C_i,r)$. If $\mathcal{P}_i,\mathcal{P}_j\in\mathsf{UNL}_k^{\infty}$ are both healthy and get past step~\ref{MVBALongStep} in round $r$, then, since $\mathcal{P}_k$ is strongly connected by assumption, $\mathcal{P}_i$ and $\mathcal{P}_j$ are fully linked, so some honest node must have sent both $CONT(C_i,r)$ and $CONT(C_j,r)$. But honest nodes only send $CONT$ messages for increasing subsets, so either $C_i\subseteq C_j$ or $C_j\subseteq C_i$.
	
	Thus by transitivity of set inclusion, there exists some healthy node $\mathcal{P}_i\in\mathsf{UNL}_k^{\infty}$ such that $C_i\subseteq C_j$ for every other healthy node $\mathcal{P}_j\in\mathsf{UNL}_k^{\infty}$. In particular, there exist at least two values $A_1,A_2$ such that for every healthy node $\mathcal{P}_j\in\mathsf{UNL}_k^{\infty}$, $A_1$ and $A_2$ are contained in $\mathsf{values}_j^{r}$ before $\mathcal{P}_j$ samples $\rho_{r}$. Let $L$ be the size of the image of $H$, and, for simplicity of notation, suppose without loss of generality that the image of $H$ is $\{0,...,L-1\}$.
	
	Let $x_{r}=\min\{\mathcal{I}_{r}(A_1),\mathcal{I}_{r}(A_2)\}$. Since $A_1$ and $A_2$ are both in $\mathsf{values}_i^{r}$ before any healthy node in $\mathsf{UNL}_k^{\infty}$ queries the oracle $\rho_{r}$, this guarantees that $\mathcal{I}_{r}(\mathsf{est}_i^{r+1})\leqslant x_{r}$ by the mechanism for selecting $\mathsf{est}_i^{r+1}$ in step~\ref{MVBALongStep}. Since a healthy node $\mathcal{P}_i\in\mathsf{UNL}_k^{\infty}$ only broadcasts $INIT(A,r+1)$ if $\mathcal{I}_{r}(A)\leqslant \mathcal{I}_{r}(\mathsf{est}_j^{r+1})$ for some healthy node $\mathcal{P}_j\in\mathsf{UNL}_i^{\infty}\subseteq \mathsf{UNL}_k^{\infty}$, this guarantees that if any healthy node $\mathcal{P}_i\in\mathsf{UNL}_k^{\infty}$ adds $A$ to $\mathsf{values}_i^{r+1}$, then $\mathcal{I}_{r}(A)\leqslant x_{r}$. Since $A_1$ and $A_2$ are both in $\mathsf{values}_i^{r}$ before any healthy node in $\mathsf{UNL}_k^{\infty}$ queries the oracle $\rho_{r}$, these indices are independent uniform random variables with overwhelming probability. Therefore a simple computation gives us $\mathrm{Pr}[x_{r}=k]=(2L-2k-1)/L^2|+\epsilon(k)$ where $|\epsilon(k)|$ is negligible for every $k\in\{0,...,L-1\}$.
	
	Let $P_{r}$ be the probability that a given value $A\in S_i^0$ is also in $S_i^{r}$. Since a value $A$ is in $S_i^{r}$ only if $H_{r'}(A)\leqslant x_{r'}$ for every $r'<r$, the probability that a given value in $S_i^0$ is in $S_i^{r}$ is at most
	\begin{align*}
	P_{r}&\leqslant \mathrm{Pr}[H_{0}(A)\leqslant x_{0},H_{1}(A)\leqslant x_{1},...,H_{r-1}(A)\leqslant x_{r-1}]\\
	&=\prod_{i=0}^{r-1} \mathrm{Pr}[H_{i}(A)\leqslant x_{i}].
	\end{align*}
	
	Partitioning the sample space and summing over all possible values of $x_i$ gives
	
	\begin{align*}
	\mathrm{Pr}[H_{i}(A)\leqslant x_{i}]&=\sum_{k=0}^{L-1}\frac{k+1}{L}\cdot \mathrm{Pr}\left[x_{i}=k\right]\\
	&=\sum_{k=0}^{L-1} \frac{(k+1)(2L-2k-1)}{L^3}+\frac{(k+1)\epsilon(k)}{L}\\
	&\leqslant \frac{1}{3} + \frac{7}{6L} + \frac{1}{L^3}+\epsilon
	\end{align*}
	for negligible $\epsilon$. Define $q=\frac{1}{3} + \frac{7}{6L} + \frac{1}{L^3}+\epsilon$. Thus $P_{r}\leqslant q^{r}$ for all $r\geqslant 0$. We can model this as a game where we start with $|S_i^0|$ balls and proceed to throw them into an urn with a $q$ chance of each ball landing in the urn. We discard any balls that fall out of the urn and repeat this process until the urn is empty, and ask for the expected number of rounds this takes. This problem is investigated by Szpankowski and Vernon \citep{Szpankowski1990} who prove the expected round $R$ after which the urn empties is
	
	\begin{align*}
	E[R]=\frac{\ln\left(|S_i^0|\right)+\gamma}{-\ln \left(q\right)}+\frac{1}{2}+\varepsilon_q+O\left(\frac{1}{|S_i^0|}\right),
	\end{align*}
	
	where $\gamma\approx 0.577$ is Euler's gamma constant and $\varepsilon_q$ is a very small value, experimentally found to be $\varepsilon_q < 3\cdot 10^{-4}$ for $q\approx 1/3$. Finally, expanding $q$ around $L=\infty$ gives
	
	\begin{align*}
	E[R]&\leqslant \frac{\ln\left(|S_i^0|\right)+\gamma}{\ln \left(3\right)}+\frac{1}{2}+\varepsilon_q+O\left(\frac{1}{|S_i^0|}\right)+O\left(\frac{1}{L}\right)+\epsilon\\
	&< \log_3\left(|S_i^0|\right)+1.03+O\left(\frac{1}{|S_i^0|}\right)+O\left(\frac{1}{L}\right)+\epsilon.
	\end{align*}
	
	The proposition follows by subtracting $1$ from $R$ to get the expected value of the number of rounds $r$ for which $|S_i^r|$ is nonempty.
\end{proof}

\end{document}